\newtheorem{theorem}{Theorem}
\newtheorem{lemma}[theorem]{Lemma}
\newcounter{pln}
\newenvironment{smallenum}
{\begin{list} {(\roman{pln})} { \usecounter{pln}%
     \settowidth{\labelwidth}{(iii)}%
     \setlength{\leftmargin}{\labelwidth}%
     \addtolength{\leftmargin}{1.5 \labelsep}%
     \setlength{\itemsep}{0cm}%
     \setlength{\parsep}{0cm}%
     \setlength{\topsep}{0mm} }
}
{\end{list}}
\newcommand{\reals}{\mathbb{R}}
\newcommand{\naturals}{\mathbb{N}}
\newcommand{\cC}{\mathcal{C}}
\newcommand{\eE}{\mathcal{E}}
\newcommand{\rR}{\mathcal{R}}
\newcommand{\fF}{\mathcal{F}}
\newcommand{\gG}{\mathcal{G}}
\newcommand{\qQ}{\mathcal{Q}}
\newcommand{\yY}{\mathcal{Y}}
\newcommand{\kK}{\mathcal{K}}
\newcommand{\zZ}{\mathcal{Z}}
\newcommand{\xX}{\mathcal{X}}
\newcommand{\st}{\mathrm{st}}  
\newcommand{\str}{\mathrm{star}}  
\newcommand{\y}{y_{ \{ 1 \} } }  
\newcommand{\Wavg}{\overline{W}}
\newcommand{\UU}{\mathscr{U}}
\newcommand{\VV}{\mathscr{V}}
\newcommand{\SSS}{\mathscr{S}}  
\newcommand{\Sl}[1]{\mathsf{S}(#1)}  
\newcommand{\CC}{\mathscr{C}}  
\newcommand{\restr}[1]{\mathsf{R}(#1)}  
\newcommand{\sset}{\mathfrak{S}}  
\newcommand{\sub}{\leftarrow}
\newcommand{\conv}{{\rm conv}}
\newcommand{\sm}{\setminus}
\newcommand{\nin}{\not\in}
\newcommand{\mb}[1]{{\boldsymbol{#1}}}
\newcommand{\me}{\mb{e}}
\newcommand{\mbv}{\mb{v}}
\newcommand{\mc}{\mb{\gamma}}
\newcommand{\mcc}{\mb{\beta}}
\newcommand{\z}{\zeta}
\newcommand{\sep}{,\xspace}
\newcommand{\ie}{i.e.,\xspace}
\newcommand{\eg}{e.g.,\xspace}
\newcommand{\lexp}[1]{\lfloor\frac{#1}{2}\rfloor}
\newcommand{\ltexp}[1]{\lfloor\tfrac{#1}{2}\rfloor}
\newcommand{\bx}[1]{\partial{#1}}
\newcommand{\range}[2][r]{{#2}_1,{#2}_2,\ldots,{#2}_{#1}}
\newcommand{\ub}[1]{\Phi_{#1}(\range{n})}
\newcommand{\MSsymbol}{+}
\newcommand{\MS}[1][P]{{#1}_1\MSsymbol{#1}_2\MSsymbol\cdots\MSsymbol{#1}_r}
\newcommand{\rr}{{\bar{r}}}
\newcommand{\compl}{\bar{S}}
\newcommand{\cb}[1]{\mb{\bar{#1}}}
\newcommand{\GVD}{\text{\rm{}GVD}}
\newcounter{ctrcopy}
\begin{document}

\title{The maximum number of faces of the Minkowski sum\\ of three
  convex polytopes}

\author{Menelaos I. Karavelas$^{1,2}$
\hspace*{1cm}
Christos Konaxis$^{3}$
\hspace*{1cm}
Eleni Tzanaki$^{1,2}$\\[5pt]
\it{}$^1$Department of Applied Mathematics,
\it{}University of Crete\\
\it{}GR-714 09 Heraklion, Greece\\
{\small\texttt{\{mkaravel,etzanaki\}@tem.uoc.gr}}\\[5pt]
\it{}$^2$Institute of Applied and Computational Mathematics,\\
\it{}Foundation for Research and Technology - Hellas,\\
\it{}P.O. Box 1385, GR-711 10 Heraklion, Greece\\[5pt]
\it{}$^3$Archimedes Center for Modeling, Analysis \& Computation,\\
\it{}University of Crete,
\it{}GR-710 03 Heraklion, Greece\\
{\small\texttt{ckonaxis@acmac.uoc.gr}}}

\phantomsection
\addcontentsline{toc}{section}{Title}

\maketitle

\phantomsection
\addcontentsline{toc}{section}{Abstract}

\begin{abstract}
  We derive tight expressions for the maximum
  number of $k$-faces, $0\le{}k\le{}d-1$, of the
  Minkowski sum, $P_1+P_2+P_3$, of three $d$-dimensional convex
  polytopes $P_1$, $P_2$ and $P_3$, as a function of the number of
  vertices of the polytopes, for any $d\ge{}2$.
  Expressing the Minkowski sum of the three polytopes
  as a section of their Cayley polytope $\cC$,
  the problem of counting the number of $k$-faces of $P_1+P_2+P_3$,
  reduces to counting the number of $(k+2)$-faces of the subset  
  of $\cC$ comprising of the faces that contain at least one vertex from
  each $P_i$.
  In two dimensions our expressions reduce to known results, 
  while in three dimensions, the tightness of our bounds follows
  by exploiting known tight bounds for the number of faces of $r$
  $d$-polytopes, where $r\ge d$.
  For $d\ge{}4$, the maximum values are attained when
  $P_1$, $P_2$ and $P_3$ are $d$-polytopes, 
  whose vertex sets are chosen appropriately from three
  distinct $d$-dimensional moment-like curves.

  \bigskip\noindent
  \textit{Key\;words:}\/
  high-dimensional geometry\sep discrete geometry\sep
  combinatorial geometry\sep combinatorial complexity\sep
  Cayley trick\sep tight bounds\sep
  Minkowski sum\sep convex polytopes
  \medskip\par\noindent
  \textit{2010 MSC:}\/ 52B05, 52B11, 52C45, 68U05
\end{abstract}

\clearpage

\section{Introduction}
\label{sec:intro}

We study the Minkowski sum of three $d$-dimensional convex polytopes, or, simply $d$-polytopes,
and derive tight upper bounds for the number of its $k$-faces, for $0\leq k \leq d-1$,
with respect to the number of vertices of the summands.
Given two  convex polytopes $P_1$ and $P_2$, their \emph{Minkowski sum} $P_1 + P_2$ is the set
$\{ p_1+p_2~|~p_1\in P_1, p_2 \in P_2\}$. This definition extends to any number of summands
and also, to non-convex sets of points.
The Minkowski sum of convex polytopes is itself a convex polytope, namely, the convex hull of the 
Minkowski sum of the vertices of its summands.

Minkowski sums are widespread operations in
Computational Geometry and find applications in a wide range of areas such as
robot motion planning \cite{Lat91},
pattern recognition \cite{trhh-smcpma-00}, collision detection \cite{lm-cpq-03},
Computer-Aided Design, and, very recently, Game Theory.
They reflect geometrically some algebraic operations, and capture
important properties of algebraic objects, such as polynomial
systems. This makes them especially useful in Computational Algebra,
see \eg \cite{gs-mapcc-93,s-gbcp-96,CLO2}.

The geometry of the Minkowski sum can be derived from that 
of its summands: its \emph{normal fan}  is the \emph{common refinement}
of the normal fans of the summands (see \cite{z-lp-95} for definitions and details). 
However, its combinatorial structure is not fully understood,
partially due to the fact that most algorithms for computing Minkowski sums
have focused on low dimensions (see, \eg \cite{f-mscaa-08}
for algorithms computing Minkowski sums in three dimensions).
The recent development of algorithms
that target high dimensions \cite{f-zcmacp-04}, has led to a more
extensive study of their properties (see, \eg \cite{w-mspcc-07}).

A natural and fundamental question regarding the combinatorial properties 
of Minkowski sums, concerns their complexity measured as a function
of the vertices, or the facets of the summands. 
A complete answer, in terms of the number of vertices or facets of the summands,
does not yet exist although for certain classes of
polytopes the question has been resolved (see Section~\ref{ssec:prev}). 
Most of the known results offer tight bounds with respect to
the number of vertices of the summands;
deriving tight upper bounds with respect to the number of facets seems much harder. 
Knowing the complexity of Minkowski sums is crucial
in developing algorithms for their computation, since it allows 
to quantify their efficiency. 

\paragraph{Preliminaries.}\label{ssec:prelim} 
Let $P$ be a $d$-polytope; its dimension is the dimension of its affine span.
The faces of $P$ are $\emptyset,P$, and the intersections of $P$ with its supporting hyperplanes.
The former faces are called improper while the latter faces are called proper. Each
face of $P$ is itself a polytope, and a face of dimension $k$ is called a $k$-face.
Faces of $P$ of dimension $0,1,d-2$ and $d-1$ are called vertices, edges,
ridges, and facets, respectively.

A $d$-dimensional \emph{polytopal complex, or simply $d$-complex},
$\CC$ is a finite collection of polytopes
in $\reals^d$ such that
(i) $\emptyset\in\CC$,
(ii) if $P\in\CC$ then all the faces of $P$ are also in $\CC$ and
(iii) the intersection $P\cap{}Q$ for two polytopes $P$ and $Q$ in
$\CC$ is a face of both. 
The dimension $\dim(\CC)$ of $\CC$ is the largest dimension of a
polytope in $\CC$.
A polytopal complex is called \emph{pure} if all its maximal (with
respect to inclusion) faces have the same dimension. 
In this case the maximal faces are called the \emph{facets} of
$\CC$. 
A polytopal complex is simplicial if all its faces are simplices. 
Finally, a polytopal complex $\CC'$ is called a \emph{subcomplex}
of a polytopal complex $\CC$ if all faces of $\CC'$ are also faces
of $\CC$.
For a polytopal complex $\CC$, the \emph{star} of $v$ in $\CC$,
denoted $\str(v,\CC)$, is the subcomplex of $\CC$ consisting of
all faces that contain $v$, and their faces.
The \emph{link} of $v$, denoted by $\CC/v$, is the
subcomplex of $\str(v,\CC)$ consisting of all the
faces of $\str(v,\CC)$ that do not contain $v$.

One important class of polytopal complexes arises from polytopes.
More precisely, a $d$-polytope $P$, together with all its
faces and the empty set, form a $d$-complex, denoted
by $\CC(P)$. The only maximal face of $\CC(P)$, which is clearly
the only facet of $\CC(P)$, is the polytope $P$ itself.
Moreover, all proper faces of $P$ form a pure $(d-1)$-complex,
called the \emph{boundary complex} $\CC(\bx{}P)$, or simply
$\bx{}P$, of $P$. The facets of $\bx{}P$ are just the facets of $P$.

For a $d$-polytope $P$, or its boundary complex $\bx{}P$,
we can define its $f$-vector  as
$\mb{f}(P) = (f_{-1},f_0, f_1,\dots, f_{d-1})$, where
$f_k = f_k(P)$ denotes the number of $k$-faces of $P$ and
$f_{-1}(P):=1$ corresponds to the empty face of $P$.
From the $f$-vector of $P$ we define its $h$-vector as the vector
$\mb{h}(P)=(h_0,h_1,\ldots,h_d)$, where
$h_k=h_k(P)=\sum_{i=0}^k(-1)^{k-i}\tbinom{d-i}{d-k}f_{i-1}(P)$,
$0\leq k \leq d$.

Let $\CC$ be a pure simplicial polytopal $d$-complex. A
shelling $\Sl{\CC}$ of $\CC$ is a linear ordering
$F_1,F_2,\ldots,F_s$
of the facets of $\CC$ such that for all $1<j\le{} s$ the
intersection, $F_j\cap\left(\bigcup_{i=1}^{j-1}F_i\right)$, of the
facet $F_j$ with the previous facets is non-empty and pure
$(d-1)$-dimensional.
In other words, for every $i<j$ there exists some $\ell <j$ such that
the intersection $F_i\cap{}F_j$ is contained in $F_\ell\cap{}F_j$, 
and such that $F_\ell\cap{}F_j $ is a facet of $F_j$.

Every pure polytopal complex that has a shelling is called \emph{shellable}.
In particular, the boundary complex of a polytope is always shellable
(cf. \cite{bm-sdcs-71}).
Consider a pure shellable simplicial polytopal complex $\CC$ and let
$\Sl{\CC}=\{F_1,\ldots,F_s\}$ be a shelling order of its facets.
The \emph{restriction} $\restr{F_j}$ of a facet $F_j$ is the set
of all vertices $v\in{}F_j$ such that $F_j\sm\{v\}$ is contained in
one of the earlier facets.\footnote{For simplicial faces, we
identify the face with its defining vertex set.}
The main observation here is that when we construct $\CC$ according to
the shelling $\Sl{\CC}$, the new faces at the $j$-th step of the
shelling are exactly the vertex sets $G$ with
$\restr{F_j}\subseteq{}G\subseteq{}F_j$ (cf. \cite[Section 8.3]{z-lp-95}).
Moreover, notice that $\restr{F_1}=\emptyset$ and
$\restr{F_i}\ne{}\restr{F_j}$ for all $i\ne{}j$.

\paragraph{Previous work.}\label{ssec:prev}
The complexity of Minkowski sums depends on the geometry of their summands.
Worst-case tight upper bounds offer the best possible alternative
when the geometric characteristics of a specific instance
of the problem are not accounted for. 
Gritzman and Sturmfels \cite{gs-mapcc-93} have been the first to
derive tight upper bounds for the number of $k$-faces of
$P_1+\cdots+P_r$, namely:
\begin{equation*}
f_k(P_1+\cdots+P_r)\leq 2\binom{m}{k}\sum_{j=0}^{d-k-1}\binom{m-k-1}{j},
\qquad 0\le{}k\le{}d-1,
\qquad d,r\ge{}2,
\end{equation*}
where $m$ denotes the number of non-parallel edges of $P_1,\ldots,P_r$.
Equality occurs when $P_i$ are generic zonotopes, \ie when each $P_i$
is a Minkowski sum of edges, and the generating edges of all polytopes
are in general position.

Our knowledge of tight upper bounds for $f_k(P_1+\cdots+P_r)$ as a
function of the number of vertices or facets of the summands is much
more limited, while the problem of finding such tight bounds is far from
being fully understood and resolved.
Given two polygons $P_1,P_2$ in two dimensions, with $n_1,n_2$
vertices (or edges) respectively, their Minkowski sum can have at most
$n_1+n_2$ vertices; clearly, this bound holds also for the number of
edges of $P_1+P_2$, and generalizes in the obvious way for any number
of summands (cf.~\cite{bkos-cgaa-00}).

In three or more dimensions,
Fukuda and Weibel \cite{fw-fmacp-07} have shown what they call the
\emph{trivial upper bound}: given $r$ $d$-polytopes
$\range{P}$ in $\reals^d$, where $d\ge{}3$ and $r\ge{}2$, we have
\begin{equation}\label{equ:trivial-ub}
  f_k(\MS)\le{}\ub{k+r},
\end{equation}
where $n_i$ is the number of vertices of $P_i$, $1\le{}i\le{}r$, and
\begin{equation*}
  \ub{\ell}=\sum_{\substack{1\le{}s_i\le{}n_i\\s_1+\ldots+s_r=\ell}}
  \prod_{i=1}^r\binom{n_i}{s_i},\qquad \ell\ge{}r,\qquad s_i\in\naturals.
\end{equation*}
In the same paper, Fukuda and Weibel have shown that the trivial upper
bound is tight for: (i) $d\ge{}4$, $2\le{}r\le\lexp{d}$ and for all
$0\le{}k\le\lexp{d}-r$, and (ii) for the number of vertices,
$f_0(\MS)$, of $\MS$, when $d\ge{}3$ and $2\le{}r\le{}d-1$.
For $r\ge{}d$, Sanyal \cite{s-tovnm-09} has shown that the trivial
bound for $f_0(\MS)$ cannot be attained, since in this case:
\begin{equation*}
  f_0(\MS)\le\left(1-\frac{1}{(d+1)^d}\right)\prod_{i=1}^{r}n_{i}
  <\prod_{i=1}^{r}n_{i}.
\end{equation*}
Karavelas and Tzanaki \cite{kt-tlbnf-11} recently extended 
the range of of $d$, $r$ and $k$ for which the trivial upper bound
\eqref{equ:trivial-ub} is attained. More
precisely, they showed that for any $d\ge{}3$, $2\le{}r\le{}d-1$ and for
all $0\le{}k\le{}\lexp{d+r-1}-r$, there exist $r$ neighborly
$d$-polytopes $\range{P}$ in $\reals^d$, for which the number of
$k$-faces of their Minkowski sum attains the trivial upper bound. 
Recall that a $d$-polytope $P$ is neighborly if any subset of $\lexp{d}$ 
or less vertices is the vertex set of a face of $P$.
Tight bounds for $f_0(\MS)$, where $r\ge{}d$, have very recently been
shown by Weibel \cite{w-mfmsl-12}, namely:
\begin{equation*}
  f_0(\MS)\le\alpha+\sum_{j=1}^{d-1}(-1)^{d-1-j}\binom{r-1-j}{d-1-j}
  \sum_{S\in\sset_j^r}\left(\prod_{i\in{}S}f_0(P_i)-\alpha\right),
\end{equation*}
where $\sset_j^r$ is the family of subsets of $\{1,2,\ldots,r\}$ of
cardinality $j$, and $\alpha=2(d-2\lexp{d})$.

Tight bounds for \emph{all} face numbers,
\ie for all $0\le{}k\le{}d-1$, expressed as a function of the number of
vertices or facets of the summands, are known only for \emph{two}
$d$-polytopes when $d\ge{}3$.
Fukuda and Weibel \cite{fw-fmacp-07} have shown that, given two
3-polytopes $P_1$ and $P_2$ in $\reals^3$, the number of
$k$-faces of $P_1\MSsymbol{}P_2$, $0\le{}k\le{}2$, is bounded from
above as follows:
\begin{equation}\label{equ:ub3vertices}
  \begin{aligned}
    f_0(P_1\MSsymbol{}P_2)&\le{}n_1 n_2,\\
    f_1(P_1\MSsymbol{}P_2)&\le{}2n_1 n_2+n_1+n_2-8,\\
    f_2(P_1\MSsymbol{}P_2)&\le{}n_1 n_2+n_1+n_2-6,
  \end{aligned}
\end{equation}
where $n_i$ is the number of vertices of $P_i$, $i=1,2$. These bounds
are tight.
Weibel \cite{w-mspcc-07} has derived analogous tight expressions in
terms of the number of facets $m_i$ of $P_i$, $i=1,2$:
\begin{equation}\label{equ:ub3facets}
  \begin{aligned}
    f_0(P_1\MSsymbol{}P_2)&\le{}4m_1 m_2-8m_1-8m_2+16,\\
    f_1(P_1\MSsymbol{}P_2)&\le{}8m_1 m_2-17m_1-17m_2+40,\\
    f_2(P_1\MSsymbol{}P_2)&\le{}4m_1 m_2-9m_1-9m_2+26.
  \end{aligned}
\end{equation}
Weibel's expression for $f_2(P_1\MSsymbol{}P_2)$
(cf. rel. \eqref{equ:ub3facets}) has been generalized to the number of
facets of the Minkowski sum of any number of 3-polytopes
by Fogel, Halperin and Weibel \cite{fhw-emcms-09}; they have shown
that, for $r\ge{}2$, the following tight bound holds:
\begin{equation*}
  f_2(\MS)\le\sum_{1\le{}i<j\le{}r}(2m_i-5)(2m_j-5)
  +\sum_{i=1}^r{}m_i+\binom{r}{2},
\end{equation*}
where $m_i=f_2(P_i)$, $1\le{}i\le{}r$. 
Finally, Karavelas and Tzanaki
\cite{kt-mnfms-12} have shown that for any two $d$-polytopes $P_1$
and $P_2$ in $\reals^d$, where $d\ge{}4$, and for all $1\le{}k\le{}d$,
we have:
\begin{equation}\label{equ:2p-any-d}
  f_{k-1}(P_1\MSsymbol{}P_2)\le{}f_k(C_{d+1}(n_1+n_2))
  -\sum_{i=0}^{\lexp{d+1}}\tbinom{d+1-i}{k+1-i}
  \left(\tbinom{n_1-d-2+i}{i}+\tbinom{n_2-d-2+i}{i}\right),
\end{equation}
where $n_i=f_0(P_i)$, $i=1,2$,
and $C_d(n)$ stands for the cyclic
$d$-polytope with $n$ vertices. The bounds in
\eqref{equ:2p-any-d} have been shown to be tight,
and match the corresponding, previously known,
bounds for 2- and 3-polytopes (cf. rel. \eqref{equ:ub3vertices}).

\paragraph{Overview.}\label{ssec:overview}
In this work we continue the line of research in \cite{kt-mnfms-12},
extending the methods to the case of three $d$-polytopes in $\reals^d$.
This turns out to be far from trivial.
Allowing just one more summand 
significantly raises the problem's intricacy.
On the other hand, the case of three $d$-polytopes provides
a valuable insight towards our ultimate goal,
the general case of $r$ $d$-polytopes, for any $d,r\ge{}2$.
Using the tools and methodology applied in this paper, 
some of the results obtained here can 
be generalized to the case $d,r\geq 2$ (see Section~\ref{sec:concl}),
while others still remain elusive.

We state our main result, also presented in Theorem~\ref{thm:final-result}.
Let $P_1$, $P_2$ and $P_3$ be three $d$-polytopes in $\reals^d$,
$d\ge{}2$, with $n_i\ge{}d+1$ vertices, $1\le{}i\le{}3$. Then, for all
$1\le{}k\le{}d$, we have:
\begin{equation*}
  \begin{aligned}
    f_{k-1}(P_1+P_2+P_3)&\le
    f_{k+1}(C_{d+2}(n_{[3]}))-\sum_{i=0}^{\lexp{d+2}}\binom{d+2-i}{k+2-i}
    \sum_{\emptyset\subset{}S\subset[3]}(-1)^{|S|}\binom{n_S-d-3+i}{i}\\
    &\quad-\delta\binom{\lexp{d}+1}{k-\lexp{d}}
    \sum_{i=1}^3\binom{n_i-\lexp{d}-2}{\lexp{d}+1},
  \end{aligned}
\end{equation*}
where $[3]=\{1,2,3\}$, $\delta=d-2\lexp{d}$, and $n_S=\sum_{i\in{}S}n_i$,
$\emptyset \subset S\subseteq [3]$.
Moreover, for any $d\ge{}2$, there exist three $d$-polytopes in
$\reals^d$ for which the bounds above are attained.

To establish the upper bounds we first lift the three $d$-polytopes
in $\reals^{d+2}$ using an affine basis of $\reals^2$,
and form the convex hull $\cC$ of the embedded polytopes in $\reals^{d+2}$.
$\cC$ is known as the \emph{Cayley polytope} of the $P_i$'s
(see Section \ref{sec:cayley}).
Exploiting the bijection between  the set $\fF_{[3]}$, consisting of
the $k$-faces of $\cC$ that contain vertices from each $P_i$, and the
$(k-2)$-faces of $P_1+P_2+P_3$, we reduce the derivation of upper
bounds for $f_{k-2}(P_1+P_2+P_3)$ to deriving upper bounds for
$f_k(\fF_{[3]})$, $2\le{}k\le{}d+1$.

The rest of our proof follows the main steps of McMullen's proof of the 
Upper bound Theorem for polytopes \cite{m-mnfcp-70}. 
In Section \ref{sec:ds} we add auxiliary vertices to appropriate 
faces of the Cayley polytope $\cC$, resulting in a simplicial
polytope $\qQ$ whose face set contains $\fF_{[3]}$.
We then consider the $f$-vector $\mb{f}(\bx\qQ)$
and the $h$-vector $\mb{h}(\bx\qQ)$ of $\bx\qQ$
and derive expressions for their entries via the corresponding vectors
for $\fF_{[3]}$. Using these expressions, we
continue by deriving Dehn-Sommerville-like equations for $\fF_{[3]}$.
As an intermediate step we define the subcomplex
$\kK_{[3]}$ of $\cC$ as the closure under subface inclusion of $\fF_{[3]}$, 
and derive expressions for its $f$- and $h$-vectors 
(cf. relations \eqref{eq:fk_KR} and \eqref{equ:hkKR} with $R=[3]$).
This allows us to write the Dehn-Sommerville-like equations for $\fF_{[3]}$
in the very concise form:
\begin{equation*}
  h_{d+2-k}(\fF_{[3]})=h_k(\kK_{[3]}),\qquad 0\le{}k\le{}d+2.
\end{equation*}

In Section \ref{sec:recur} we establish a recurrence relation for
the elements of $\mb{h}(\fF_{[3]})$ (see Lemma~\ref{lem:hkWrecur}).
Our starting point is a well known 
relation by McMullen (cf. rel. \eqref{equ:links}),
and the expressions for the $h$-vector of $\bx{}\qQ$ already 
established in the previous section.
The recurrence relation for the elements of $\mb{h}(\fF_{[3]})$ is 
then used in Section \ref{sec:ub} to 
prove upper bounds on the elements of
$\mb{h}(\fF_{[3]})$ and $\mb{h}(\kK_{[3]})$.
These upper bounds combined with the Dehn-Sommerville-like
equations for $\fF_{[3]}$, yield refined upper bounds  for the values
$h_k(\fF_{[3]})$ when $k>\lexp{d+2}$.
We end by establishing our upper bounds on
the number of $k$-faces, $0\le{}k\le{}d-1$, of the Minkowski sum of
three $d$-polytopes by computing $\mb{f}(\fF_{[3]})$ from
$\mb{h}(\fF_{[3]})$. At the same time we establish
conditions on a subset of the elements of the vectors
$\mb{f}(\fF_R)$, $\emptyset\subset{}R\subseteq{}[3]$, that are
sufficient and necessary in order for the upper bounds in the number
of $k$-faces of $P_1+P_2+P_3$ to be tight for all $k$ ($\fF_R$
stands for the set of faces of $\cC$ that have at least one vertex
from each $P_i$ for all $i\in{}R$).

In Section \ref{sec:lb} we describe the constructions that
establish the tightness of our upper bounds.
For $d=2$ and $d=3$ we rely on previous results. For $d\geq 4$
we define three convex $d$-polytopes, whose vertices lie on three distinct
moment-like $d$-curves, and show that the sets $\fF_R$,
$\emptyset{}\subset{}R\subseteq[3]$, associated with them satisfy the
sufficient and necessary conditions mentioned above.
We conclude with Section \ref{sec:concl}, where we discuss the case
of four or more summands and directions for future work.

\section{The Cayley trick}
\label{sec:cayley}

Recall that $[3]$ stands for the set $\{1,2,3\}$, and denote by
$\sset_j^3:=\{ R\subseteq [3] \mid |R|=j\}$,
the set of all subsets of $[3]$ of cardinality $j$,  for $1\le j \le 3$.
To keep the notation lean, in the rest of this paper we shall denote
$\sset_j^3$ as $\sset_j$.
Consider three $d$-polytopes $P_1$, $P_2$ and $P_3$ in $\reals^d$, and choose
the basis $\me_{2,1}=(0,0)$, $\me_{2,2}=(1,0)$, $\me_{2,3}=(0,1)$,
as the preferred affine basis of $\reals^2$.
The \emph{Cayley embedding} of the $P_i$'s is defined via the maps
$\mu_i(\mb{x})=(\me_{2,i},\mb{x})$, and we denote by $\cC$ the
$(d+2)$-polytope we get by taking the convex hull of the sets
$\VV_i=\{\mu_i(\mbv)\mid{}v\in{}V_i\}$, where $V_i$ is the vertex set
of $P_i$. This is known as the \emph{Cayley polytope} of the $P_i$.
Similarly,  by taking appropriate affine bases we define the Cayley 
polytope $\cC_R$ of all polytopes $P_i$, $i\in R$,
where $R\in \sset_j$, $j=1,2$. These are the Cayley polytopes of all
pairs of $P_i$'s and, trivially,  the $P_i$'s themselves.
Clearly, $\cC_R \equiv P_i$, for $R\in \sset_1$.
Moreover, $\cC\equiv \cC_{[3]}$.

For any $\emptyset \subset R \subseteq [3]$, let $\VV_R$ denote the 
union of the sets $\VV_i,\, i\in R$.
In the sequel we shall identify $\cC_R \subset \reals^{d+|R|-1}$, for 
all $R\in \sset_j,\, j=1,2$, with the affinely
isomorphic and combinatorially equivalent polytope 
$\conv(\VV_R)\subset \cC \subset \reals^{d+2}$. 
This will allow us to study properties of these subsets 
of $\cC$ by examining the 
corresponding Cayley polytopes which lie in lower dimensional spaces.

We shall denote by $\fF_R$, $\emptyset\subset{}R\subseteq[3]$, the set
of proper faces of $\cC$, with the property that $F\in\fF_R$ if
$F\cap{}\VV_i\ne\emptyset$, for all $i\in{}R$. In other words,
$\fF_R$ consists of all the faces of $\cC$ that have at least one vertex
from each $\VV_i$, for all $i\in{}R$. 
Clearly, if $|R|\geq 2$, then $f_0(\fF_R)=0$.
Moreover, if $R\in \sset_1$ then $\fF_R\equiv{}\bx{}P_i$.
The dimension of $\fF_R$ is the maximum
dimension of the faces in $\fF_R$, \ie
$\dim(\fF_R)=\max_{F\in\fF_R}\dim(F)=d+|R|-2$.

We call $\Wavg$ the $d$-flat of $\reals^{d+2}$:
\begin{equation*}
  \Wavg=
  \{\tfrac{1}{3}\me_{2,1}+\tfrac{1}{3}\me_{2,2}+\tfrac{1}{3}\me_{2,3}\}\times\reals^d,
\end{equation*}
and consider the \emph{weighted} Minkowski sum $\tfrac{1}{3}P_1+\tfrac{1}{3}P_2+\tfrac{1}{3}P_3$.
Note that this is nothing more than $P_1+P_2+P_3$, scaled down by $\tfrac{1}{3}$,
hence these two sums are combinatorially equivalent.
The \emph{Cayley trick} \cite{hrs-ctlsb-00} says that the intersection of $\Wavg$ with
$\cC$ is combinatorially equivalent (isomorphic) to the weighted Minkowski sum
$\tfrac{1}{3}P_1+\tfrac{1}{3}P_2+\tfrac{1}{3}P_3$, hence also to the unweighted
Minkowski sum $P_1+P_2+P_3$ (see also Fig. \ref{fig:cayley3}).
Moreover, every face of $P_1+P_2+P_3$ is the intersection of a face
of $\fF_{[3]}$ with $\Wavg$. This implies that:
\begin{equation*}
 f_{k-1}(P_1+P_2+P_3)=f_{k+1}(\fF_{[3]}), \qquad 1\leq k \leq d.
\end{equation*}

\begin{figure}[ht]
  \begin{center}
    \includegraphics[width=0.6\textwidth,page=1]{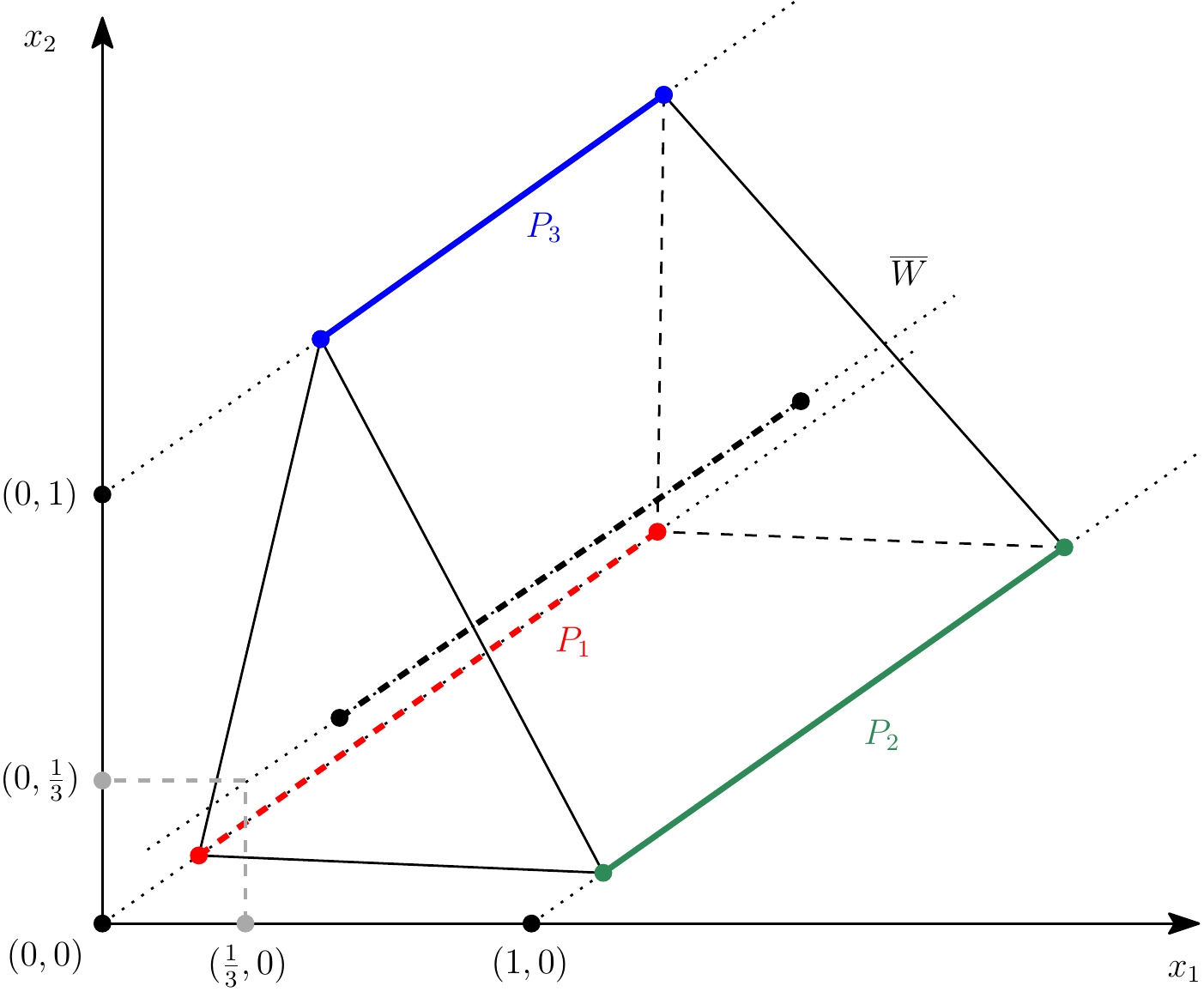}
    \caption{Schematic of the Cayley trick for three polytopes. The
      three polytopes $P_1$, $P_2$ and $P_3$ are shown in red, green
      and blue, respectively. The polytope
      $\frac{1}{3}P_1+\frac{1}{3}P_2+\frac{1}{3}P_3$ is shown in
      black.}\label{fig:cayley3}
  \end{center}
\end{figure}

To compute the upper bounds for the number of $k$-faces of $P_1+P_2+P_3$, 
in the rest of the paper we assume that $\cC$ is 
\emph{``as simplicial as possible''},
\ie all faces of $\cC$ are simplicial except for the trivial faces of
$\cC_R$, for all $\emptyset\subset{}R\subseteq{}[3]$.
Otherwise, we can employ the so called \emph{bottom-vertex triangulation} \cite{Mat02},
where we triangulate every face of $\cC$ except the trivial faces of $\cC_R$ for all $\emptyset\subset{}R\subseteq{}[3]$.
The resulting complex is polytopal and all of its faces are
simplicial, except from the seven trivial faces above.
Moreover, it has the same number of vertices as $\cC$, while the number
of its $k$-faces is never less than the number of $k$-faces of $\cC$. 

Under the \emph{``as simplicial as possible''} assumption above, the
faces in $\fF_R$ are simplicial. We shall denote by $\kK_R$ the
\emph{closure}, under subface inclusion, of $\fF_R$, \ie $\kK_R$
contains all the faces in $\fF_R$ and all the faces that are subfaces
of faces in $\fF_R$. It is easy to see that $\kK_R$ does not contain
any of the trivial faces of $\cC_S$, $S\subseteq{}R$, and, thus,
$\kK_R$ is a pure simplicial $(d+|R|-2$)-complex, whose facets are
precisely the facets in $\fF_{R}$.
It is also clear that  $\fF_R \equiv \kK_R \equiv \bx{P_R}$, for $R\in \sset_1$.
Moreover, $\kK_{[3]}$ is the boundary complex
$\bx{}\cC$ of the Cayley polytope $\cC$, except for its three facets (\ie
$(d+1)$-faces) $\cC_{R}$, $R\in\sset_2$, and its three ridges (\ie $d$-faces)
$P_i$, $1\le{}i\le{}3$. 

Consider a $k$-face $F$ of $\kK_R,\,\emptyset \subset R \subseteq [3]$.
By the definition of $\kK_R$, $F$ is either a $k$-face of $\fF_R$, 
or a $k$-face of $\fF_S$ for some nonempty 
subset $S$ of $R$. Hence
\begin{equation}\label{eq:fk_KR}
  f_k(\kK_R)=\sum_{\emptyset\subset{}S\subseteq{}R}f_k(\fF_S),
  \qquad -1\le{}k\le{}d+|R|-2,
\end{equation}
where, in order for the above equation to hold for $k=-1$, we set
$f_{-1}(\fF_R)=(-1)^{|R|-1}$. In what follows we use the convention that
$f_k(\fF_R)=0$, for any $k<-1$ or $k>d+|R|-2$.
\section{\texorpdfstring{$f$}{f}-vectors,
  \texorpdfstring{$h$}{h}-vectors and Dehn-Sommerville-like equations}
\label{sec:ds}

We are going to define auxiliary vertices in $\reals^{d+2}$ not contained in $\VV_i, i=1,2,3$.
For every $\emptyset \subset R \subset [3]$ we add a vertex $y_R$ in the relative interior of $\cC_R$
and, following~\cite{EwSh74}, we consider the complex arising by taking successive 
stellar subdivisions of $\bx{}\cC$ as follows:
\begin{enumerate}
 \item we form the complex arising from $\bx{\cC}$ by taking the
   stellar subdivisions $\st(y_{\{i\}},\cC_{\{i\}})$ for all
   $1\le{}i\le{}3$, then
 \item we form the complex arising from the one constructed in the
   previous step by taking the stellar subdivisions
   $\st(y_{R},\cC'_{R})$ for every  $R\in \sset_2$, 
   where $\cC'_{R}$ is the complex obtained by taking, 
  for every $S\subset R$, the stellar subdivision of $y_S$ 
  over the boundary complex of $\cC_S$.
\end{enumerate}
This complex is polytopal and  isomorphic to the boundary complex of a
$(d+2)$-polytope which we shall denote as $\qQ$  (see also Fig. \ref{fig:Q}).
The boundary complex $\bx{}\qQ$ is a simplicial $(d+1)$-sphere.
The simpliciality of $\bx\qQ$ will allow us to utilize its
Denh-Sommerville equations in order to prove Dehn-Sommerville-like
equations for $\fF_{[3]}$ in the upcoming
Lemma~\ref{lem:hkK-hkQ-dehn-sommervile}. We denote by
$\VV:=\VV_1\cup\VV_2\cup\VV_3\cup\{y_R\mid\emptyset\subset{}R\subset[3]\}$
the vertex set of $\qQ$.

\begin{figure}[t]
  \begin{center}
    \includegraphics[width=0.6\textwidth,page=2]{fig/Cayley}
    \caption{The $(d+2)$-polytope $\qQ$.}
    \label{fig:Q}
  \end{center}
\end{figure}

Let us count the $k$-faces of $\bx{}\qQ$. 
Suppose that $F$ is a $k$-face of $\bx{}\qQ$.
We distinguish between the following cases depending on 
the number of auxiliary vertices, $y_R$, that $F$ contains:
\begin{enumerate}
\item $F$ does not contain any additional auxiliary vertices. Then, it
  can be a $k$-face of any $\fF_R,R\in\sset_1$,
  or it can be a $k$-face  of any of the $\fF_R,R\in\sset_2$, 
  or it can be a $k$-face  of $\fF_{[3]}$. This gives a total of
  $f_k(\fF_{[3]})+\sum_{R\in\sset_1}f_k(\fF_R)+\sum_{R\in\sset_2}f_k(\fF_R)$
  $k$-faces of $\bx\qQ$.

\item $F$ contains one auxiliary vertex. Then, it can consist of  a $(k-1)$-face of:
  \begin{enumerate}
    \item
      $\fF_{R}, R\in \sset_1$ and vertex $y_R$, 
      (\eg a $(k-1)$-face of $\bx{}P_1$ and vertex $\y$),  or
    \item
      $\fF_{R}, R\in \sset_2$ and vertex $y_R$, 
     (\eg a $(k-1)$-face of $\fF_{\{1,2\}}$ and vertex $y_{\{1,2\}}$),  or
    \item 
     $\fF_{S}, S\in \sset_1$ and vertex $y_R$, where $S\subset R\in\sset_2$,
     (\eg a $(k-1)$-face of $\bx{}P_1$ and vertex $y_{\{1,2\}}$
     or vertex $y_{\{1,3\}}$),
  \end{enumerate} 
  for a total of faces equal to:
      \begin{equation*}
            \overbrace{\sum_{R\in\sset_1}\!f_{k-1}(\fF_R)}^{\text{case (a)}}+
            \overbrace{\sum_{R\in\sset_2}\!f_{k-1}(\fF_R)}^{\text{case (b)}}+
            \overbrace{\sum_{R\in\sset_2}
            \sum_{\emptyset \subset S \subset R}\!\!f_{k-1}(\fF_S)}^{\text{case (c)}}= 
             \sum_{R\in\sset_2}\!f_{k-1}(\fF_R)+3\!\!\sum_{R\in\sset_1}\!f_{k-1}(\fF_R).
      \end{equation*}

\item $F$ contains two auxiliary vertices. Then, it can consist of  
  a $(k-2)$-face of $\fF_R, R\in\sset_1$ and 
  vertices $y_R$ and $y_S$, where $S\in\sset_2$ such that $R\subset S$, 
  (\eg a $(k-2)$-face of $\bx{}P_1$ and vertices $\y$ and either
  $y_{\{1,2\}}$ or $y_{\{1,3\}}$), for a total of
  $2\sum_{R\in\sset_1}f_{k-2}(\fF_R)$ faces.
\end{enumerate}

Summing over all previous cases we obtain the following relation, for
all $0\le{}k\le{}d+1$:
\begin{equation}\label{r1}
  f_k(\bx{}\qQ)=f_k(\fF_{[3]})+\sum_{R \in \sset_2}[f_k(\fF_R)+f_{k-1}(\fF_R)]
  +\sum_{R \in \sset_1}[f_k(\fF_R)+3f_{k-1}(\fF_R)+2f_{k-2}(\fF_R)].
\end{equation}
Relation \eqref{r1} also holds for $k\in\{-1,0\}$,
since, by convention, we have set $f_l(\fF_S)=0$ for all $l<-1$ 
and $\emptyset \subset S \subseteq [3]$.

Denote by $\yY$ a generic subset of faces of $\cC$. $\yY$ will either
be a subcomplex of the boundary complex $\bx\cC$ of $\cC$, or one of
the $\fF_R$'s. Let $\delta$ be the dimension of $\yY$. Then
we can define the $h$-vector of $\yY$ as
\begin{equation}\label{eq:hvectorY}
  h_k(\yY)=\sum_{i=0}^{\delta+1}(-1)^{k-i}
  \binom{\delta+1-i}{\delta+1-k}f_{i-1}(\yY).
\end{equation}

Another quantity that will be heavily used in the rest of the paper is
that we call the $m$-order $g$-vector of $\yY$, the $k$-th element of
which is given by the following recursive formula:
\begin{equation}\label{eq:gvec}
  g_k^{(m)}(\yY)=\begin{cases}
    h_k(\yY),&m=0,\\
    g_k^{(m-1)}(\yY)-g_{k-1}^{(m-1)}(\yY),&m>0.\\
  \end{cases}
\end{equation}
Observe that for $m=0$ we get the $h$-vector of $\yY$, while for $m=1$
we get what is typically known as the $g$-vector of $\yY$. Clearly, 
$\mb{g}^{(m)}(\yY)$ is the $m$-order backward finite difference of
$\mb{h}(\yY)$, which suggests the following lemma (see
Section~\ref{app:ds} of Appendix~\ref{app:omitted} for
the proof):
\begin{lemma}\label{lem:gvec_as_sum}
  For any $k,m\ge{}0$, we have:
  \begin{equation}
    g_k^{(m)}(\yY)=\sum_{i=0}^m(-1)^i\binom{m}{i}h_{k-i}(\yY).
  \end{equation}
\end{lemma}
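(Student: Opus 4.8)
The statement is the assertion that the $m$-order $g$-vector, defined by the recursion \eqref{eq:gvec} as the $m$-fold backward finite difference of $\mb{h}(\yY)$, has the expected binomial closed form. The plan is a straightforward induction on $m$, with the index $k$ (and the subcomplex $\yY$) carried along as a spectator; no property of $\yY$ beyond the definitions in \eqref{eq:hvectorY} and \eqref{eq:gvec} is needed.

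For the base case $m=0$, the right-hand side is $\sum_{i=0}^{0}(-1)^i\binom{0}{i}h_{k-i}(\yY)=h_k(\yY)$, which agrees with $g_k^{(0)}(\yY)=h_k(\yY)$ by the first branch of \eqref{eq:gvec}. For the inductive step, assume the claimed formula holds for $m-1$ and all $k\ge 0$. By the second branch of \eqref{eq:gvec} and the inductive hypothesis applied to both $g_k^{(m-1)}$ and $g_{k-1}^{(m-1)}$, one obtains
\[
g_k^{(m)}(\yY)=\sum_{i=0}^{m-1}(-1)^i\binom{m-1}{i}h_{k-i}(\yY)-\sum_{i=0}^{m-1}(-1)^i\binom{m-1}{i}h_{k-1-i}(\yY).
\]
Reindexing the second sum by $j=i+1$ turns it into $-\sum_{j=1}^{m}(-1)^{j-1}\binom{m-1}{j-1}h_{k-j}(\yY)=\sum_{j=1}^{m}(-1)^{j}\binom{m-1}{j-1}h_{k-j}(\yY)$, so that, renaming $j$ back to $i$,
\[
g_k^{(m)}(\yY)=\sum_{i=0}^{m-1}(-1)^i\binom{m-1}{i}h_{k-i}(\yY)+\sum_{i=1}^{m}(-1)^{i}\binom{m-1}{i-1}h_{k-i}(\yY).
\]
Collecting the coefficient of $(-1)^i h_{k-i}(\yY)$ gives $\binom{m-1}{0}=\binom{m}{0}$ for $i=0$, $\binom{m-1}{m-1}=\binom{m}{m}$ for $i=m$, and $\binom{m-1}{i}+\binom{m-1}{i-1}=\binom{m}{i}$ for $1\le i\le m-1$ by Pascal's rule; hence $g_k^{(m)}(\yY)=\sum_{i=0}^{m}(-1)^i\binom{m}{i}h_{k-i}(\yY)$, completing the induction.

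I do not anticipate a genuine obstacle here: the only point requiring a little care is keeping the range of summation and the reindexing bookkeeping correct, and checking that the boundary terms $i=0$ and $i=m$ are absorbed cleanly by Pascal's identity. (If one prefers, the same identity follows at one stroke from the operator identity: writing $\nabla$ for the backward difference $\nabla h_k=h_k-h_{k-1}$, relation \eqref{eq:gvec} says $\mb{g}^{(m)}(\yY)=\nabla^m\mb{h}(\yY)$, and the binomial theorem for $\nabla=\mathrm{id}-E^{-1}$, where $E^{-1}$ is the shift $k\mapsto k-1$, yields the claim immediately; I would present the explicit induction above for self-containedness.)
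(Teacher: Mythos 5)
Your proof is correct and is essentially the paper's own argument: both run an induction on $m$ from the recursion \eqref{eq:gvec}, reindex the shifted sum, and merge binomial coefficients via Pascal's rule. The only differences are cosmetic (inductive step phrased as $m-1\to m$ rather than $m\to m+1$, and your side remark about $\nabla=\mathrm{id}-E^{-1}$).
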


We next define the summation operator $\SSS_k(\cdot;D,\nu)$
whose action on $\yY$ is as follows:
\begin{equation}\label{eq:sumoperator}
\SSS_k(\yY;D,\nu)= \sum_{i=0}^{D+1} (-1)^{k-i} \binom{D+1-i}{D+1-k}
f_{i-1-\nu}(\yY).
\end{equation}
Regarding the action of $\SSS_k(\cdot;D,\nu)$ on $\yY$, it is easy to
verify the following (see Section~\ref{app:ds} of
Appendix~\ref{app:omitted} for the proof):

\begin{lemma}\label{lem:sumoperator_gvec}
Let $\delta$ be the dimension of $\yY$, $\nu\ge{}0$,
$\delta\le{}D$, and $D-\delta-\nu\ge{}0$. Then for any $k\ge{}0$ we
have:
\begin{equation}\label{eq:sumoperator_gvec}
  \SSS_k(\yY;D,\nu)=g_{k-\nu}^{(D-\delta-\nu)}(\yY).
\end{equation}
\end{lemma}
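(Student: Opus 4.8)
The statement to prove is Lemma~\ref{lem:sumoperator_gvec}, which asserts that
\[
\SSS_k(\yY;D,\nu)=g_{k-\nu}^{(D-\delta-\nu)}(\yY)
\]
under the hypotheses $\nu\ge 0$, $\delta\le D$, and $D-\delta-\nu\ge 0$, where $\delta=\dim(\yY)$.

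\textbf{Plan of proof.} The idea is to massage the summation defining $\SSS_k(\yY;D,\nu)$ in \eqref{eq:sumoperator} until it matches the closed form for $g$-vectors given in Lemma~\ref{lem:gvec_as_sum}. First I would perform the index shift $j = i-\nu$ in \eqref{eq:sumoperator}, so that the summand becomes $(-1)^{k-\nu-j}\binom{D+1-\nu-j}{D+1-k}f_{j-1}(\yY)$, and the range of $j$ runs over those values for which $f_{j-1}(\yY)$ is nonzero, i.e. $0\le j\le \delta+1$ (using the paper's convention that $f_\ell(\yY)=0$ outside $-1\le\ell\le\delta$). The key observation is then that, writing $D' := D-\nu$ (so $D'\ge \delta$), the sum $\sum_{j=0}^{\delta+1}(-1)^{(k-\nu)-j}\binom{D'+1-j}{D'+1-k}f_{j-1}(\yY)$ is exactly the definition of $h_{k-\nu}$ of $\yY$ \emph{if} we were to regard $\yY$ as sitting inside a complex of dimension $D'$, i.e. it equals $\sum_{j=0}^{\delta+1}(-1)^{(k-\nu)-j}\binom{D'+1-j}{D'+1-(k-\nu)}f_{j-1}(\yY)$ once we note $D'+1-k = D'+1-(k-\nu)-\nu$\,—\,wait, that substitution needs care, so instead I would directly invoke the binomial-coefficient identity that converts a ``padded'' $h$-vector into a finite difference of the genuine $h$-vector.

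\textbf{Key steps in order.}
\begin{enumerate}
\item Rewrite \eqref{eq:sumoperator} via $j=i-\nu$ to isolate the dependence on $\nu$ and on $D$ in the binomial coefficient.
\item Use the classical identity $\binom{D+1-\nu-j}{D+1-k}=\sum_{t\ge 0}(-1)^{t}\binom{D-\delta-\nu}{t}\binom{\delta+1-j}{\,(\delta+1-k)-t+(D-\delta)\,}$\,—\,or more cleanly, expand the ``extra'' $D-\delta-\nu$ factors in the numerator using Vandermonde's identity $\binom{a+b}{c}=\sum_t\binom{a}{t}\binom{b}{c-t}$, splitting $D+1-\nu-j = (\delta+1-j) + (D-\delta-\nu)$.
\item Swap the order of summation so that the inner sum over $j$ becomes, for each fixed $t$, a copy of $h_{k-\nu-t}(\yY)$ (here the hypothesis $\delta\le D$ and $D-\delta-\nu\ge 0$ guarantees the relevant binomial coefficients have nonnegative top entries and the $h$-vector of the correct ``order'' $\delta$ appears).
\item Recognize the resulting expression $\sum_{t=0}^{D-\delta-\nu}(-1)^{t}\binom{D-\delta-\nu}{t}h_{k-\nu-t}(\yY)$ as precisely $g_{k-\nu}^{(D-\delta-\nu)}(\yY)$, by Lemma~\ref{lem:gvec_as_sum} with $m=D-\delta-\nu$.
\end{enumerate}

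\textbf{Main obstacle.} The routine bookkeeping is fine, but the delicate point is Step~2--3: one must verify that after splitting the binomial coefficient via Vandermonde and reindexing, the inner sum over $j$ genuinely reproduces $h_{k-\nu-t}(\yY)$ with the \emph{intrinsic} dimension $\delta$ in the binomial coefficient $\binom{\delta+1-j}{\delta+1-(k-\nu-t)}$, rather than some padded dimension\,—\,this is exactly why the hypothesis $D-\delta-\nu\ge 0$ is needed (so that the ``order'' of the finite difference is nonnegative) and why $\delta\le D$ is needed (so the $h$-vector entries invoked are the actual ones, with no out-of-range terms silently contributing). An alternative, perhaps cleaner, route that avoids Vandermonde: prove the statement by induction on $D-\delta-\nu$. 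The base case $D-\delta-\nu=0$ requires showing $\SSS_k(\yY;D,\nu)=h_{k-\nu}(\yY)$ when $D=\delta+\nu$, which is immediate from comparing \eqref{eq:sumoperator} (after the shift $j=i-\nu$) with \eqref{eq:hvectorY}, since then $D+1-k = (\delta+1)-( k-\nu)$ forces the two binomial coefficients to agree. For the inductive step, one shows $\SSS_k(\yY;D,\nu) = \SSS_k(\yY;D-1,\nu) - \SSS_{k-1}(\yY;D-1,\nu)$ using Pascal's rule $\binom{D+1-i}{D+1-k}=\binom{D-i}{D-k}+\binom{D-i}{D+1-k}$ on the binomial coefficient in \eqref{eq:sumoperator}, and then applies the induction hypothesis together with the recursive definition \eqref{eq:gvec} of the $g$-vector. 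This inductive approach keeps every step to a one-line identity and sidesteps the multi-index resummation entirely; I would present it this way.
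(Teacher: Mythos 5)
Your inductive route is correct and is a genuinely different argument from the paper's. The paper starts from $g_{k-\nu}^{(D-\delta-\nu)}(\yY)$, expands it via Lemma~\ref{lem:gvec_as_sum} into a sum of $h$-vector entries, substitutes their defining formula \eqref{eq:hvectorY}, interchanges the two summations, and collapses the inner sum with the Chu--Vandermonde convolution $\sum_{i}\binom{n}{i}\binom{m}{k-i}=\binom{n+m}{k}$ (together with the symmetry $\binom{n}{m}=\binom{n}{n-m}$ and the usual vanishing conventions) before reindexing to arrive at $\SSS_k(\yY;D,\nu)$. Your proposal instead proceeds by induction on $D-\delta-\nu$: the base case $D=\delta+\nu$ is an immediate match of \eqref{eq:sumoperator} with \eqref{eq:hvectorY} after the shift $j=i-\nu$, and the inductive step applies Pascal's rule $\binom{D+1-i}{D+1-k}=\binom{D-i}{D-k}+\binom{D-i}{D+1-k}$ termwise to get $\SSS_k(\yY;D,\nu)=\SSS_k(\yY;D-1,\nu)-\SSS_{k-1}(\yY;D-1,\nu)$, which combined with the recursion \eqref{eq:gvec} closes the induction. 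One bookkeeping point your sketch elides but which is harmless: Pascal on the top index $D+1-i$ is safe only for $i\le D$; the excluded term $i=D+1$ involves $f_{D-\nu}(\yY)$, which vanishes in the inductive step exactly because there $D-\nu>\delta$. What each approach buys: the paper's is a one-pass identity manipulation but leans on the Vandermonde resummation plus two binomial symmetries; yours keeps every step to a one-line fact (Pascal and the backward-difference recursion) at the mild cost of an explicit induction. Your abandoned Vandermonde sketch would, if completed, essentially reproduce the paper's computation read in the opposite direction and would need the symmetry $\binom{n}{m}=\binom{n}{n-m}$ applied both before and after the convolution to align the indices (the point you flagged with ``that substitution needs care''), so presenting the induction is a sound choice.
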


In the following lemma we relate the $h$-vectors of $\fF_R$ and
$\kK_R$ with each other, and with the $h$-vector of $\bx{}\qQ$. The last
among the relations proved in the following lemma can be thought of as
the analogue of the Dehn-Sommerville equations for $\fF_{[3]}$ and $\kK_{[3]}$.

\begin{lemma}\label{lem:hkK-hkQ-dehn-sommervile}
The following relations hold:
\begin{equation}\label{equ:hkKR}
  h_k(\kK_R)=\sum_{\emptyset\subset{}S\subseteq{}R}g_{k}^{(|R|-|S|)}(\fF_S),
  \qquad 0\le{}k\le{}d+|R|-1,
  \qquad \emptyset \subset R\subseteq [3].
\end{equation}
\begin{equation}\label{r4}
  h_k(\bx{}\qQ)=h_k(\fF_{[3]})+\sum_{R\in \sset_2}h_k(\fF_R)
  +\sum_{R\in \sset_1}[h_k(\fF_R)+h_{k-1}(\fF_R)],
  \qquad 0 \leq k \leq d+2.
\end{equation}
\begin{equation}\label{equ:DSW}
  h_{d+2-k}(\fF_{[3]})=h_k(\kK_{[3]}),\qquad 0\le{}k\le{}d+2.
\end{equation}
\end{lemma}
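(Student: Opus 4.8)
\textbf{Proof proposal for Lemma~\ref{lem:hkK-hkQ-dehn-sommervile}.}

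The plan is to prove the three relations in order, since each one feeds into the next. For \eqref{equ:hkKR}, I would start from the defining relation \eqref{eq:fk_KR}, namely $f_k(\kK_R)=\sum_{\emptyset\subset S\subseteq R}f_k(\fF_S)$ (valid down to $k=-1$ with the convention $f_{-1}(\fF_R)=(-1)^{|R|-1}$), and apply the $h$-vector formula \eqref{eq:hvectorY}. The subtlety is that $\dim(\fF_S)=d+|S|-2$ varies with $|S|$, whereas $\dim(\kK_R)=d+|R|-2$; so plugging $f_k(\kK_R)$ into \eqref{eq:hvectorY} with $\delta=d+|R|-2$ and splitting the sum over $S$ gives, for each $S$, a summation of the form $\SSS_k(\fF_S;d+|R|-2,0)$ in the notation of \eqref{eq:sumoperator} with $\nu=0$. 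By Lemma~\ref{lem:sumoperator_gvec}, with $D=d+|R|-2$, $\delta=d+|S|-2$, $\nu=0$, so $D-\delta-\nu=|R|-|S|\ge 0$, this equals $g_k^{(|R|-|S|)}(\fF_S)$, which is exactly the claimed summand. One must double-check the range of $k$ and that the boundary terms (the $k=-1$ convention contributions) are correctly absorbed.

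For \eqref{r4}, the natural route is to take the $h$-transform of \eqref{r1}. Since $\bx\qQ$ is a simplicial $(d+1)$-sphere, $\dim(\bx\qQ)=d+1$, so $h_k(\bx\qQ)=\sum_{i=0}^{d+2}(-1)^{k-i}\binom{d+2-i}{d+2-k}f_{i-1}(\bx\qQ)$. Substitute \eqref{r1} and regroup by the source complex. For the $\fF_{[3]}$ term we get $h_k(\fF_{[3]})$ directly (dimensions match: $\dim\fF_{[3]}=d+1$). For each $R\in\sset_2$ the contribution is $\SSS_k(\fF_R;d+1,0)+\SSS_k(\fF_R;d+1,1)$; since $\dim\fF_R=d$, Lemma~\ref{lem:sumoperator_gvec} turns the first into $g_k^{(1)}(\fF_R)=h_k(\fF_R)-h_{k-1}(\fF_R)$ and the second into $g_{k-1}^{(0)}(\fF_R)=h_{k-1}(\fF_R)$, which sum to $h_k(\fF_R)$. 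For each $R\in\sset_1$ the contribution $\SSS_k(\fF_R;d+1,0)+3\SSS_k(\fF_R;d+1,1)+2\SSS_k(\fF_R;d+1,2)$ becomes, using $\dim\fF_R=d-1$, $g_k^{(2)}(\fF_R)+3g_{k-1}^{(1)}(\fF_R)+2g_{k-2}^{(0)}(\fF_R)$; expanding via $g^{(2)}_k=h_k-2h_{k-1}+h_{k-2}$ and $g^{(1)}_{k-1}=h_{k-1}-h_{k-2}$ collapses this telescopically to $h_k(\fF_R)+h_{k-1}(\fF_R)$. Collecting the three pieces yields \eqref{r4}.

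For \eqref{equ:DSW}, I would combine the Dehn-Sommerville equations for the simplicial sphere $\bx\qQ$, namely $h_k(\bx\qQ)=h_{d+2-k}(\bx\qQ)$, with \eqref{r4} applied at both $k$ and $d+2-k$, and then eliminate. Writing $A=h_k(\fF_{[3]})-h_{d+2-k}(\fF_{[3]})$, the difference of the two instances of \eqref{r4} gives $A$ plus correction terms involving $h_\bullet(\fF_R)$ for $|R|\le 2$; the key is to recognize those correction terms as precisely $h_{d+2-k}(\kK_{[3]})-h_k(\kK_{[3]})$ via \eqref{equ:hkKR} with $R=[3]$ (recalling $\kK_{[3]}=\bx\cC$ minus the six trivial top faces). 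This requires carefully identifying the lower-order Dehn-Sommerville-type symmetries of the $\fF_R$ with $|R|=1,2$ — which themselves come from the fact that $\fF_R\equiv\bx P_i$ is a simplicial sphere when $|R|=1$, and an analogous relation (proved the same way one step down) when $|R|=2$. I expect \textbf{this last step to be the main obstacle}: getting the bookkeeping of the shifted indices right so that everything reassembles into the clean form $h_{d+2-k}(\fF_{[3]})=h_k(\kK_{[3]})$, rather than merely an equation relating differences. An alternative, perhaps cleaner, route for \eqref{equ:DSW} is to invoke McMullen's relation \eqref{equ:links} (mentioned in the overview) directly on $\bx\qQ$ and push it through the same substitutions, but either way the combinatorial identity-chasing among the $g^{(m)}$'s is where the work concentrates.
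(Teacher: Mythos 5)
Your treatment of \eqref{equ:hkKR} and \eqref{r4} is correct and follows the paper's route: apply the summation operator $\SSS_k$ to \eqref{eq:fk_KR} and \eqref{r1} respectively, reduce via Lemma~\ref{lem:sumoperator_gvec}, then expand the $g^{(m)}$'s and collect. For \eqref{equ:DSW} your overall strategy — equate \eqref{r4} at $k$ and $d+2-k$ through the Dehn--Sommerville equations of $\bx\qQ$, feed in the DS-type symmetries of the lower-dimensional $\fF_R$ (genuine DS for $|R|=1$, and the two-summand analogue from~\cite{kt-mnfms-12} for $|R|=2$), and finish with \eqref{equ:hkKR} at $R=[3]$ — is again exactly the paper's, but your bookkeeping of the residual terms is off. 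After applying those lower-dimensional symmetries and collecting, the corrections collapse to
\[
\sum_{R\in\sset_2}g_k(\fF_R)+\sum_{R\in\sset_1}g^{(2)}_k(\fF_R),
\]
which by \eqref{equ:hkKR} at $R=[3]$ equals $h_k(\kK_{[3]})-h_k(\fF_{[3]})$, \emph{not} $h_{d+2-k}(\kK_{[3]})-h_k(\kK_{[3]})$ as you propose; the latter identification is not what \eqref{equ:hkKR} supplies, and proving it would already presuppose the DS-like relation you are after. With the correct form, setting $A+\mathrm{corrections}=0$ gives $h_k(\fF_{[3]})-h_{d+2-k}(\fF_{[3]})+h_k(\kK_{[3]})-h_k(\fF_{[3]})=0$, the $h_k(\fF_{[3]})$ terms cancel, and you land directly on $h_{d+2-k}(\fF_{[3]})=h_k(\kK_{[3]})$. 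So the method does close cleanly, and your concern about being left with only an equation between differences disappears once the corrections are matched against $h_k(\kK_{[3]})-h_k(\fF_{[3]})$ rather than a $k\leftrightarrow d+2-k$ difference.
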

\begin{proof}
  Relation \eqref{equ:hkKR} follows directly from the application of
  the summation operator $\SSS_k(\cdot;d+|R|-2,0)$ to relation
  \eqref{eq:fk_KR}. More precisely, from \eqref{eq:fk_KR} we get,
  for all $0\le{}k\le{}d+|R|-1$,
  \begin{equation}\label{equ:sumop-on-fkKR}
    \SSS_k(\kK_R;d+|R|-2,0)=
    \sum_{\emptyset\subset{}S\subseteq{}R}\SSS_k(\fF_S;d+|R|-2,0).
  \end{equation}
  Relation \eqref{equ:hkKR} now immediately follows by noticing that:
  \begin{smallenum}
  \item[$\bullet$]
    By applying Lemma \ref{lem:sumoperator_gvec} on the right-hand-side
    of \eqref{equ:sumop-on-fkKR}, with $\delta\sub{}d+|R|-2$,
    $D\sub{}d+|R|-2$ and $\nu\sub{}0$, we get
    \begin{equation*}
      \SSS_k(\kK_R;d+|R|-2,0)=g_{k-0}^{((d+|R|-2)-(d+|R|-2)-0)}(\kK_R)=h_k(\kK_R).
    \end{equation*}
  \item[$\bullet$]
    Similarly, by applying Lemma \ref{lem:sumoperator_gvec} on the left
    hand side of \eqref{equ:sumop-on-fkKR}, with
    $\delta\sub{}d+|S|-2$, $D\sub{}d+|R|-2$, $\nu\sub{}0$, we get:
    \begin{equation*}
      \SSS_k(\fF_S;d+|R|-2,0)=g_{k-0}^{((d+|R|-2)-(d+|S|-2)-0)}(\fF_S)
      =g_k^{(|R|-|S|)}(\fF_S).
    \end{equation*}
  \end{smallenum}

To prove \eqref{r4}, we apply the summation operator
$\SSS_k(\cdot;d+1;0)$ to the $(d+1)$-complex $\bx{}\qQ$.
Using relation \eqref{r1}, we get, for all $0\le{}k\le{}d+2$:
\begin{align*}
  \SSS_k(\bx{}\qQ;d+1;0)&=
  \SSS_k(\fF_{[3]};d+1;0)
  +\sum_{R\in \sset_2}[\SSS_k(\fF_R;d+1;0)+\SSS_k(\fF_R;d+1;1)]\\
  &\quad+\sum_{R\in \sset_1}[\SSS_k(\fF_R;d+1;0)+3\SSS_k(\fF_R;d+1;1)
  +2\SSS_k(\fF_R;d+1;2)],
\end{align*}
which, using Lemma~\ref{lem:sumoperator_gvec}, gives, for all
$0\le{}k\le{}d+2$:
\begin{equation*}
g_k^{(0)}(\bx{}\qQ)=
g_k^{(0)}(\fF_{[3]}) + \sum_{R\in \sset_2}[g_k^{(1)}(\fF_R) + g_{k-1}^{(0)}(\fF_R)]+
\sum_{R\in \sset_1}[g_k^{(2)}(\fF_R)+3 g_{k-1}^{(1)}(\fF_R)+2 g_{k-2}^{(0)}(\fF_R)].
\end{equation*}
Relation \eqref{r4} follows by expanding $\mb{g}^{(m)}(\cdot)$, $1\le m\le 2$, 
according to Lemma~\ref{lem:gvec_as_sum}, and gathering common terms.
\bigskip

To prove what we named the Dehn-Sommerville-like equations for
$\fF_{[3]}$ (cf.~\eqref{equ:DSW}),
we replace $k$ by $d+2-k$ in \eqref{r4}, to get, for all $0\le{}k\le{}d+2$:
\begin{equation}\label{r4inv}
  h_{d+2-k}(\bx{}\qQ)=h_{d+2-k}(\fF_{[3]})+\sum_{R\in\sset_2}h_{d+2-k}(\fF_{R})
  +\sum_{R\in \sset_1}[h_{d+2-k}(\fF_R)+h_{d+1-k}(\fF_R)].
\end{equation}
Using the above relation, in conjunction with \eqref{r4}, the
Dehn-Sommerville equations for $\bx{}\qQ$ become:
\begin{align*}
  h_{d+2-k}(\fF_{[3]})&+\sum_{R\in\sset_2}h_{d+2-k}(\fF_{R})
  +\sum_{R\in \sset_1}[h_{d+2-k}(\fF_R)+h_{d+1-k}(\fF_R)]\\
  &=h_k(\fF_{[3]})+\sum_{R\in\sset_2}h_k(\fF_{R})
  +\sum_{R\in \sset_1}[h_k(\fF_R)+h_{k-1}(\fF_R)].
\end{align*}
Using the Dehn-Sommerville equations for $\fF_R$, $R\in\sset_1$,
as well as the Dehn-Sommerville-like equations for $\fF_R$, $R\in\sset_2$
(cf. \cite[rel. (3.10)]{kt-mnfms-12}), we get:
\begin{align*}
  h_{d+2-k}(\fF_{[3]})&+\sum_{R\in\sset_2}[h_{k-1}(\fF_{R})
    +\sum_{\emptyset\subset{}S\subset{}R}g_{k-1}(\fF_S)]
  +\sum_{R\in \sset_1}[h_{k-2}(\fF_R)+h_{k-1}(\fF_R)]\\
  &=h_k(\fF_{[3]})+\sum_{R\in\sset_2}h_k(\fF_{R})
  +\sum_{R\in \sset_1}[h_k(\fF_R)+h_{k-1}(\fF_R)].
\end{align*}
Finally, solving in terms of $h_{d+2-k}(\fF_{[3]})$, we arrive at the
following:
{\allowdisplaybreaks
\begin{align*}
  h_{d+2-k}(\fF_{[3]})
  &=h_k(\fF_{[3]})+\sum_{R\in\sset_2}h_k(\fF_{R})
  +\sum_{R\in \sset_1}[h_k(\fF_R)+h_{k-1}(\fF_R)]\\
  &\quad-\sum_{R\in\sset_2}[h_{k-1}(\fF_{R})
  +\sum_{\emptyset\subset{}S\subset{}R}g_{k-1}(\fF_S)]
  -\sum_{R\in \sset_1}[h_{k-2}(\fF_R)+h_{k-1}(\fF_R)]\\
  &=h_k(\fF_{[3]})+\sum_{R\in\sset_2}h_k(\fF_{R})
  +\sum_{R\in \sset_1}[h_k(\fF_R)+h_{k-1}(\fF_R)]\\
  &\quad-\sum_{R\in\sset_2}h_{k-1}(\fF_{R})
  -2\sum_{R\in\sset_1}g_{k-1}(\fF_S)
  -\sum_{R\in \sset_1}[h_{k-2}(\fF_R)+h_{k-1}(\fF_R)]\\
  &=h_k(\fF_{[3]})+\sum_{R\in\sset_2}[h_k(\fF_{R})-h_{k-1}(\fF_{R})]\\
  &\quad+\sum_{R\in\sset_1}[h_k(\fF_R)+h_{k-1}(\fF_R)-2g_{k-1}(\fF_S)-
  h_{k-2}(\fF_R)-h_{k-1}(\fF_R)]\\
  &=h_k(\fF_{[3]})+\sum_{R\in\sset_2}g_k(\fF_{R})
  +\sum_{R\in\sset_1}[h_k(\fF_R)-2h_{k-1}(\fF_R)+h_{k-2}(\fF_R)]\\
  &=h_k(\fF_{[3]})+\sum_{R\in\sset_2}g_k(\fF_{R})+\sum_{R\in\sset_1}g^{(2)}_k(\fF_R)\\
  &=h_k(\kK_{[3]}),
\end{align*}}
where for the last equality we used relation~\eqref{equ:hkKR} for
$R\equiv{}[3]$.
\end{proof}

\section{Recurrence relations}
\label{sec:recur}

Recall that we denote by $\VV$ the vertex set of $\bx{}\qQ$ and by $\VV_i$ the 
(Cayley embedding of the) vertex set of  $\bx{}P_i$, $1\le i\le 3$.
Let $\yY / v$ denote the link of vertex $v$ of $\yY$ in the simplicial complex $\yY$.
McMullen \cite{m-mnfcp-70} showed  that for any $d$-dimensional polytope $P$ the following relation holds:
\begin{equation}\label{equ:links}
(k+1)h_{k+1}(\bx{}P) + (d-k)h_k(\bx{}P)= \sum_{v\in \text{vert}(\bx{}P)} h_k (\bx{}P / v), \qquad 0 \leq k \leq d-1.
\end{equation}

Applying relation \eqref{equ:links} to the $(d+2)$-dimensional
polytope $\qQ$, we have, for all $0\le{}k\le{}d+1$:
\begin{equation}\label{equ:linksQ}
  (k+1)h_{k+1}(\bx{}\qQ)+ (d+2-k)h_k(\bx{}\qQ)
  =\sum_{v\in \VV} h_k(\bx{}\qQ/v)
  =\sum_{v\in \VV_{[3]}} h_k(\bx{}\qQ / v)
  +\sum_{\emptyset{}\subset{}R\subset[3]} h_k(\bx{}\qQ / y_R),
\end{equation}
where we used the fact that $\VV$ is the disjoint union of the vertex sets
$\VV_{[3]}=\VV_1\cup\VV_2\cup\VV_3$ and
$\{y_R\mid{}\emptyset{}\subset{}R\subset[3]\}$.

\begin{lemma}\label{lem:h-vectors-Q-links}
The $h$-vectors of the complexes $\bx{\qQ}/v,\,v\in \VV_i,\,i=1,2,3$, $\bx{\qQ}/y_R,~R\in \sset_1$, 
and $\bx{\qQ}/y_R,~R\in \sset_2$
are given by the following relations:
\begin{equation}
  \label{equ:hk-Q-link-v}
  h_k(\bx{}Q / v) = h_k(\kK_{[3]} / v)
  + \sum_{\{i\} \subseteq R \subset[3]} h_{k-1}(\kK_{R} / v)
  + h_{k-2}(\kK_{\{i\}} / v),
  \quad v\in \VV_i,\quad i\in[3],
\end{equation}
\begin{equation}
  \label{equ:hk-Q-link-S1}
  h_k(\bx{}\qQ / y_R) =  h_k(\fF_R) + h_{k-1}(\fF_R), \qquad R\in\sset_1,
\end{equation}
\begin{equation} 
  \label{equ:hk-Q-link-S2}
  h_k(\bx{}\qQ / y_R) = \sum_{\emptyset \subset S \subseteq{} R}h_{k}(\fF_S),
  \qquad R\in\sset_2.
\end{equation}
\end{lemma}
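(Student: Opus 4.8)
The plan is to compute each link $\bx{}\qQ/v$ and $\bx{}\qQ/y_R$ directly from the stellar-subdivision construction of $\qQ$, and then apply to the resulting simplicial complexes exactly the same bookkeeping that produced relations \eqref{r1} and \eqref{r4}. The key point is that stellar subdivision interacts with vertex links in a controlled way: if $w$ is a new vertex inserted in the relative interior of a face $G$ of a complex, then $w$ does not lie in the link of any old vertex outside $G$, it replaces the face $G$ in the link of old vertices of $G$, and its own link is the boundary of $G$ joined with the link of $G$. I will use these three facts repeatedly.

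First I would handle \eqref{equ:hk-Q-link-S1} and \eqref{equ:hk-Q-link-S2}, which are the cleanest. For $R\in\sset_1$, say $R=\{i\}$, the vertex $y_R$ was inserted in the relative interior of the facet $\cC_{\{i\}}=P_i$, which is a ridge of $\cC$; its link in $\bx\qQ$ is therefore the join of $\bx(P_i)=\fF_R$ with the link of the ridge $P_i$ in $\bx\cC$, and since $P_i$ is contained in exactly two facets of $\cC$ (which after the later subdivisions each contribute a single new vertex) the link of that ridge is a single edge, i.e.\ (combinatorially) two points. Thus $\bx\qQ/y_R$ is the join of $\fF_R$ with $S^0$, and $h_k$ of a join with two points is $h_k+h_{k-1}$, giving \eqref{equ:hk-Q-link-S1}. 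For $R\in\sset_2$, the vertex $y_R$ was inserted in the relative interior of the facet $\cC_R$ after the faces $\cC_S$, $S\subset R$, had themselves been subdivided at $y_S$; hence $\bx\qQ/y_R$ is (isomorphic to) the boundary complex of that subdivided $\cC_R$, which is precisely the complex $\qQ_R$ one would build from $\cC_R$ by the two-step procedure restricted to $R$. Counting its $k$-faces by the same case analysis as for \eqref{r1} (now with only the summands indexed by $R$) and passing to $h$-vectors via $\SSS_k(\cdot;d+|R|-2,0)$ as in the proof of \eqref{r4} yields $h_k(\bx\qQ/y_R)=\sum_{\emptyset\subset S\subseteq R}h_k(\fF_S)$, which is \eqref{equ:hk-Q-link-S2}.

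For \eqref{equ:hk-Q-link-v} I would fix $v\in\VV_i$ and describe $\bx\qQ/v$ face by face, tracking how many auxiliary vertices a face of the link contains, exactly as in the derivation of \eqref{r1} but one dimension lower and with everything replaced by its link at $v$. A $k$-face of $\bx\qQ/v$ containing no auxiliary vertex corresponds to a $(k+1)$-face of $\bx\qQ$ through $v$ with no $y_R$, i.e.\ a face of $\fF_{[3]}/v$, $\fF_R/v$ for $\{i\}\subseteq R\in\sset_2$, or $\fF_{\{i\}}/v$; a $k$-face containing one auxiliary vertex $y_R$ forces $\{i\}\subseteq R$ (since $y_R\in\cC_R$ and $v\in\VV_i$ means $v\in\cC_R$ only when $i\in R$), and after the stellar subdivision its presence "uses up" the face $\cC_R$, so it contributes a face of $\kK_S/v$ for the appropriate $S\subseteq R$ with $i\in S$ together with $y_R$; a $k$-face containing two auxiliary vertices $y_{\{i\}},y_R$ with $\{i\}\subset R\in\sset_2$ contributes a face of $\kK_{\{i\}}/v$. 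Summing, grouping the multiplicities, and applying the summation operator $\SSS_k(\cdot;\dim(\bx\qQ/v),0)$ together with Lemma~\ref{lem:sumoperator_gvec} (as in \eqref{r4}), the raw face counts turn into the stated combination $h_k(\kK_{[3]}/v)+\sum_{\{i\}\subseteq R\subset[3]}h_{k-1}(\kK_R/v)+h_{k-2}(\kK_{\{i\}}/v)$, where the closures $\kK_R$ rather than the $\fF_R$ appear precisely because a face of the link at $v$ together with an auxiliary vertex may have all its remaining vertices in a proper Cayley sub-polytope, exactly the phenomenon encoded by \eqref{eq:fk_KR}.

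The main obstacle will be the second step: making rigorous the claim that after both rounds of stellar subdivision the link of an old vertex $v$ decomposes cleanly into the "$0$, $1$, or $2$ auxiliary vertices" strata with the right multiplicities, in particular verifying that a face of $\bx\qQ$ through $v$ cannot contain two auxiliary vertices $y_R,y_{R'}$ with $R,R'\in\sset_2$ (ruled out because $\cC_R\cap\cC_{R'}=P_j$ for the common index $j$, and the subdivision of $\cC_R$ is performed on a complex in which $\cC_{R'}$ is no longer a face), and that it cannot contain $y_{\{i\}}$ together with $y_R$ unless $i\in R$. These are consequences of the order in which the subdivisions are performed and of the incidence structure among the $\cC_R$, but they need to be stated carefully; once they are in place the passage from face counts to $h$-vectors is the same routine application of $\SSS_k(\cdot;D,0)$ and Lemmas~\ref{lem:gvec_as_sum}--\ref{lem:sumoperator_gvec} already used in Lemma~\ref{lem:hkK-hkQ-dehn-sommervile}.
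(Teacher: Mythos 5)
Your proposal is correct and follows essentially the same strategy as the paper's proof: stratify the faces of each link by the number of auxiliary vertices $y_R$ they contain, obtain an $f$-vector identity expressing $f_k$ of the link in terms of $f$-vectors of $\kK_R/v$ (respectively $\fF_S$), and then convert to $h$-vectors by applying the summation operator $\SSS_k(\cdot;D,0)$ via Lemma~\ref{lem:sumoperator_gvec}, collapsing the resulting $g^{(m)}$ terms. The only genuine difference is cosmetic: for \eqref{equ:hk-Q-link-S1} you observe that $\bx\qQ/y_R$ is the join of $\fF_R$ with a $0$-sphere and invoke the product formula for $h$-vectors of joins, whereas the paper writes down $f_k(\bx\qQ/y_R)=f_k(\fF_R)+2f_{k-1}(\fF_R)$ and pushes it through $\SSS_k$; both routes give $h_k+h_{k-1}$, and neither is shorter than the other in any meaningful sense. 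The incidence facts you flag as "the main obstacle" (no face contains two $y_R$'s with $R\in\sset_2$, and $y_{\{i\}}$ can only cohabit with $y_R$ when $i\in R$) are indeed used silently by the paper in the case analysis, and your sketch of why they hold — each auxiliary vertex lies in the relative interior of a distinct $\cC_R$, so every face of $\bx\qQ$ lies in at most one subdivided facet $\cC_R$ — is the right justification; spelling it out, as you intend, would make the face stratification airtight.
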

\begin{proof} 
We start by proving relation \eqref{equ:hk-Q-link-v}.
Without loss of generality we assume that $v\in \VV_1$; the cases
$v\in\VV_2$ and $v\in\VV_3$ are entirely analogous.

Let $F$ be a $k$-face of $\bx{}Q/v$. We have the following cases
depending on the number of additional points $y_R$,
$\emptyset\subset{}R\subset [3]$, that $F$ contains: 
\begin{enumerate}
\item $F$ does contain any additional points. Then, it is a $k$-face of $\kK_{[3]}$.

\item $F$ contains one additional point. Then, it can consist of  a $(k-1)$-face of:
  \begin{enumerate}
   \item 
      $\kK_{\{1\}} (\equiv \bx{}P_1)$ and point $\y$,  or
   \item 
      $\kK_{\{1,2\}}$, and point $y_{\{1,2\}}$, or
   \item 
      $\kK_{\{1,3\}}$, and point $y_{\{1,3\}}$.
  \end{enumerate} 

\item $F$ contains two additional points. Then, it can consist of  a $(k-2)$-face of $\kK_{\{1\}}$ and 
points $\y$ and $y_{\{1,2\}}$, or points  $\y$ and $y_{\{1,3\}}$.
\end{enumerate}

Summing over all previous cases we obtain the following relation:
\begin{equation}
 \label{equ:k-faces-Q-link-v}
 f_k(\bx{}Q / v) = \overbrace{f_k(\kK_{[3]} / v)}^{\text{case (i)}}
 + \overbrace{\sum_{\{1\} \subseteq R \subset [3]} f_{k-1}(\kK_{R} /
   v)}^{\text{case (ii)}}
 + \overbrace{2f_{k-2}(\kK_{\{1\}} / v)}^{\text{case (iii)}}, \quad v\in \VV_1. 
\end{equation}
We apply the summation operator $\SSS_k(\cdot;d,0)$ to the $d$-complex $\bx{}Q/v$ and obtain:
\begin{equation*}
 g^{(0)}_k(\bx{}Q / v) = g^{(0)}_k(\kK_{[3]} / v)
 + \sum_{\{1\} \subseteq R \subset [3]} g^{(2-|R|)}_{k-1}(\kK_{R} / v)
 + 2g^{(0)}_{k-2}(\kK_{\{1\}} / v), 
\end{equation*}
which finally gives, for any $v\in\VV_1$:
\begin{align*}
  h_k(\bx{}Q/v)
  &= h_k(\kK_{[3]}/v) +\bigg(g_{k-1}(\kK_{\{1\}}/v)
  + \!\!\sum_{\{1\}\subset{}R\subset[3]}\!\! h_{k-1}(\kK_{R}/v)\bigg)
  + 2h_{k-2}(\kK_{\{1\}}/v)\\ 
  &= h_k(\kK_{[3]}/v) + h_{k-1}(\kK_{\{1\}}/v)-h_{k-2}(\kK_{\{1\}}/v)
  + \!\!\sum_{\{1\}\subset{}R\subset[3]}\!\! h_{k-1}(\kK_{R}/v)
  + 2h_{k-2}(\kK_{\{1\}}/v)\\ 
  &= h_k(\kK_{[3]}/v)
  + \sum_{\{1\}\subseteq{}R\subset [3]}h_{k-1}(\kK_{R}/v)
  + h_{k-2}(\kK_{\{1\}}/v).
\end{align*}

To prove \eqref{equ:hk-Q-link-S1} consider a $k$-face of $\bx{}\qQ /
y_R,~R\in\sset_1$. Such a face is either a $k$-face of $\fF_R$,
or consists of a $(k-1)$-face of $\fF_R$ and point $y_S$ for any $ S
\in \sset_2$ such that $S\supset R$. Note that there exactly two
such points $y_S$. Hence:
\begin{equation}\label{r5right2}   
  f_k(\bx{}\qQ / y_R) = f_k(\fF_R) + 2f_{k-1}(\fF_R),\qquad R\in\sset_1 .
\end{equation}
Applying the summation operator $\SSS_k(\cdot;d,0)$
to the simplicial $d$-complex $\bx{}\qQ / y_R,~R\in\sset_1$,
and using relation \eqref{r5right2} and
Lemma~\ref{lem:sumoperator_gvec}, we get, for any $R\in\sset_1$:
\begin{equation*}
  \begin{array}{lclcl}
  h_k(\bx{}\qQ/y_R)&=&g_k^{(0)}(\bx{}\qQ / y_R)
  &=&\SSS_k(\bx{}\qQ / y_R;d,0)\\[2pt]
  &=&\SSS_k(\fF_R;d,0) + 2\SSS_k(\fF_R;d,1)
  &=&g_k^{(1)}(\fF_R) + 2g_{k-1}^{(0)}(\fF_R)\\[2pt]
  &=&h_k(\fF_R) - h_{k-1}(\fF_R) + 2h_{k-1}(\fF_R)
  &=&h_k(\fF_R) + h_{k-1}(\fF_R).
  \end{array}
\end{equation*}

To prove \eqref{equ:hk-Q-link-S2} consider a $k$-face of $\bx{}\qQ /
y_R,~R\in\sset_2$. This is either a $k$-face of $\fF_S,$ for any
$\emptyset\subset S\subseteq R$, or consists of a $(k-1)$-face of
$\fF_S$ and point $y_S$ for any $\emptyset\subset S\subset R$. Hence,
for any $R\in\sset_2$, we have:
\begin{equation}  \label{r5right3}
  f_k(\fF_R / y_R)
  =\sum_{\emptyset\subset S\subseteq R}f_k(\fF_S)
  +\sum_{\emptyset\subset S\subset R}f_{k-1}(\fF_S)
  =f_k(\fF_R) + \sum_{\emptyset\subset S\subset R}[f_k(\fF_S) + f_{k-1}(\fF_S)].
\end{equation}
Applying the summation operator $\SSS_k(\cdot;d,0)$ 
to the $d$-dimensional complex $\bx{}\qQ / y_R$, $R\in\sset_2$, and
using relation \eqref{r5right3}, along with
Lemma~\ref{lem:sumoperator_gvec}, we get, for any $R\in\sset_2$:
\begin{align*}
  h_k(\bx{}\qQ / y_R)=\SSS_k(\bx{}\qQ/y_R;d,0)
  &=\SSS_k(\fF_R;d,0)
  + \sum_{\emptyset\subset S\subset R}[\SSS_k(\fF_S;d,0) + \SSS_k(\fF_S;d,1)]\\[2pt]
  &= g_k^{(0)}(\fF_R)
  + \sum_{\emptyset\subset S\subset R}[g_k^{(1)}(\fF_S) + g_{k-1}^{(0)}(\fF_S)]\\[2pt]
  &= h_k(\fF_R) + \sum_{\emptyset\subset S\subset R}h_k(\fF_S)\\[2pt]
  &= \sum_{\emptyset\subset S\subseteq{}R}h_k(\fF_S).\qedhere
\end{align*}
\end{proof}

The following two lemmas are essential in the proof of the upcoming
recurrence relation in Lemma  \ref{lem:hkWrecur}.
\begin{lemma}
\label{lem:recur-relation-F3-wrt-K}
The following relation holds, for all $0\le{}k\le{}d+1$:
\begin{equation}\label{equ:recur-relation-F3-wrt-K}
(k+1) h_{k+1}(\fF_{[3]})+ (d+2-k)  h_k(\fF_{[3]}) =
\sum_{\emptyset \subset R \subseteq [3]} (-1)^{3-|R|}
\sum_{v\in \VV_R} g_k^{(3-|R|)}(\kK_{R} / v).  
\end{equation}
\end{lemma}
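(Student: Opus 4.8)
The plan is to start from McMullen's link relation \eqref{equ:linksQ} applied to the polytope $\qQ$, and to rewrite both sides in terms of the complexes $\kK_R$ and $\fF_R$. On the left-hand side we substitute the expression \eqref{r4} for $h_k(\bx\qQ)$ and $h_{k+1}(\bx\qQ)$, obtaining $(k+1)h_{k+1}(\bx\qQ)+(d+2-k)h_k(\bx\qQ)$ as a sum, over $\fF_{[3]}$ and the $\fF_R$'s with $R\in\sset_1\cup\sset_2$, of McMullen-type combinations of consecutive $h$-entries. On the right-hand side we split $\VV$ into $\VV_{[3]}$ and the auxiliary vertices $\{y_R\mid\emptyset\subset R\subset[3]\}$ exactly as in \eqref{equ:linksQ}, and then substitute the three link formulas from Lemma~\ref{lem:h-vectors-Q-links}: \eqref{equ:hk-Q-link-v} for $v\in\VV_i$, \eqref{equ:hk-Q-link-S1} for $y_R$ with $R\in\sset_1$, and \eqref{equ:hk-Q-link-S2} for $y_R$ with $R\in\sset_2$.

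The next step is bookkeeping: collect, for each $\emptyset\subset R\subseteq[3]$, all contributions to the right-hand side that involve the links $\kK_R/v$ (equivalently, for $R\in\sset_1$, all the ``$\fF_R$'' terms). Substituting \eqref{equ:hk-Q-link-v} contributes $\sum_{v\in\VV_i}h_k(\kK_{[3]}/v)$ to the $R=[3]$ group, $\sum_{v\in\VV_i}h_{k-1}(\kK_R/v)$ (summed over $v\in\VV_i$ and $i\in R$) to each $R\in\sset_2$ group, and $\sum_{v\in\VV_i}h_{k-2}(\kK_{\{i\}}/v)$ to the $R\in\sset_1$ group; summing over $i$ repackages these as $\sum_{v\in\VV_R}h_{k}(\kK_R/v)$, $\sum_{v\in\VV_R}h_{k-1}(\kK_R/v)$, and $\sum_{v\in\VV_R}h_{k-2}(\kK_R/v)$ respectively, since each $\VV_R$ is the disjoint union of the $\VV_i$, $i\in R$. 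The $y_R$ contributions are, via \eqref{equ:hk-Q-link-S1} and \eqref{equ:hk-Q-link-S2}, purely global $h_k(\fF_S)$ terms with no link structure, so they must cancel against (or be absorbed into) the left-hand side using \eqref{r4}; indeed \eqref{r4} is precisely the relation that makes $h_k(\bx\qQ)-\sum_{R\in\sset_2}h_k(\fF_R)-\sum_{R\in\sset_1}[h_k(\fF_R)+h_{k-1}(\fF_R)]$ equal to $h_k(\fF_{[3]})$, and similarly for index $k+1$. After this cancellation, one is left with an identity whose left side is $(k+1)h_{k+1}(\fF_{[3]})+(d+2-k)h_k(\fF_{[3]})$ and whose right side is a signed sum over $R$ of combinations of $\sum_{v\in\VV_R}h_{k}(\kK_R/v)$, $\sum_{v\in\VV_R}h_{k-1}(\kK_R/v)$, $\sum_{v\in\VV_R}h_{k-2}(\kK_R/v)$ with binomial-type coefficients in $3-|R|$; recognizing $\sum_{j}(-1)^j\binom{3-|R|}{j}h_{k-j}(\kK_R/v)=g_k^{(3-|R|)}(\kK_R/v)$ via Lemma~\ref{lem:gvec_as_sum} yields \eqref{equ:recur-relation-F3-wrt-K}.

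Concretely, I expect the coefficient pattern to come out as follows: the $R=[3]$ term appears with sign $(-1)^{3-3}=+1$ and multiplicity one (the plain $h_k(\kK_{[3]}/v)$, which is $g_k^{(0)}$); each $R\in\sset_2$ appears with sign $(-1)^{3-2}=-1$ and a factor $g_k^{(1)}(\kK_R/v)=h_k(\kK_R/v)-h_{k-1}(\kK_R/v)$ — but careful accounting of the index shifts coming from the $(k+1)h_{k+1}+(d+2-k)h_k$ structure on both sides, together with the ``$+h_{k-1}$'' pieces in \eqref{r4} and in \eqref{equ:hk-Q-link-S1}, is what produces the $g^{(1)}$ rather than a bare $h$; and each $R\in\sset_1$ appears with sign $(-1)^{3-1}=+1$ and a factor $g_k^{(2)}(\kK_R/v)$, where one uses $\kK_{\{i\}}\equiv\bx P_i$ and the Dehn--Sommerville relations for the links $\bx P_i/v$ (these are boundary complexes of $(d-1)$-polytopes) to get the second-difference to close up correctly.

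The main obstacle will be the middle step: matching the McMullen combinations $(k+1)(\cdot)_{k+1}+(d+2-k)(\cdot)_k$ that arise when expanding the left side through \eqref{r4} against the index-shifted link sums on the right, so that everything except the $\fF_{[3]}$ and $\kK_R/v$ terms cancels identically in $k$. This requires using the link relation \eqref{equ:links} again, this time for the polytopes $P_i$ and the pair-Cayley polytopes $\cC_R$, $R\in\sset_2$ (i.e.\ applying it to the $\kK_R/v$ in dimension $d+|R|-2$), to convert the ``extra'' link sums $\sum_{v}h_\bullet(\kK_R/v)$ back into combinations of $h_\bullet(\fF_R)$ and $h_\bullet(\kK_R)$, and then invoking the Dehn--Sommerville-like equations \eqref{equ:DSW} (and their pairwise analogues from \cite{kt-mnfms-12}) to see that these combine with the $y_R$-contributions to vanish. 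Once the cancellations are organized — most cleanly by processing the sum $\sum_{\emptyset\subset R\subseteq[3]}$ one cardinality at a time and tracking the $g^{(3-|R|)}$ coefficients — the identification with \eqref{equ:recur-relation-F3-wrt-K} is immediate.
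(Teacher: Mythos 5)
Your plan follows essentially the same route as the paper's proof: start from \eqref{equ:linksQ}, expand the left side through \eqref{r4}, expand the right side through Lemma~\ref{lem:h-vectors-Q-links}, and then close the argument by re\-expressing the resulting McMullen-type combinations for the lower summands via \eqref{equ:links} applied to $\bx P_i$ and via the pairwise version of the recurrence from \cite{kt-mnfms-12}. Your predicted coefficient pattern (sign $(-1)^{3-|R|}$, factor $g_k^{(3-|R|)}$) is exactly what comes out.

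One small correction: you do not need the Dehn--Sommerville-like equations \eqref{equ:DSW}, nor their pairwise analogues, to make the $y_R$-contributions vanish. In the paper's computation the terms coming from $h_k(\bx\qQ/y_R)$ (your ``$T_{15}$''-type block) cancel directly against the $h_k(\fF_R)$ and $h_{k-1}(\fF_R)$ terms produced when you expand $T_2,T_3,T_4$ using the McMullen recurrences; this is a purely algebraic cancellation. Similarly, the second-order difference $g_k^{(2)}(\kK_{\{i\}}/v)$ is not extracted with Dehn--Sommerville applied to $\bx P_i/v$ --- it falls out of combining the two McMullen combinations in $T_3,T_4$ (giving $h_k(\kK_R/v)$ and $h_{k-1}(\kK_R/v)$ contributions) with the $+2g_k(\kK_{\{i\}}/v)$ term inherited from the pairwise recurrence and the $h_{k-1}+h_{k-2}$ terms from $h_k(\bx\qQ/v)$. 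So the argument closes without any self-duality input; Dehn--Sommerville would be superfluous here, and invoking it obscures what is actually a bookkeeping cancellation.
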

\begin{proof}[Sketch of proof]
  The complete proof may be found in Section~\ref{app:recur}
  of Appendix~\ref{app:omitted}.
  Our starting point is relation \eqref{equ:linksQ}. 
  We first substitute $ h_k( \bx{}Q) $ and $ h_{k+1} (\bx{}Q)$ on 
  the left-hand side of \eqref{equ:linksQ}
  with their relevant expressions from \eqref{r4}.  
  We then group the terms so that we get a sum of: 
  \begin{enumerate} 
  \item  the left-hand side of \eqref{equ:recur-relation-F3-wrt-K}, 
  \item  $(k+1) h_{k+1}(\fF_R) + (d+1-k) h_k(\fF_R), \, R \in \sset_2$   
  \item  $(k+1) h_{k+1}(\fF_R) + (d-k) h_k(\fF_R) $ and
    $k  h_{k}(\fF_R) + (d-k-1) h_{k-1}(\fF_R)$ with $ R \in \sset_1$ 
  \item  additional terms.   
  \end{enumerate} 
  As will be described below, the intuition behind this grouping is to
  substitute the terms in (ii) and (iii) by sums involving quantities
  of the form $g_k^{(m)}(\kK_S/v)$. These quantities will be grouped
  with the terms obtained from a similar expansion of the term
  $h_k(\bx\qQ/v)$ appearing in the right-hand side of
  \eqref{equ:linksQ}, yielding the right-hand side of
  \eqref{equ:recur-relation-F3-wrt-K}.

  In the proof of \cite[Lemma 3.2]{kt-mnfms-12}, the sum in item (ii) 
  above is shown to be equal\footnote{The expression in \cite{kt-mnfms-12}
  is written differently; it is equivalent, however, to the expression
  stated here.} to
  $\sum_{i\in R}\sum_{v\in\VV_i}[h_k(\kK_R/v)-g_k(\kK_{\{i\}}/v)]$.
  For (iii) we use \eqref{equ:links} combined with the fact
  that for any $R\in\sset_1$, $\fF_R\equiv\bx{}P_R$.
  On the right-hand side of  \eqref{equ:linksQ}  we substitute
  $h_k(\bx{}Q/v)$ and $h_k(\bx{}Q/y_R)$
  using the relations  in Lemma \ref{lem:h-vectors-Q-links}.
  Finally, we equate our expansions of the left- and right-hand side
  of \eqref{equ:linksQ} and notice that the terms in (iv) and the
  expressions for $h_k(\bx{}Q/y_R)$ cancel-out.
  Recalling that $g_k=h_k-h_{k-1}$ and $g_k^{(2)}=h_k-2h_{k-1}+h_{k-2}$, 
  we appropriately regroup the remaining  terms to obtain the desired
  expression.
\end{proof}

\begin{lemma}\label{lem:LinksNoLinks}
  The following relation holds, for all $0\le{}k\le{}d+1$:
  \begin{equation}\label{equ:LinksNoLinks}
    \sum_{\emptyset \subset R \subseteq [3]} (-1)^{3-|R|}\,
    \sum_{v\in \VV_R} g_k^{(3-|R|)}(\kK_{R} / v)
    \leq
    \sum_{\emptyset \subset R \subseteq [3]} (-1)^{3-|R|}\,
    \sum_{v\in \VV_R} g_k^{(3-|R|)}(\kK_{R} ).
  \end{equation}
\end{lemma}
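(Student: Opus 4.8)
The plan is to prove the inequality term by term, comparing, for each nonempty $R\subseteq[3]$ and each vertex $v\in\VV_R$, the contribution $g_k^{(3-|R|)}(\kK_R/v)$ on the left with $g_k^{(3-|R|)}(\kK_R)$ on the right. The natural tool is the shelling-based argument underlying McMullen's proof of the Upper Bound Theorem: for a fixed vertex $v$ of a shellable simplicial complex $\kK_R$, one chooses a shelling of (the relevant part of) $\kK_R$ in which the facets through $v$ come first, so that $\str(v,\kK_R)$ is itself shellable via an initial segment of that shelling. Tracking the restriction sets $\restr{F_j}$ through such a shelling is what lets one compare the $h$-entries (and hence their finite differences, the $g^{(m)}$-entries) of the link $\kK_R/v$ with those of the ambient complex $\kK_R$. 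Concretely, I would first reduce the statement to the purely combinatorial claim that, for each fixed $R$, $\sum_{v\in\VV_R} g_k^{(3-|R|)}(\kK_R/v)$ and $\sum_{v\in\VV_R} g_k^{(3-|R|)}(\kK_R)$ differ by a quantity whose \emph{sign}, after multiplication by $(-1)^{3-|R|}$ and summation over $R$, is nonnegative.

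The key steps, in order, would be: (1) Fix $R$ and a vertex $v\in\VV_R$; since $\kK_R$ is a pure simplicial $(d+|R|-2)$-complex that is a subcomplex of the boundary of the simplicial-as-possible Cayley polytope, it is shellable, and one can order a shelling so that the facets containing $v$ form an initial segment. (2) Using the restriction-set bookkeeping recalled in the Preliminaries (new faces at step $j$ are exactly the $G$ with $\restr{F_j}\subseteq G\subseteq F_j$), express $h_i(\kK_R/v)$ and $h_i(\kK_R)$ as counts of facets with restriction sets of prescribed size, and deduce a termwise comparison of the form $h_i(\kK_R/v)\le h_i(\kK_R)$ for $i$ in the relevant range, together with a matching identity at the top. (3) Pass from $h$ to $g^{(m)}$ with $m=3-|R|$ via Lemma~\ref{lem:gvec_as_sum}, i.e.\ $g_k^{(m)}(\yY)=\sum_{j=0}^m(-1)^j\binom{m}{j}h_{k-j}(\yY)$; here one must be careful because taking alternating finite differences can flip inequalities, so the comparison has to be organized so that it survives the $m$-fold differencing — this is where the sign factor $(-1)^{3-|R|}$ and the inclusion–exclusion structure over $R\subseteq[3]$ are essential. (4) Sum over $v\in\VV_R$ and then over $R$, letting the inclusion–exclusion telescoping (the same mechanism that produced $\kK_{[3]}$ from the $\fF_S$'s in \eqref{eq:fk_KR} and \eqref{equ:hkKR}) combine the per-$R$ comparisons into the single inequality \eqref{equ:LinksNoLinks}.

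The main obstacle I anticipate is Step (3)–(4): a naive termwise bound $h_i(\kK_R/v)\le h_i(\kK_R)$ does \emph{not} automatically yield a bound on the alternating sums $g_k^{(3-|R|)}$, and the signs $(-1)^{3-|R|}$ mean that for $|R|$ even the desired inequality goes the ``wrong'' way relative to the naive bound. So the real work is to find the right grouping: one wants to show that the difference
\[
\sum_{\emptyset\subset R\subseteq[3]}(-1)^{3-|R|}\sum_{v\in\VV_R}\bigl(g_k^{(3-|R|)}(\kK_R)-g_k^{(3-|R|)}(\kK_R/v)\bigr)
\]
is a sum of manifestly nonnegative contributions, presumably by re-expanding each $g^{(3-|R|)}$, collecting the $h_{k-j}$ terms across all $R$ with a fixed ``defect'' $3-|S|$ at the smallest set $S$ involved, and recognizing the result as a nonnegative combination of quantities like $h_i(\kK_S)-h_i(\kK_S/v)$ (which are nonnegative by Step (2)) — or, alternatively, as face-number differences counting faces of $\kK_R$ that avoid $v$, which are obviously nonnegative. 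I would expect the cleanest route to be to prove directly, for each $R$ and $v$, an exact formula
\[
g_k^{(3-|R|)}(\kK_R)-g_k^{(3-|R|)}(\kK_R/v)=\text{(nonnegative combinatorial quantity)},
\]
e.g.\ a count of certain faces of $\kK_R$ not containing $v$, and then observe that the inclusion–exclusion over $R$ with the signs $(-1)^{3-|R|}$ reassembles these into another nonnegative count — at which point \eqref{equ:LinksNoLinks} follows immediately. Verifying that the signs line up correctly in this reassembly, and handling the boundary range of $k$ (small $k$, and $k$ near $d+1$, where some $h$-entries vanish or the finite-difference window $[k-m,k]$ pokes outside the valid index range), are the places where care will be needed; everything else is routine shelling bookkeeping.
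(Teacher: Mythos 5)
Your high-level plan is pointed in the right direction — the paper does indeed reduce to comparing $h$-type quantities via a shelling, and the inclusion–exclusion over $R$ is exactly what makes the comparison go through — but the proposal as written has a gap precisely at the step you yourself flag as the hard one, and the resolution you suggest is unlikely to work.

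Your Step (2) proposes establishing a termwise bound $h_i(\kK_R/v)\le h_i(\kK_R)$, and your final paragraph proposes a per-$R$ identity expressing $g_k^{(3-|R|)}(\kK_R)-g_k^{(3-|R|)}(\kK_R/v)$ as a nonnegative combinatorial count. Neither of these is the route the paper takes, and the per-$R$ identity is almost certainly false as a blanket statement: for $|R|=2$ the quantity $g_k^{(1)}(\kK_R)$ is an ordinary $g$-number, which changes sign across $k=\ltexp{d+1}$, and for $|R|=1$ the quantity $g_k^{(2)}(\bx P_i)$ is a second difference of the $h$-vector of a $d$-polytope, which can likewise be negative. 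So you cannot hope for the per-$R$ differences to be uniformly nonnegative, and the alternating signs $(-1)^{3-|R|}$ will not rescue a sum of nonnegative terms anyway. In short, you correctly diagnose the obstacle but then wave past it.

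What the paper does instead is to collapse the alternating sum over $R$, for a \emph{fixed} $i$ and a fixed vertex $v\in\VV_i$, into a \emph{single} shelling-theoretic quantity before making any comparison. Concretely (taking $i=1$): using the face-count identity \eqref{eq:fk_KR} one shows
\[
\sum_{\{1\}\subseteq R\subseteq[3]}(-1)^{3-|R|}\,g_k^{(3-|R|)}(\kK_R)=h_k(\gG),
\qquad
\sum_{\{1\}\subseteq R\subseteq[3]}(-1)^{3-|R|}\,g_k^{(3-|R|)}(\kK_R/v)=h_k(\gG/v),
\]
where $\gG$ is the disjoint union of the face sets $\fF_{[3]}$ and $\fF_{\{2,3\}}$ inside the auxiliary complex obtained from $\bx\qQ$ by deleting the star of $y_{\{2,3\}}$. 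The desired inequality then becomes the single comparison $h_k(\gG/v)\le h_k(\gG)$. Proving \emph{that} is where the shelling machinery actually enters: one builds a shelling of the pruned complex that visits the stars of $y_{\{1,2\}}$ and $y_{\{1,3\}}$ first and the facets of $\gG$ last, so that $h_k(\gG)$ and $h_k(\gG/v)$ are counts of restriction sets of size $k$ among the $\gG$-facets, and then reads off the inequality from the fact that the in-degree-$k$ dual vertices contributing to $h_k(\gG/v)$ are a subset of those contributing to $h_k(\gG)$. None of these three ingredients — the identification of the alternating sum with $h_k(\gG)$, the construction of the pruned complex and its special shelling, and the dual-graph/in-degree comparison — is present in your proposal, and the first one in particular is the idea that dissolves the sign problem you worried about. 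Without it, your outline does not close.
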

\begin{proof}
  Let us first observe that, by rearranging terms, we can rewrite
  relation \eqref{equ:LinksNoLinks} as follows:
  \begin{equation}\label{equ:LinksNoLinks-alt}
    \sum_{i=1}^{3}\sum_{v\in\VV_i}
    \sum_{\{i\}\subseteq{}R\subseteq{}[3]}(-1)^{3-|R|}\,g_k^{(3-|R|)}(\kK_R/v)
    \le
    \sum_{i=1}^{3}\sum_{v\in\VV_i}
    \sum_{\{i\}\subseteq{}R\subseteq{}[3]}(-1)^{3-|R|}\,g_k^{(3-|R|)}(\kK_R).
  \end{equation}
  Clearly, to show that relation \eqref{equ:LinksNoLinks-alt} holds, it
  suffices to prove that:
  \begin{equation}\label{equ:LinksNoLinks-sufficient}
    \sum_{\{i\}\subseteq{}R\subseteq{}[3]}(-1)^{3-|R|}\,g_k^{(3-|R|)}(\kK_R/v)
    \le
    \sum_{\{i\}\subseteq{}R\subseteq{}[3]}(-1)^{3-|R|}\,g_k^{(3-|R|)}(\kK_R),
    \quad v\in\VV_i,\quad i\in[3].
  \end{equation}
  In the rest of the proof we shall prove relation
  \eqref{equ:LinksNoLinks-sufficient} for $i=1$ and for any
  $v\in\VV_1$. The cases $i=2$ and $i=3$ are entirely similar.

  \newcommand{\sS}{\mathcal{S}}
  \newcommand{\qQp}{\bx\qQ'}
  \newcommand{\zZp}{\zZ'}

  \begin{figure*}[t]
    \begin{center}
      \includegraphics[width=0.475\textwidth,page=3]{fig/Cayley}\hfill
      \includegraphics[width=0.475\textwidth,page=4]{fig/Cayley}
    \end{center}
    \caption{Left: the $(d+1)$-complex $\qQp$ that we get from $\bx\qQ$
      be removing all faces incident to $y_{\{2,3\}}$. Right: the
      $(d+1)$-complex $\zZp$ that we get from the Cayley polytope
      $\cC_{[3]}$ of $P_1,P_2$ and $P_3$, after we: (i) have performed
      stellar subdivisions using the vertices $y_{\{1\}},y_{\{2\}}$
      and $y_{\{3\}}$ (which yields the $(d+1)$-polytope $\zZ$), and
      (ii) have removed the facet $\qQ_{\{2,3\}}$ from $\zZ$.}
    \label{fig:qQp+Z}
  \end{figure*}

  Fix a vertex $v\in\VV_1$. Let $\qQp$ be the polytopal
  $(d+1)$-complex that we get by removing from $\bx\qQ$ the faces that
  are incident to $y_{\{2,3\}}$ (see Fig.~\ref{fig:qQp+Z}(left)).
  It is straightforward to see that:
  (1) the stars of $v$ in $\qQ$ and $\qQp$ coincide (the faces
  incident to $y_{\{2,3\}}$ contain vertices from
  $\VV_{\{2,3\}}\cup\{y_{\{2\}},y_{\{3\}}\}$ only), and 
  (2) $\qQp$ is shellable.
  To verify the latter consider a shelling $\Sl{\bx\qQ}$ of
  $\bx\qQ$ that shells the star of $y_{\{2,3\}}$ in $\bx\qQ$ last; the
  shelling order that we get by removing from $\Sl{\bx\qQ}$ the facets
  that are incident to $y_{\{2,3\}}$ is clearly a shelling order for
  $\qQp$. Let $\sS_R$, $R\in\{\{1,2\},\{1,3\}\}$, be the star of $y_R$
  in $\qQp$
  (which actually coincides with the star of $y_R$ in $\bx\qQ$).
  Let $\xX$ denote the set of faces of $\qQp$ that are either faces in
  $\sS_{\{1,2\}}$ or faces in $\sS_{\{1,3\}}$, and let $\gG$ denote the
  set of faces of $\qQp$ that are either faces in $\fF_{[3]}$ or faces
  in $\fF_{\{2,3\}}$. Notice that the sets $\xX$ and $\gG$ form a
  disjoint union of the faces in $\qQp$, which implies that:
  \begin{equation}\label{equ:Qp-fk-decomp}
    f_k(\qQp)=f_k(\xX)+f_k(\gG),\qquad -1\le{}k\le{}d+1.
  \end{equation}
  Notice that $\xX$ is a $(d+1)$-complex, whereas $\gG$ is a set of
  faces with maximal dimension $d+1$.
  By applying the summation operator $\SSS_k(\cdot;d+1,0)$ to
  \eqref{equ:Qp-fk-decomp}, we immediately get the
  corresponding $h$-vector relation:
  \begin{equation}\label{equ:Qp-hk-decomp}
    h_k(\qQp)=h_k(\xX)+h_k(\gG),\qquad 0\le{}k\le{}d+2.
  \end{equation}

  We claim that there exists a specific shelling $\Sl{\qQp}$ of
  $\qQp$, which actually is an initial segment of a shelling of 
  $\bx{}\qQ$ that shells the star of $y_{\{2,3\}}$ last, 
  with the property that the corresponding shelling order
  has the facets in $\xX$ before the facets in $\gG$. We will postpone
  the proof of this claim, and we will assume for now that the claim
  holds true.
  Consider the specific shelling of $\qQp$ just mentioned, and notice
  that the facets in $\gG$ are actually the facets in
  $\fF_{[3]}$. The existence of this particular shelling $\Sl{\qQp}$
  also implies that $\xX$ is shellable, and the shelling $\Sl{\xX}$ of
  $\xX$ induced by $\Sl{\qQp}$ coincides with $\Sl{\qQp}$ as long as
  it visits the facets of $\xX$.
  As a result of this, and as long as we shell $\xX$, we get a
  contribution of $+1$ to both $h_k(\qQp)$ and $h_k(\xX)$ for every
  restriction of $\Sl{\qQp}$ of size $k$. After the shelling $\Sl{\qQp}$
  has left $\xX$, a restriction of size $k$ for $\Sl{\qQp}$ contributes
  $+1$ to $h_k(\qQp)$, does not contribute to $h_k(\xX)$ ($\xX$ has
  already been fully constructed), and, thus, by relation
  \eqref{equ:Qp-hk-decomp}, contributes $+1$ to $h_k(\gG)$.
  In other words, for this particular shelling
  $\Sl{\qQp}$ of $\qQp$, $h_k(\gG)$ counts the number of restrictions
  of size $k$ that correspond to the facets of $\qQp$ that are also
  facets of $\gG$ (and, of course, of $\fF_{[3]}$).

  The same argumentation can be applied to the links of vertices
  $v\in\VV_1$: $\qQp/v$ can be seen as the disjoint union of the sets
  $\xX/v$ and $\gG/v$, while the particular shelling $\Sl{\qQp}$ of $\qQp$
  that shells $\xX$ first, induces a particular shelling $\Sl{\qQp/v}$
  for $\qQp/v$ that shells the facets of $\bx{\qQ'/v}$ in $\xX/v$
  first. From these
  observations we immediately arrive at the following $h$-vector
  relation for $\qQp/v$, $\xX/v$ and $\gG/v$:
  \begin{equation}\label{equ:Qp-linkv-hk-decomp}
    h_k(\qQp/v)=h_k(\xX/v)+h_k(\gG/v),\qquad 0\le{}k\le{}d+1,
  \end{equation}
  from which we argue, as above, that $h_k(\gG/v)$ counts the number
  of restrictions of size $k$ for $\Sl{\qQp/v}$ that correspond to the
  facets of $\qQp/v$ that are also facets of $\gG/v$ (or
  $\fF_{[3]}/v$).

  Let us now consider the dual graph $G^\Delta(\bx\qQ)$ of $\bx\qQ$,
  oriented according to the shelling $\Sl{\bx\qQ}$, as well as the
  dual graph $G^\Delta(\bx\qQ/v)$ of $\bx\qQ/v$, also oriented
  according to the shelling $\Sl{\bx\qQ/v}$.
  We will denote by $\VV^\Delta(\yY)$ the subset of vertices of
  $G^\Delta(\bx\qQ)$ that are the duals of the facets in $\bx\qQ$ that
  belong to $\yY$, where $\yY$ stands for a subset of the set of faces
  of $\bx\qQ$.

  Since $\Sl{\bx\qQ/v}$ is induced from $\Sl{\bx\qQ}$,
  $G^\Delta(\bx\qQ/v)$ is isomorphic to the subgraph of
  $G^\Delta(\bx\qQ)$ defined over $\VV^\Delta(\str(v,\bx\qQ))$.
  Moreover, $h_k(\bx\qQ)$ counts the number of vertices of
  $\VV^\Delta(\bx\qQ)$ with in-degree equal to $k$, while $h_k(\gG)$
  counts the number of vertices of $\VV^\Delta(\gG)$ of in-degree $k$
  in $G^\Delta(\bx\qQ)$ (for the particular shelling $\Sl{\bx\qQ}$ of
  $\bx\qQ$ that we have chosen).
  Consequently, $h_k(\gG)$ counts the number of vertices of
  $\VV^\Delta(\gG)$ of in-degree $k$ in $G^\Delta(\bx\qQ)$;
  in an analogous manner, we can conclude that $h_k(\gG/v)$
  counts the number of vertices of $\VV^\Delta(\str(v,\gG))$ with
  in-degree $k$ in $G^\Delta(\bx\qQ/v)$.
  Since, however, $G^\Delta(\bx\qQ/v)$ is the subgraph of
  $G^\Delta(\bx\qQ)$ that corresponds to the face $v^{\Delta}$ of
  $G^{\Delta}(\bx\qQ)$, the number of vertices of
  $\VV^\Delta(\str(v,\gG))$ with in-degree $k$ 
  cannot exceed the number of vertices of $\VV^{\Delta}(\gG)$ with
  in-degree $k$. Hence,
  \begin{equation}\label{equ:LinkNoLink-G}
    h_k(\gG/v)\le{}h_k(\gG),\qquad 0\le{}k\le{}d+2.
  \end{equation}
  
  On the other hand, recall that $\gG$ is the disjoint union of
  $\fF_{[3]}$ and $\fF_{\{2,3\}}$. Using expressions
  \eqref{eq:fk_KR}, in conjunction with the fact that
  $\fF_S\equiv\kK_S$ for $S\in\sset_1$, we have, for all
  $-1\le{}k\le{}d+1$:
  \begin{align}
    f_k(\gG)&=f_k(\fF_{[3]})+f_k(\fF_{\{2,3\}})\notag\\
    &=\overbrace{f_k(\kK_{[3]})-\sum_{R\in\sset_2}f_k(\fF_R)-\sum_{R\in\sset_1}f_k(\kK_R)}^{f_k(\fF_{[3]})}
    +\overbrace{f_k(\kK_{\{2,3\}})-\sum_{i\in\{2,3\}}f_k(\kK_{\{i\}})}^{f_k(\fF_{\{2,3\}}}\notag\\
    &=f_k(\kK_{[3]})
    -\sum_{R\in\sset_2}\left[f_k(\kK_R)-\sum_{i\in{}R}f_k(\kK_{\{i\}})\right]
    -\sum_{R\in\sset_1}f_k(\kK_R)
    +f_k(\kK_{\{2,3\}})-\sum_{i\in\{2,3\}}f_k(\kK_{\{i\}})\notag\\
    &=f_k(\kK_{[3]})-\sum_{R\in\sset_2}f_k(\kK_R)
    +2\sum_{R\in\sset_1}f_k(\kK_R)
    -\sum_{R\in\sset_1}f_k(\kK_R)
    +f_k(\kK_{\{2,3\}})-\sum_{i\in\{2,3\}}f_k(\kK_{\{i\}})\notag\\
    &=f_k(\kK_{[3]})-\sum_{\{1\}\subset{}R\subset[3]}f_k(\kK_R)
    +f_k(\kK_{\{1\}})\notag\\
    &=\sum_{\{1\}\subseteq{}R\subseteq[3]}(-1)^{3-|R|}\,f_k(\kK_R).\label{eq:fk-G-KR}
  \end{align}
  By a similar argument, we can arrive that the following expression
  for $f_k(\gG/v)$:
  \begin{equation}\label{eq:fk-Gv-KRv}
     f_k(\gG/v)=\sum_{\{1\}\subseteq{}R\subseteq[3]}(-1)^{3-|R|}\,f_k(\kK_R/v),
     \qquad -1\le{}k\le{}d.
  \end{equation}
  By applying the summation operators $\SSS_k(\cdot;d+1,0)$ and
  $\SSS_k(\cdot;d,0)$ to relations \eqref{eq:fk-G-KR} and
  \eqref{eq:fk-Gv-KRv}, respectively, we get the corresponding
  $h$-vector relations:
  \begin{equation}\label{eq:hk-G-KR+links}
    \begin{aligned}
      h_k(\gG)&=
      \sum_{\{1\}\subseteq{}R\subseteq[3]}(-1)^{3-|R|}\,g_k^{(3-|R|)}(\kK_R),
      \qquad 0\le{}k\le{}d+2,\\
      h_k(\gG/v)&=
      \sum_{\{1\}\subseteq{}R\subseteq[3]}(-1)^{3-|R|}\,g_k^{(3-|R|)}(\kK_R/v),
      \qquad 0\le{}k\le{}d+1.
    \end{aligned}
  \end{equation}
  Relation \eqref{equ:LinksNoLinks-sufficient} (for $i=1$) follows by
  substituting the expressions for $h_k(\gG)$ and $h_k(\gG/v)$ from
  \eqref{eq:hk-G-KR+links} in \eqref{equ:LinkNoLink-G}.

  To finish our proof, it remains to establish our claim that
  there exists a specific shelling $\Sl{\qQp}$ of
  $\qQp$ with the property that the facets of $\xX$ appear in the shelling
  before the facets of $\gG$.
  Let us start with some definitions:
  we denote by $\zZ$ the $(d+1)$-complex we
  get by performing the stellar subdivisions on $\cC_{[3]}$ using the
  vertices $y_R$, $R\in\sset_1$
  (see also Fig.~\ref{fig:qQp+Z}(right)),
  and by $\qQ_R$, $R\in\sset_2$ the
  $(d+1)$-complex that we get by performing stellar subdivisions on
  the non-simplicial proper faces of $\cC_R$, namely the faces
  $\cC_S$, $\emptyset{}\subset{}S\subset{}R$. Notice that $\qQ_R$,
  $R\in\sset_2$, is nothing but a facet of $\zZ$, while $\bx\qQ_R$ is
  actually the link of $y_R$ in $\bx\qQ$. In fact, we can
  separate the facets of $\zZ$ in two categories; they are either
  (1) facets of the form $\qQ_R$, $R\in\sset_2$, which are
  non-simplicial, or 
  (2) facets in $\gG$ (or $\fF_{[3]}$), which are simplicial.
  Moreover, notice that $\str(y_R,\zZ)$, $R\in\sset_1$, consists of
  the faces belonging to the two facets $\qQ_S$,
  $R\subset{}S\subset[3]$ of $\zZ$.
  Since stellar subdivisions produce polytopal complexes \cite{EwSh74},
  $\zZ$ is polytopal and, thus, shellable. In fact, there
  exists a particular (line) shelling $\Sl{\zZ}$ of $\zZ$ in which the
  facets of $\str(y_{\{1\}},\zZ)$ appear first, while
  $\qQ_{\{2,3\}}$ is the last facet in $\Sl{\zZ}$. More precisely,
  for this particular shelling of $\zZ$, the two facets
  $\qQ_{\{1,2\}}$ and $\qQ_{\{1,3\}}$ appear first, followed by the
  facets in $\gG$, which, in turn, are followed by the facet
  $\qQ_{\{2,3\}}$.

  Let us call $\zZp$ the $(d+1)$-complex we get by removing
  $\qQ_{\{2,3\}}$ from $\zZ$. The complex $\zZp$ is shellable (it
  follows from the fact that $\Sl{\zZ}$ has $\qQ_{\{2,3\}}$ as its
  last facet), while the particular line shelling $\Sl{\zZ}$ of
  $\zZ$ described above, yields a shelling $\Sl{\zZp}$ for $\zZp$ in
  which the facets $\qQ_{\{1,2\}}$ and $\qQ_{\{1,3\}}$ appear first,
  followed by the facets in $\gG$.
  Notice that if we perform stellar subdivisions on the
  two non-simplicial facets $\qQ_{\{1,2\}}$ and $\qQ_{\{1,3\}}$ of
  $\zZp$ (using the vertices $y_{\{1,2\}}$ and $y_{\{1,3\}}$), we
  arrive at the simplicial $(d+1)$-complex $\qQp$ described earlier.
  %
  Furthermore, from the particular shelling $\Sl{\zZp}$ of $\zZp$
  described above, we may obtain the sought-for shelling for $\qQp$
  that shells $\xX$ first and $\gG$ last. To see this, notice that
  given any shelling order for $\bx{}P_i$, $i=1,2,3$, we may construct
  a shelling for $\qQ_R$, $R\in\{\{1,2\},\{1,3\}\}$, that:
  (1) shells $\st(y_{\{1\}},\qQ_R)$ first,
  (2) shells $\st(y_{R\sm\{i\}},\qQ_R)$ last, and
  (3) the shelling order of the facets in both stars is the order
  implied by the shellings of the boundary complexes $\bx{}P_i$ and
  $\bx{}P_{R\sm\{i\}}$.
  This implies that if we choose shelling orders for $\bx\qQ_{\{1,2\}}$ and
  $\bx\qQ_{\{1,3\}}$ that respect a common shelling order for
  $\bx{}P_1$, we can replace the facets $\qQ_{\{1,2\}}$ and
  $\qQ_{\{1,3\}}$ in $\Sl{\zZp}$ by the facets in
  $\str(y_{\{1,2\}},\qQp)$ and $\str(y_{\{1,3\}},\qQp)$, respectively,
  (the shelling orders of $\bx\qQ_{\{1,2\}}$ and $\bx\qQ_{\{1,3\}}$ are
  ``inherited'' in the shelling orders for $\str(y_{\{1,2\}},\qQp)$ and
  $\str(y_{\{1,3\}},\qQp)$) and arrive at a shelling order for
  $\qQp$ with the desired property.
\end{proof}

\let\qQp\undefined
\let\zZp\undefined

Using inequality \eqref{equ:LinksNoLinks} in Lemma
\ref{lem:LinksNoLinks}, we arrive at the following recurrence relation
for the elements of $\mb{h}(\fF_{[3]})$; its proof may be found in Section
\ref{app:recur} in Appendix \ref{app:omitted}.

\begin{lemma}\label{lem:hkWrecur}
  For all $0\le{}k\le{}d+1$, we have:
  \begin{equation}\label{r8}
    h_{k+1}(\fF_{[3]}) \leq \frac{n_{[3]}-d-2+k}{k+1} h_k(\fF_{[3]})
    + \sum_{i=1}^{3}\frac{n_i}{k+1}g_k(\fF_{[3]\sm\{i\}}).
  \end{equation}
\end{lemma}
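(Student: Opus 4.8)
The plan is to derive the recurrence \eqref{r8} by combining the exact identity in Lemma~\ref{lem:recur-relation-F3-wrt-K} with the inequality in Lemma~\ref{lem:LinksNoLinks}, and then to massage the resulting ``no-links'' right-hand side into the stated closed form. Starting from \eqref{equ:recur-relation-F3-wrt-K}, the left-hand side is already exactly $(k+1)h_{k+1}(\fF_{[3]})+(d+2-k)h_k(\fF_{[3]})$, which is the left-hand side of \eqref{r8} up to the division by $k+1$ and the transposition of the $h_k(\fF_{[3]})$ term. Applying \eqref{equ:LinksNoLinks} replaces the sum $\sum_{\emptyset\subset R\subseteq[3]}(-1)^{3-|R|}\sum_{v\in\VV_R}g_k^{(3-|R|)}(\kK_R/v)$ by $\sum_{\emptyset\subset R\subseteq[3]}(-1)^{3-|R|}\sum_{v\in\VV_R}g_k^{(3-|R|)}(\kK_R)$; crucially, the inner summand in this upper bound no longer depends on $v$, so each $\sum_{v\in\VV_R}$ contributes a factor $f_0(\VV_R)=n_R=\sum_{i\in R}n_i$. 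Thus the inequality becomes
\begin{equation*}
  (k+1)h_{k+1}(\fF_{[3]})+(d+2-k)h_k(\fF_{[3]})\le
  \sum_{\emptyset\subset R\subseteq[3]}(-1)^{3-|R|}\,n_R\,g_k^{(3-|R|)}(\kK_R).
\end{equation*}

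**Next I would** evaluate the right-hand side using the expansion \eqref{equ:hkKR} of $h_k(\kK_R)$ in terms of the $g^{(|R|-|S|)}(\fF_S)$, together with the definition \eqref{eq:gvec} of the $m$-order $g$-vector (so that $g_k^{(3-|R|)}(\kK_R)$ is itself a finite difference of $h_k(\kK_R)$). The key combinatorial point is that $g_k^{(3-|R|)}(\kK_R)$, expanded out, is a signed sum of terms $g_k^{(3-|S|)}(\fF_S)$ over $\emptyset\subset S\subseteq R$. Collecting the total coefficient of each $g_k^{(3-|S|)}(\fF_S)$ across all $R\supseteq S$ with the sign $(-1)^{3-|R|}$ and weight $n_R$, I expect massive cancellation: the $\fF_{[3]}$ term should survive with coefficient exactly $n_{[3]}$, producing the term $n_{[3]}\,g_k^{(0)}(\fF_{[3]})=n_{[3]}h_k(\fF_{[3]})$; the pair terms $\fF_{[3]\sm\{i\}}$ (with $|S|=2$) should survive with coefficient $n_i$, producing $\sum_{i=1}^3 n_i\,g_k^{(1)}(\fF_{[3]\sm\{i\}})=\sum_{i=1}^3 n_i\,g_k(\fF_{[3]\sm\{i\}})$; and the singleton terms $\fF_{\{i\}}=\bx P_i$ should cancel entirely. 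Writing $g_k^{(0)}(\fF_{[3]})=h_k(\fF_{[3]})$, the inequality becomes
\begin{equation*}
  (k+1)h_{k+1}(\fF_{[3]})+(d+2-k)h_k(\fF_{[3]})\le
  n_{[3]}\,h_k(\fF_{[3]})+\sum_{i=1}^3 n_i\,g_k(\fF_{[3]\sm\{i\}}),
\end{equation*}
and solving for $h_{k+1}(\fF_{[3]})$ yields exactly \eqref{r8} after noting $-(d+2-k)+n_{[3]}=n_{[3]}-d-2+k$ and dividing by $k+1>0$.

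**The main obstacle** is the bookkeeping in the cancellation step, i.e.\ verifying that the signed, $n_R$-weighted sum of the nested $g$-difference operators applied to \eqref{equ:hkKR} really does collapse to precisely $n_{[3]}h_k(\fF_{[3]})+\sum_i n_i g_k(\fF_{[3]\sm\{i\}})$ with the singleton contributions vanishing. Concretely, one needs to track, for each fixed $S$, the coefficient $\sum_{S\subseteq R\subseteq[3]}(-1)^{3-|R|}\,n_R\cdot\big(\text{coefficient of }g_k^{(3-|S|)}(\fF_S)\text{ inside }g_k^{(3-|R|)}(\kK_R)\big)$, and to use Lemma~\ref{lem:gvec_as_sum} to turn the nested differences into binomial sums so that the alternating-sum identities $\sum_j(-1)^j\binom{m}{j}=0$ (for $m\ge1$) do the work. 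Writing $n_R=\sum_{i\in R}n_i$ and grouping by which $n_i$ a term carries is what makes the singleton and cross terms separate cleanly. Because this is precisely the kind of finite-difference identity whose verification is routine but lengthy, I would carry out the detailed computation in the appendix (as the excerpt indicates is done in Section~\ref{app:recur}), and in the main text present only the two displayed inequalities above together with the statement that the cancellation yields \eqref{r8}.
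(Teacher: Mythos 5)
Your plan is correct and follows essentially the same route as the paper's own proof in Appendix~\ref{app:recur}: combine Lemma~\ref{lem:recur-relation-F3-wrt-K} with Lemma~\ref{lem:LinksNoLinks}, pull out the factor $n_R$ from $\sum_{v\in\VV_R}$, and then use \eqref{equ:hkKR} to express the $g_k^{(3-|R|)}(\kK_R)$ in terms of the $g_k^{(3-|S|)}(\fF_S)$ so that the singleton contributions cancel and the remaining coefficients collapse to $n_{[3]}$ and $n_i$. One small imprecision: the expansion $g_k^{(3-|R|)}(\kK_R)=\sum_{\emptyset\subset S\subseteq R}g_k^{(3-|S|)}(\fF_S)$ obtained by differencing \eqref{equ:hkKR} has all coefficients $+1$ (not a ``signed sum''); the signs that drive the cancellation come solely from the outer $(-1)^{3-|R|}$, and indeed $\sum_{S\subseteq R\subseteq[3]}(-1)^{3-|R|}n_R$ gives $n_{[3]}$, $n_i$, $0$ for $|S|=3,2,1$ as you predict.
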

\begin{proof}[Sketch of proof]
Using Lemma~\ref{lem:LinksNoLinks}, we can bound the left hand side of relation
\eqref{equ:recur-relation-F3-wrt-K} by the right hand side of relation
\eqref{equ:LinksNoLinks}, which involves $g$-vectors, or various
orders, of the complexes $\kK_R$, where $\emptyset\subset R\subseteq [3]$.
These can be substituted by their equal values from relation \eqref{equ:hkKR}
with $R=[3]$ and for all $R\in\sset_2$.
This gives an inequality involving $h$-vectors and $g$-vectors of $\fF_{[3]}$ 
and $\fF_R,\,R\in\sset_2$, which simplifies to relation \eqref{r8}.
\end{proof}
\section{Upper bounds}
\label{sec:ub}
In this section we establish upper bounds for the number of
$(k+2)$-faces of $\fF_{[3]}$, $0\le{}k\le{}d-1$, which immediately
yield upper bounds for the number of $k$-faces of $P_1+P_2+P_3$. Our
starting point is the recurrence relation \eqref{r8}.
We shall first prove a few lemmas that establish bounds for
the $g$-vector of $\fF_R$, $R\in\sset_2$, and the $h$-vectors of $\fF_{[3]}$
and $\kK_{[3]}$.

\begin{lemma}\label{lem:gkFbound}
  Let $R$ be a non-empty subset of $[3]$ of cardinality $2$.
  Then, for all $0\le{}k\le{}d+2$, we have:
  \begin{equation}\label{equ:gkFbound}
    g_k(\fF_{R})\le{}
    \sum_{\emptyset\subset{}S\subseteq{}R}(-1)^{|S|}\binom{n_S-d-3+k}{k}.
  \end{equation}
  Equality holds for some $k$, where $0\le{}k\le{}\lexp{d+1}$, if and only if
  $f_{l-1}(\fF_R)=\sum_{\emptyset\subset{}S\subseteq{}R}
  (-1)^{|S|}\binom{n_S}{l}$, for all $0\le{}l\le{}k$.
\end{lemma}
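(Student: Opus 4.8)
The plan is to treat Lemma~\ref{lem:gkFbound} as the two‑summand instance of the programme of Sections~\ref{sec:ds}--\ref{sec:ub}, carried out one dimension lower --- which is exactly the content of \cite{kt-mnfms-12} --- and to recapitulate the (short) argument. Fix $R=\{i,j\}\in\sset_2$; then $\fF_R$ is the set of faces of the $(d+1)$‑polytope $\cC_R$ meeting both $\VV_i$ and $\VV_j$, and by the Cayley trick $f_k(\fF_R)=f_{k-1}(P_i+P_j)$. All the tools we need are available in the $|R|=2$ form: the Dehn--Sommerville‑like equation $h_{d+1-k}(\fF_R)=h_k(\kK_R)$ (cf.~\cite[rel.~(3.10)]{kt-mnfms-12}, already used in Lemma~\ref{lem:hkK-hkQ-dehn-sommervile}); the recurrence
\[
  h_{k+1}(\fF_R)\ \le\ \tfrac{n_R-d-1+k}{k+1}\,h_k(\fF_R)+\tfrac{n_i}{k+1}\,g_k(\bx{P_j})+\tfrac{n_j}{k+1}\,g_k(\bx{P_i}),\qquad 0\le k\le d,
\]
the two‑summand version of Lemma~\ref{lem:hkWrecur} (see \cite{kt-mnfms-12}); and the classical Upper Bound Theorem for the simplicial $(d-1)$‑spheres $\bx{P_\ell}$ ($\ell\in R$; the $P_\ell$ are simplicial under the ``as simplicial as possible'' convention), in the form $g_k(\bx{P_\ell})\le\binom{n_\ell-d-2+k}{k}$ for $0\le k\le\lexp d$ (and $g_k(\bx{P_\ell})\le0$ beyond that range, by symmetry of the $h$‑vector and the generalised lower bound theorem).

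I would prove \eqref{equ:gkFbound} by induction on $k$ for $0\le k\le\lexp{d+1}$, writing $\Phi_k:=\sum_{\emptyset\subset S\subseteq R}(-1)^{|S|}\binom{n_S-d-3+k}{k}$ for the claimed bound. The base $k=0$ is immediate: $g_0(\fF_R)=h_0(\fF_R)=f_{-1}(\fF_R)=-1=\sum_{\emptyset\subset S\subseteq R}(-1)^{|S|}=\Phi_0$. For the step $k\to k+1$, substitute the recurrence into $g_{k+1}(\fF_R)=h_{k+1}(\fF_R)-h_k(\fF_R)$, replace $h_k(\fF_R)$ by $\sum_{l=0}^{k}g_l(\fF_R)$, and bound each $g_l(\fF_R)$ by the inductive hypothesis $\Phi_l$ and each $g_k(\bx{P_\ell})$ by the Upper Bound Theorem. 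Collapsing with the hockey‑stick identity $\sum_{l=0}^{k}\binom{N-d-3+l}{l}=\binom{N-d-2+k}{k}$ and with $\binom{M}{k+1}=\tfrac{M-k}{k+1}\binom{M}{k}$, the $n_\ell$‑weighted contributions of the $\bx{P_\ell}$‑terms cancel against the matching summands coming from $h_k(\fF_R)$ (this uses only $n_R=n_i+n_j$), and the bound reduces \emph{exactly} to $\Phi_{k+1}$. For $\lexp{d+1}<k\le d+2$ I would extend \eqref{equ:gkFbound} precisely as Section~\ref{sec:ub} does for $\fF_{[3]}$: via $h_{d+1-k}(\fF_R)=h_k(\kK_R)$, relation \eqref{equ:hkKR}, and the bounds just obtained.

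For the equality statement the linchpin is the elementary identity $\Phi_0+\cdots+\Phi_l=H_l:=\sum_{\emptyset\subset S\subseteq R}(-1)^{|S|}\binom{n_S-d-2+l}{l}$, which yields
\begin{align*}
  g_l(\fF_R)=\Phi_l\ (0\le l\le k)
  &\iff h_l(\fF_R)=H_l\ (0\le l\le k)\\
  &\iff f_{l-1}(\fF_R)=\textstyle\sum_{\emptyset\subset S\subseteq R}(-1)^{|S|}\binom{n_S}{l}\ (0\le l\le k),
\end{align*}
the first equivalence being the invertible triangular passage between $\mb g$ and $\mb h$, and the second the invertible $\mb f\leftrightarrow\mb h$ correspondence \eqref{eq:hvectorY} together with the standard identity $\sum_{m=0}^{l}(-1)^{l-m}\binom{d+1-m}{d+1-l}\binom{n}{m}=\binom{n-d-2+l}{l}$. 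The ``if'' direction is then immediate: the face‑count hypothesis forces $h_l(\fF_R)=H_l$ for $l\le k$, hence $g_k(\fF_R)=H_k-H_{k-1}=\Phi_k$. For ``only if'', if $g_k(\fF_R)=\Phi_k$ then, reading the inductive step backwards and using that every coefficient in the recurrence is positive ($\tfrac{n_R-d-2}{k+1}>0$ since $n_R\ge 2d+2$, and $\tfrac{n_i}{k+1},\tfrac{n_j}{k+1}>0$), every inequality invoked in the derivation of the bound must be saturated; in particular $g_l(\fF_R)=\Phi_l$ for all $l\le k$, which by the chain above is the asserted condition.

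I expect the main obstacle to be the ``only if'' half: one must argue carefully that saturation of \eqref{equ:gkFbound} at level $k$ does force saturation of \emph{every} step used in the induction --- the recurrence is an inequality (obtained, via the $|R|=2$ analogue of Lemma~\ref{lem:LinksNoLinks}, from a ``links $\le$ no links'' estimate), and the Upper Bound Theorem inputs are likewise inequalities, so one needs the (mild but real) observation that the positivity of the coefficients makes each of these saturations \emph{inherited} rather than needing to be characterised in its own right --- and one has to keep straight the low‑index conventions of the partial complex $\fF_R$ ($f_{-1}(\fF_R)=-1$, $f_0(\fF_R)=0$, which already pin down $g_0$ and $g_1$ and make the case $k\le1$ automatic). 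The forward (inequality) direction, and the extension to $\lexp{d+1}<k\le d+2$, are routine: the former is the binomial bookkeeping above, the latter mimics verbatim the corresponding passage in Section~\ref{sec:ub}.
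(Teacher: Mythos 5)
Your proposal is correct and takes essentially the same route as the paper: the paper also subtracts $h_{k-1}(\fF_R)$ from the two-summand recurrence of \cite[Lemma~3.2]{kt-mnfms-12} to get $g_k(\fF_R)\le\tfrac{n_R-d-2}{k}h_{k-1}(\fF_R)+\sum_{i\in R}\tfrac{n_{R\sm\{i\}}}{k}g_{k-1}(\bx{P_i})$, plugs in the upper bounds for $h_{k-1}(\fF_R)$ and $g_{k-1}(\bx{P_i})$, performs the same binomial cancellation, and proves the equality claim by the same saturation/descent argument hinging on the positivity of $\tfrac{n_R-d-2}{k}$. Two small remarks: your telescoping $h_k(\fF_R)=\sum_{l\le k}g_l(\fF_R)$ plus hockey-stick is just an equivalent restatement of the paper's direct appeal to the $h$-vector bound of \cite[Lemma~3.3]{kt-mnfms-12}, and the separate Dehn--Sommerville-type argument you sketch for $\lexp{d+1}<k\le d+2$ is unnecessary, since the recurrence and the input bounds $g_k(\bx{P_i})\le\binom{n_i-d-2+k}{k}$ and $h_k(\fF_R)\le\sum_{\emptyset\subset S\subseteq R}(-1)^{|S|}\binom{n_S-d-2+k}{k}$ all remain valid (if not sharp) over the whole range, so the single induction already covers $0\le k\le d+2$.
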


\begin{proof}
  The bound clearly holds, as equality, for $k=0$.
  For $k\ge{}1$, from \cite[Lemma 3.2]{kt-mnfms-12} we have:
  \begin{equation}\label{equ:hkFrec}
    h_k(\fF_R)\le{}\tfrac{n_R-d-2+k}{k}h_{k-1}(\fF_R)
    +\sum_{\emptyset\subset{}S\subset{}R}\tfrac{n_{R\sm{}S}}{k}g_{k-1}(\fF_S).
  \end{equation}
  Subtracting $h_{k-1}(\fF_R)$ from both sides of \eqref{equ:hkFrec} we
  get:
  \begin{equation}\label{equ:gkFrecur}
    g_k(\fF_R)\le{}\tfrac{n_R-d-2}{k}h_{k-1}(\fF_R)
    +\sum_{\emptyset\subset{}S\subset{}R}\tfrac{n_{R\sm{}S}}{k}g_{k-1}(\fF_S).
  \end{equation}
  Using now the upper bounds for $h_{k-1}(\fF_R)$, $g_{k-1}(\fF_S)$,
  $\emptyset\subset{}S\subset{}R$, and noting that
  $n_R-d-2\ge{}2(d+1)-d-2=d>0$, we deduce, for any $k\ge{}1$:
  \begin{align*}
    g_k(\fF_R)&\le{}\tfrac{n_R-d-2}{k}
    \sum_{\emptyset\subset{}S\subseteq{}R}(-1)^{|S|}\tbinom{n_S-d-3+k}{k-1}
    +\sum_{\emptyset\subset{}S\subset{}R}\tfrac{n_{R\sm{}S}}{k}
    \tbinom{n_S-d-3+k}{k-1}\\
    &=\tfrac{n_R-d-2}{k}\tbinom{n_R-d-3+k}{k-1}
    -\sum_{\emptyset\subset{}S\subset{}R}\tfrac{n_R-d-2}{k}
    \tbinom{n_S-d-3+k}{k-1}
    +\sum_{\emptyset\subset{}S\subset{}R}\tfrac{n_{R\sm{}S}}{k}
    \tbinom{n_S-d-3+k}{k-1}\\
    &=\tfrac{n_R-d-2+k}{k}\tbinom{n_R-d-3+k}{k-1}-\tbinom{n_R-d-3+k}{k-1}
    -\sum_{\emptyset\subset{}S\subset{}R}\tfrac{n_R-d-2-n_{R\sm{}S}}{k}
    \tbinom{n_S-d-3+k}{k-1}\\
    &=\tbinom{n_R-d-2+k}{k}-\tbinom{n_R-d-3+k}{k-1}
    -\sum_{\emptyset\subset{}S\subset{}R}\tfrac{n_S-d-2}{k}
    \tbinom{n_S-d-3+k}{k-1}\\
    &=\tbinom{n_R-d-3+k}{k}
    -\sum_{\emptyset\subset{}S\subset{}R}\left[\tfrac{n_S-d-2+k}{k}
    \tbinom{n_S-d-3+k}{k-1}-\tbinom{n_S-d-3+k}{k-1}\right]\\
    &=\tbinom{n_R-d-3+k}{k}
    -\sum_{\emptyset\subset{}S\subset{}R}
    \left[\tbinom{n_S-d-2+k}{k}-\tbinom{n_S-d-3+k}{k-1}\right]\\
    &=\tbinom{n_R-d-3+k}{k}
    -\sum_{\emptyset\subset{}S\subset{}R}\tbinom{n_S-d-3+k}{k}\\
    &=\sum_{\emptyset\subset{}S\subseteq{}R}(-1)^{|S|}\tbinom{n_S-d-3+k}{k}.
  \end{align*}
  
  We focus now on the equality claim. Suppose first that
  $f_{l-1}(\fF_R)=\sum_{\emptyset\subset{}S\subseteq{}R}
  (-1)^{|S|}\binom{n_S}{l}$, for all $0\le{}l\le{}k$. Then, by
  \cite[Lemma 3.3]{kt-mnfms-12},
  $h_\lambda(\fF_R)=\sum_{\emptyset\subset{}S\subseteq{}R}
  (-1)^{|S|}\binom{n_S-d-2+\lambda}{\lambda}$,
  for $\lambda=k-1,k$, which gives:
  \begin{align*}
    g_k(\fF_R)&=h_k(\fF_R)-h_{k-1}(\fF_R)=
    \sum_{\emptyset\subset{}S\subseteq{}R}(-1)^{|S|}\tbinom{n_S-d-2+k}{k}
    -\sum_{\emptyset\subset{}S\subseteq{}R}(-1)^{|S|}\tbinom{n_S-d-2+k-1}{k-1}\\
    &=\sum_{\emptyset\subset{}S\subseteq{}R}(-1)^{|S|}
    \left[\tbinom{n_S-d-2+k}{k}-\tbinom{n_S-d-2+k-1}{k-1}\right]
    =\sum_{\emptyset\subset{}S\subseteq{}R}(-1)^{|S|}\tbinom{n_S-d-3+k}{k}.
  \end{align*}

  Suppose now that $g_k(\fF_R)=
  \sum_{\emptyset\subset{}S\subseteq{}R}(-1)^{|S|}\tbinom{n_S-d-3+k}{k}$.
  By relation \eqref{equ:gkFrecur}, we conclude that
  $h_{k-1}(\fF_{R})$ must be equal to its upper bound (cf. \cite[Lemma
  3.3]{kt-mnfms-12}), since, otherwise,
  $g_k(\fF_{R})$ would not be maximal, which contradicts our assumption
  on the value of $g_k(\fF_{R})$. This gives:
  \begin{align*}
    h_k(\fF_R)&=g_k(\fF_R)+h_{k-1}(\fF_R)
    =\sum_{\emptyset\subset{}S\subseteq{}R}(-1)^{|S|}\tbinom{n_S-d-3+k}{k}
    +\sum_{\emptyset\subset{}S\subseteq{}R}(-1)^{|S|}\tbinom{n_S-d-2+k-1}{k-1}\\
    &=\sum_{\emptyset\subset{}S\subseteq{}R}(-1)^{|S|}
    \left[\tbinom{n_S-d-2+k-1}{k}+\tbinom{n_S-d-2+k-1}{k-1}\right]
    =\sum_{\emptyset\subset{}S\subseteq{}R}(-1)^{|S|}\tbinom{n_S-d-2+k}{k}.
  \end{align*}
  Now the fact that $h_k(\fF_{R})$ is maximal, implies that
  $h_l(\fF_{R})$ must be equal to its maximal value for all $0\le{}l<k$.
  To see this suppose that $h_l(\fF_{R})$ is not maximal for some
  $l$, with $0\le{}l<k$, and among all such $l$ choose the largest
  one. Then, Lemmas 3.2 and 3.3 in \cite{kt-mnfms-12} imply that
  $h_{l+1}(\fF_{R})$ cannot be maximal, which contradicts the
  maximality of $l$.
  Summarizing, we deduce that if $g_k(\fF_{R})$ is equal to its upper
  bound in \eqref{equ:gkFbound}, so is $h_l(\fF_{R})$ for all
  $0\le{}l\le{}k$. By Lemma 3.3 in \cite{kt-mnfms-12}, this implies
  that $f_{l-1}(\fF_R)=\sum_{\emptyset\subset{}S\subseteq{}R}
  (-1)^{|S|}\binom{n_S}{l}$, for all $0\le{}l\le{}k$.
\end{proof}

\begin{lemma}\label{lem:hkWbound}
  For all $0\le{}k\le{}d+2$, we have:
  \begin{equation}\label{equ:hkWbound}
    h_k(\fF_{[3]})\le{}\sum_{\emptyset\subset{}S\subseteq[3]}
    (-1)^{3-|S|}\binom{n_S-d-3+k}{k},\qquad{}n_S = \sum_{i\in{}S}n_i.
  \end{equation}
  Equality holds for some 
  $0\le{}k\le{} \lexp{d+2}$, if and only if
  $f_{l-1}(\fF_{[3]})=\sum_{\emptyset\subset{}S\subseteq{}[3]}
  (-1)^{3-|S|}\binom{n_S}{l}$,
  for all $0\le{}l\le{}k$.
\end{lemma}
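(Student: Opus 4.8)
The plan is to induct on $k$, using the recurrence \eqref{r8} of Lemma~\ref{lem:hkWrecur} as the driving engine and Lemma~\ref{lem:gkFbound} to control the cross terms $g_k(\fF_R)$, $R\in\sset_2$; the whole argument is the three‑summand analogue of the proof of Lemma~\ref{lem:gkFbound} itself. Write $B_k$ for the right‑hand side of \eqref{equ:hkWbound}. The base case $k=0$ is immediate: $\binom{m}{0}=1$ for every integer $m$, so $B_0=\sum_{\emptyset\subset S\subseteq[3]}(-1)^{3-|S|}=\binom{3}{1}-\binom{3}{2}+\binom{3}{3}=1$, while \eqref{eq:hvectorY} at $k=0$ gives $h_0(\fF_{[3]})=f_{-1}(\fF_{[3]})=(-1)^{|[3]|-1}=1$ (recall the convention $f_{-1}(\fF_R)=(-1)^{|R|-1}$ introduced right after \eqref{eq:fk_KR}); so \eqref{equ:hkWbound} holds as an equality at $k=0$, consistently with the trivially true $f$‑vector condition $f_{-1}(\fF_{[3]})=\sum_{\emptyset\subset S\subseteq[3]}(-1)^{3-|S|}\binom{n_S}{0}=1$.

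For the inductive step I would assume $h_{k-1}(\fF_{[3]})\le B_{k-1}$ and replace $k$ by $k-1$ in \eqref{r8}, obtaining
\[
  h_k(\fF_{[3]})\ \le\ \frac{n_{[3]}-d-3+k}{k}\,h_{k-1}(\fF_{[3]})
  +\sum_{i=1}^{3}\frac{n_i}{k}\,g_{k-1}\big(\fF_{[3]\sm\{i\}}\big).
\]
Since $n_{[3]}\ge 3(d+1)$ forces $n_{[3]}-d-3+k\ge 2d+1>0$ and each $n_i>0$, every coefficient on the right is strictly positive, so one may substitute the inductive bound $B_{k-1}$ for $h_{k-1}(\fF_{[3]})$ and, since each $[3]\sm\{i\}$ has cardinality two, the bound of Lemma~\ref{lem:gkFbound} for each $g_{k-1}(\fF_{[3]\sm\{i\}})$. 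What then remains is the purely combinatorial claim that the resulting expression equals $B_k$; I expect this to be the most calculation‑heavy point, although it is routine. Its verification follows the displayed chain in the proof of Lemma~\ref{lem:gkFbound} almost verbatim, with $[3]$ in place of $R$ and one extra sign layer, the working ingredients being the absorption identity $\tfrac{m+k}{k}\binom{m+k-1}{k-1}=\binom{m+k}{k}$, Pascal's rule, and the relations $n_{[3]}-n_i=n_{[3]\sm\{i\}}$, which make the $\tfrac{n_i}{k}$‑weighted contributions telescope against part of the $\tfrac{n_{[3]}-d-3+k}{k}$‑weighted term. This establishes \eqref{equ:hkWbound} for all $0\le k\le d+2$.

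For the ``if'' half of the equality clause, suppose $f_{l-1}(\fF_{[3]})=\sum_{\emptyset\subset S\subseteq[3]}(-1)^{3-|S|}\binom{n_S}{l}$ for all $0\le l\le k$. Since $\dim\fF_{[3]}=d+1$, relation \eqref{eq:hvectorY} reads $h_k(\fF_{[3]})=\sum_{i=0}^{k}(-1)^{k-i}\binom{d+2-i}{d+2-k}f_{i-1}(\fF_{[3]})$, which depends only on $f_{-1}(\fF_{[3]}),\dots,f_{k-1}(\fF_{[3]})$; substituting the hypothesised values and applying, termwise in $S$, the elementary identity $\sum_{i=0}^{k}(-1)^{k-i}\binom{d+2-i}{d+2-k}\binom{N}{i}=\binom{N-d-3+k}{k}$ with $N=n_S$, then summing with the weights $(-1)^{3-|S|}$, yields $h_k(\fF_{[3]})=B_k$.

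For the ``only if'' half, suppose $h_k(\fF_{[3]})=B_k$ for some $0\le k\le\lexp{d+2}$. Because every coefficient in the chain of inequalities above is strictly positive, equality at level $k$ forces simultaneously $h_{k-1}(\fF_{[3]})=B_{k-1}$ and $g_{k-1}(\fF_{[3]\sm\{i\}})$ equal to its Lemma~\ref{lem:gkFbound} bound for every $i$; the latter, via the equality clause of Lemma~\ref{lem:gkFbound} (applicable since $k-1\le\lexp{d+2}-1=\lexp{d}\le\lexp{d+1}$), also pins down the $f$‑vectors of the $\fF_R$, $R\in\sset_2$ --- a fact I would record for later use, though it is not needed for the present statement. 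Iterating the conclusion $h_{k-1}(\fF_{[3]})=B_{k-1}$ downward gives $h_l(\fF_{[3]})=B_l$ for all $0\le l\le k$; since the map $(f_{-1}(\fF_{[3]}),\dots,f_{k-1}(\fF_{[3]}))\mapsto(h_0(\fF_{[3]}),\dots,h_k(\fF_{[3]}))$ induced by \eqref{eq:hvectorY} is unitriangular, hence injective, the ``if'' half identifies its unique preimage, namely $f_{l-1}(\fF_{[3]})=\sum_{\emptyset\subset S\subseteq[3]}(-1)^{3-|S|}\binom{n_S}{l}$ for all $0\le l\le k$. To summarise: the only genuinely delicate step is the binomial bookkeeping in the inductive step, and its strict‑positivity structure is precisely what powers the forcing in the ``only if'' direction; all the rest is routine and patterned on Lemma~\ref{lem:gkFbound}.
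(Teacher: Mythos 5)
Your proof follows essentially the same path as the paper's: the same base case at $k=0$, the same inductive step driven by the recurrence \eqref{r8} and Lemma~\ref{lem:gkFbound} with the same strict-positivity observation about the coefficients, and the same forward computation (via the identity relating $\sum_i(-1)^{k-i}\binom{d+2-i}{d+2-k}\binom{N}{i}$ to $\binom{N-d-3+k}{k}$) for the ``if'' half of the equality clause. The only departure is in the ``only if'' half, where after iterating the equality downward you invoke the unitriangularity, hence injectivity, of the map $(f_{-1},\dots,f_{k-1})\mapsto(h_0,\dots,h_k)$ to read off the $f$-values by uniqueness of preimage, whereas the paper explicitly inverts the transform using the combinatorial identity \cite[eq.~(5.26)]{gkp-cm-89}; both are valid and the step is equivalent.
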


\begin{proof}
  We are going to prove relation \eqref{equ:hkWbound} by induction on
  $k$. The result clearly holds for $k=0$, since
  \begin{equation*}
    h_0(\fF_{[3]})=1=1-3+3=\tbinom{n_{[3]}-d-3}{0}
    -\sum_{i=1}^3\tbinom{n_{[3]\sm\{i\}}-d-3}{0}+\sum_{i=1}^3\tbinom{n_i-d-3}{0}.
  \end{equation*}
  
  Suppose the bound holds for some $k\ge{}0$. We will show that it
  holds for $k+1$. Using relation \eqref{r8}, Lemma \ref{lem:gkFbound},
  and the fact that, for
  any $k\ge{}0$, $n_{[3]}-d-2+k\ge{}3(d+1)-d-2=2d+1>0$, we have:
  {\allowdisplaybreaks
    \begin{align*}
      h_{k+1}(\fF_{[3]})&\leq\tfrac{n_{[3]}-d-2+k}{k+1} h_k(\fF_{[3]})
      + \sum_{i=1}^3\tfrac{n_i}{k+1}g_k(\fF_{[3]\sm{}\{i\}})\\
      &\le\tfrac{n_{[3]}-d-2+k}{k+1}
      \sum_{\emptyset\subset{}S\subseteq{}[3]}(-1)^{3-|S|}\tbinom{n_{S}-d-3+k}{k}
      +\sum_{i=1}^3\tfrac{n_i}{k+1}
      \sum_{\emptyset\subset{}S\subseteq[3]\sm\{i\}}(-1)^{|S|}
      \tbinom{n_{S}-d-3+k}{k}\\
      &=\tfrac{n_{[3]}-d-2+k}{k+1}\tbinom{n_{[3]}-d-3+k}{k}
      -\sum_{i=1}^3
      \tfrac{n_{[3]}-d-2+k}{k+1}\tbinom{n_{[3]\sm\{i\}}-d-3+k}{k}
      +\sum_{i=1}^3
      \tfrac{n_{[3]}-d-2+k}{k+1}\tbinom{n_i-d-3+k}{k}\\
      &\qquad
      +\sum_{i=1}^3\tfrac{n_i}{k+1}
      \tbinom{n_{[3]\sm\{i\}}-d-3+k}{k}
      -\sum_{i=1}^3\tfrac{n_i}{k+1}\sum_{j\in[3]\sm\{i\}}
      \tbinom{n_{j}-d-3+k}{k}\\
      &=\tbinom{n_{[3]}-d-2+k}{k+1}
      -\sum_{i=1}^3\tfrac{n_{[3]}-d-2+k-n_i}{k+1}
      \tbinom{n_{[3]\sm\{i\}}-d-3+k}{k}
      +\sum_{i=1}^3\tfrac{n_{[3]}-d-2+k-n_{[3]\sm\{i\}}}{k+1}
      \tbinom{n_i-d-3+k}{k}\\
      &=\tbinom{n_{[3]}-d-2+k}{k+1}
      -\sum_{i=1}^3\tfrac{n_{[3]\sm\{i\}}-d-2+k}{k+1}
      \tbinom{n_{[3]\sm\{i\}}-d-3+k}{k}
      +\sum_{i=1}^3\tfrac{n_{i}-d-2+k}{k+1}
      \tbinom{n_i-d-3+k}{k}\\
      &=\tbinom{n_{[3]}-d-2+k}{k+1}
      -\sum_{i=1}^3\tbinom{n_{[3]\sm\{i\}}-d-2+k}{k+1}
      +\sum_{i=1}^3\tbinom{n_i-d-2+k}{k+1}\\
      &=\sum_{\emptyset\subset{}S\subseteq{}[3]}(-1)^{3-|S|}
      \tbinom{n_{S}-d-2+k}{k+1},
    \end{align*}}
  where we used the fact that:
  \begin{align*}
    \sum_{i=1}^3\tfrac{n_{[3]\sm\{i\}}}{k+1}\tbinom{n_i-d-3+k}{k}
    &=\sum_{i=1}^3\left(\sum_{j\in[3]\sm\{i\}}\tfrac{n_j}{k+1}\right)
    \tbinom{n_i-d-3+k}{k}
    =\sum_{i=1}^3\sum_{j\in[3]\sm\{i\}}\tfrac{n_j}{k+1}\tbinom{n_i-d-3+k}{k}\\
    &=\sum_{i=1}^3\sum_{j\in[3]\sm\{i\}}\tfrac{n_i}{k+1}\tbinom{n_j-d-3+k}{k}
    =\sum_{i=1}^3\tfrac{n_i}{k+1}\sum_{j\in[3]\sm\{i\}}\tbinom{n_j-d-3+k}{k}.
  \end{align*}

  The rest of the proof is concerned with the equality claim. Assume
  first that
  $f_{l-1}(\fF_{[3]})=\sum_{\emptyset\subset{}S\subseteq{}[3]}
  (-1)^{3-|S|}\binom{n_S}{l}$, for all $0\le{}l\le{}k$. Then we have:
  \begin{align*}
    h_k(\fF_{[3]})&=\sum_{i=0}^{d+2}(-1)^{k-i}\tbinom{d+2-i}{d+2-k}
    f_{i-1}(\fF_{[3]})
    =(-1)^k\sum_{i=0}^{d+2}(-1)^{i}\tbinom{d+2-i}{d+2-k}
    \sum_{\emptyset\subset{}S\subseteq{}[3]}(-1)^{3-|S|}\tbinom{n_S}{i}\\
    &=(-1)^k\sum_{\emptyset\subset{}S\subseteq{}[3]}(-1)^{3-|S|}
    \sum_{i=0}^{d+2}(-1)^{i}\tbinom{d+2-i}{d+2-k}
    \tbinom{n_S}{i}
    =\sum_{\emptyset\subset{}S\subseteq{}[3]}(-1)^{3-|S|}
    \tbinom{n_S-d-3+k}{k}.
  \end{align*}
  In the above relation we used the combinatorial identity
  (cf. \cite[eq.~(5.25)]{gkp-cm-89}):
  \begin{equation*}
    \sum_{0\le{}k\le{}l}\binom{l-k}{m}\binom{s}{k-n}(-1)^k
    =(-1)^{l+m}\binom{s-m-1}{l-m-n},
  \end{equation*}
  where $k\sub{}i$, $l\sub{}d+2$, $m\sub{}d+2-k$, $n\sub{}0$, and
  $s\sub{}n_S$.

  Suppose now that
  $h_k(\fF_{[3]})=\sum_{\emptyset\subset{}S\subseteq{}[3]}
  (-1)^{3-|S|}\binom{n_S-d-3+k}{k}$. Since relation \eqref{r8} holds
  for all $k\ge{}0$, we conclude that $h_l(\fF_{[3]})$ must be equal
  to its upper bound in \eqref{equ:hkWbound}, for all $0\le{}l<k$. 
  To see this suppose that \eqref{equ:hkWbound} is not tight for some
  $l$, with $0\le{}l<k$, and among all such $l$ choose the largest
  one. Then, relation \eqref{r8} implies that
  $h_{l+1}(\fF_{[3]})$ cannot be equal to its upper bound from
  \eqref{equ:hkWbound}, which contradicts the maximality of $l$.
  Hence, if $h_k(\fF_{[3]})$ is equal to its upper bound in
  \eqref{equ:hkWbound}, so is $h_l(\fF_{[3]})$ for all $0\le{}l<k$,
  which gives, for all $l$ with $0\le{}l\le{}k$:
  \begin{align}
    f_{l-1}(\fF_{[3]})&=\sum_{i=0}^{d+2}\tbinom{d+2-i}{l-i}h_i(\fF_{[3]})
    =\sum_{i=0}^{d+2}\tbinom{d+2-i}{l-i}\sum_{\emptyset\subset{}S\subseteq{}[3]}
    (-1)^{3-|S|}\tbinom{n_S-d-3+i}{i}\notag\\
    &=\sum_{\emptyset\subset{}S\subseteq{}[3]}(-1)^{3-|S|}
    \sum_{i=0}^{d+2}\tbinom{d+2-i}{l-i}\tbinom{n_S-d-3+i}{i}
    =\sum_{\emptyset\subset{}S\subseteq{}[3]}(-1)^{3-|S|}
    \sum_{i=0}^{d+2}\tbinom{d+2-i}{d+2-l}
    \tbinom{n_S-d-3+i}{n_S-d-3}\label{equ:s1}\\
    &=\sum_{\emptyset\subset{}S\subseteq{}[3]}(-1)^{3-|S|}\tbinom{n_S}{n_S-l}
    =\sum_{\emptyset\subset{}S\subseteq{}[3]}(-1)^{3-|S|}\tbinom{n_S}{l},
    \label{equ:s2}
  \end{align}
  where, in order to get from \eqref{equ:s1} to \eqref{equ:s2}, we
  used the combinatorial identity (cf. \cite[eq.~(5.26)]{gkp-cm-89}):
  \begin{equation*}
    \sum_{0\le{}k\le{}l}\binom{l-k}{m}\binom{q+k}{n}=\binom{l+q+1}{m+n+1},
  \end{equation*}
  with $k\sub{}i$, $l\sub{}d+2$, $m\sub{}d+2-l$,
  $q\sub{}n_S-d-3$, and $n\sub{}n_S-d-3$.
\end{proof}

We are now going to bound the elements of the $h$-vector of
$\kK_{[3]}$. More precisely:
\begin{lemma}\label{lem:hkKbound}
  For all $0\le{}k\le{}d+2$, we have:
  \begin{equation}\label{equ:hkKbound-smallk}
    h_k(\kK_{[3]})\le\binom{n_{[3]}-d-3+k}{k}.
  \end{equation}
  Furthermore, for $d\ge{}3$ and $d$ odd, we have:
  \begin{equation}\label{equ:hkKbound-midk}
    h_{\lexp{d}+1}(\kK_{[3]})\le\binom{n_{[3]}-\lexp{d}-3}{\lexp{d}+1}
    -\sum_{i=1}^{3}\binom{n_i-\lexp{d}-2}{\lexp{d}+1}.
  \end{equation}
  Equality holds for some $k$, where $0\le{}k\le\lexp{d+1}$, if and
  only if, for all $\emptyset\subset{}R\subseteq[3]$,
  $f_{l-1}(\fF_{R})=\sum_{\emptyset\subset{}S\subseteq{}R}
  (-1)^{|R|-|S|}\binom{n_S}{l}$, for all
  $0\le{}l\le\min\{k,\lexp{d+|R|-1}\}$.
\end{lemma}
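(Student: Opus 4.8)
The plan is to start from the Dehn--Sommerville-type expression \eqref{equ:hkKR} for $R=[3]$, which (using $g_k^{(0)}=h_k$, $g_k^{(1)}=g_k$ and $\fF_{\{i\}}\equiv\bx{}P_i$) reads
\[
  h_k(\kK_{[3]})=h_k(\fF_{[3]})+\sum_{R\in\sset_2}g_k(\fF_R)+\sum_{i=1}^{3}g_k^{(2)}(\bx{}P_i),
\]
and to bound the three types of summands separately. The first two are handled by Lemma~\ref{lem:hkWbound} and by Lemma~\ref{lem:gkFbound} applied to each pair $R\in\sset_2$. For the third I would use $g_k^{(2)}(\bx{}P_i)\le\binom{n_i-d-3+k}{k}=g_k^{(2)}(C_d(n_i))$, valid for $0\le k\le\lexp{d+1}$: writing $g_k^{(2)}(\bx{}P_i)=g_k(\bx{}P_i)-g_{k-1}(\bx{}P_i)$, the $g$-theorem says that the $g$-vector of $\bx{}P_i$ is an $M$-sequence with $g_1=n_i-d-1$, whence $g_{k-1}(\bx{}P_i)\le\binom{n_i-d-3+k}{k-1}$, $g_k(\bx{}P_i)\le g_{k-1}(\bx{}P_i)^{\langle k-1\rangle}$, and the map $x\mapsto x^{\langle k-1\rangle}-x$ is nondecreasing; for $k=\lexp{d}+1$ (which matters only for $d$ odd, where it equals $\lexp{d+1}$) one uses instead the Dehn--Sommerville relations of the $(d-1)$-sphere $\bx{}P_i$ to get $g_{\lexp{d}+1}^{(2)}(\bx{}P_i)=-g_{\lexp{d}}(\bx{}P_i)\le 0$ by the Generalized Lower Bound Theorem. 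Adding the three bounds, the $\binom{n_{\{i,j\}}-d-3+k}{k}$-terms cancel between the first two, the $\binom{n_i-d-3+k}{k}$-terms cancel ($+1-2+1=0$), and one is left with exactly $\binom{n_{[3]}-d-3+k}{k}$; this proves \eqref{equ:hkKbound-smallk} for $0\le k\le\lexp{d+1}$.

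For the remaining range $\lexp{d+1}<k\le d+2$ I would use the Dehn--Sommerville-like identity \eqref{equ:DSW}, $h_k(\kK_{[3]})=h_{d+2-k}(\fF_{[3]})$: since $d+2-k\le\lexp{d+2}$ here, Lemma~\ref{lem:hkWbound} bounds $h_{d+2-k}(\fF_{[3]})$ by $\sum_{\emptyset\subset S\subseteq[3]}(-1)^{3-|S|}\binom{n_S-1-k}{d+2-k}$, and it remains to check that this does not exceed $\binom{n_{[3]}-d-3+k}{k}$; the leading $S=[3]$ term equals $\binom{n_{[3]}-1-k}{d+2-k}\le\binom{n_{[3]}-d-3+k}{k}$ because $2k\ge d+2$ on this range, and the lower-order terms are dealt with by a direct binomial computation using the identities of \cite{gkp-cm-89} already invoked in the proof of Lemma~\ref{lem:hkWbound}. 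The refined midrange bound \eqref{equ:hkKbound-midk} ($d$ odd, $k=\lexp{d}+1$) again starts from the displayed decomposition, but now the three crude maxima are mutually incompatible, so one must bound $g_k(\fF_R)$, $R\in\sset_2$, via the recursion \eqref{equ:gkFrecur} keeping the terms $g_{k-1}(\bx{}P_l)=g_{\lexp{d}}(\bx{}P_l)$ explicit, and use the exact identity $g_{\lexp{d}+1}^{(2)}(\bx{}P_i)=-g_{\lexp{d}}(\bx{}P_i)$. In the resulting estimate each $g_{\lexp{d}}(\bx{}P_i)$ occurs with net coefficient $\tfrac{n_{[3]\sm\{i\}}}{k}-1$, which is nonnegative since $n_{[3]\sm\{i\}}\ge 2(d+1)>\lexp{d}+1=k$; bounding each surviving $g_{\lexp{d}}(\bx{}P_i)$ from above by its Upper Bound Theorem value $\binom{n_i-d-2+\lexp{d}}{\lexp{d}}=\binom{n_i-\lexp{d}-3}{\lexp{d}}$ and applying Pascal's rule $\binom{n_i-\lexp{d}-3}{\lexp{d}+1}+\binom{n_i-\lexp{d}-3}{\lexp{d}}=\binom{n_i-\lexp{d}-2}{\lexp{d}+1}$ then collapses the estimate to exactly the right-hand side of \eqref{equ:hkKbound-midk}.

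For the equality characterisation ($0\le k\le\lexp{d+1}$): if the stated conditions on the $f_{l-1}(\fF_R)$, $\emptyset\subset R\subseteq[3]$, hold, then each $P_i$ is $\min\{k,\lexp{d}\}$-neighborly, so its $g$-vector agrees with that of $C_d(n_i)$ up to degree $\min\{k,\lexp{d}\}$, every inequality above becomes an equality, and the telescoping (respectively the midrange computation) forces equality. Conversely, if equality holds (in \eqref{equ:hkKbound-smallk}, or in \eqref{equ:hkKbound-midk} when $d$ is odd and $k=\lexp{d}+1$), each bound used must be tight: the equality clauses of Lemmas~\ref{lem:hkWbound} and \ref{lem:gkFbound} supply the conditions for $R=[3]$ and $R\in\sset_2$, while tightness of the bound used for $g_k^{(2)}(\bx{}P_i)$ (equivalently, of the Upper Bound Theorem estimate for $g_{\lexp{d}}(\bx{}P_i)$) forces $h_l(\bx{}P_i)$ maximal for all $l\le\min\{k,\lexp{d}\}$ by the usual ``largest defective index'' argument, since the $M$-sequence inequalities propagate maximality downward, i.e.\ $P_i$ is $\min\{k,\lexp{d}\}$-neighborly. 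The step I expect to be the main obstacle is the midrange bound \eqref{equ:hkKbound-midk}: it genuinely requires the interaction among the three summands, namely that whatever is lost in $g_k(\fF_R)$ when $g_{\lexp{d}}(\bx{}P_i)$ falls below its maximum more than offsets the $-g_{\lexp{d}}(\bx{}P_i)$ contributed by $g_k^{(2)}(\bx{}P_i)$ --- which is exactly what the sign of the net coefficient $\tfrac{n_{[3]\sm\{i\}}}{k}-1$ records --- and this is where the parity of $d$ enters the binomial bookkeeping in an essential way.
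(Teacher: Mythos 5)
Your decomposition of $h_k(\kK_{[3]})$ into $h_k(\fF_{[3]}) + \sum_{R\in\sset_2}g_k(\fF_R)+\sum_i g_k^{(2)}(\bx P_i)$ is the right starting point and matches the paper, but the crucial step diverges. The paper does \emph{not} bound the three blocks independently: it substitutes the recursion \eqref{equ:gkFrecur} for each $g_k(\fF_R)$ \emph{before} applying any maxima, and regroups so that the quadratic differences $g_k^{(2)}(\bx P_i)$ are absorbed into the combination $\tfrac{n_{[3]\sm\{i\}}-k}{k}\,g_{k-1}(\bx P_i)+g_k(\bx P_i)$, whose coefficients are visibly nonnegative (this is the content of their intermediate inequality \eqref{equ:hkKrecur}). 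At that point only UBT-type bounds $g_j(\bx P_i)\le\binom{n_i-d-2+j}{j}$ (Ziegler, Cor.~8.38, already a byproduct of McMullen's recursion) are needed, and the whole range $1\le k\le d+2$ is handled uniformly; the midrange bound \eqref{equ:hkKbound-midk} then falls out by observing $g_{\lexp{d}+1}(\bx P_i)=0$ when $d$ is odd.

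By bounding the three blocks separately you are forced to prove $g_k^{(2)}(\bx P_i)\le\binom{n_i-d-3+k}{k}$ directly, and for this you invoke the $g$-theorem together with the monotonicity of $x\mapsto x^{\langle k-1\rangle}-x$. Both facts are true (the monotonicity reduces to $x^{\langle m\rangle}$ being strictly increasing), so your argument is not wrong, but you have replaced the elementary McMullen machinery with Billera--Lee--Stanley, and you assert rather than prove the monotonicity claim. Moreover you then need a separate argument for $\lexp{d+1}<k\le d+2$, sketched via \eqref{equ:DSW} and ``a direct binomial computation'' that is left unverified (it does work -- for $j=d+2-k\le\lexp{d+2}$ one has $\sum_{\emptyset\subset S\subseteq[3]}(-1)^{3-|S|}\binom{n_S-d-3+j}{j}\le\binom{n_{[3]}-d-3+j}{j}\le\binom{n_{[3]}-d-3+k}{k}$ since $j\le k$ and $n_{[3]}\ge d+3$ -- but you should spell it out). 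Your midrange argument in effect reconstructs \eqref{equ:hkKrecur}, and your identification of the net coefficient $\tfrac{n_{[3]\sm\{i\}}}{k}-1\ge 0$ is exactly the paper's key observation, which suggests you would do better to perform that regrouping from the start and avoid the $g$-theorem and the case split altogether. The equality characterisation follows the same cascade-of-tightness logic as the paper, modulo the same caveats about which bounds you are actually using.
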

\begin{proof}[Sketch of proof]
The complete proof can be found in Section~\ref{app:ub} of Appendix~\ref{app:omitted}.
To prove the upper bound for $h_k(\kK_{[3]})$, we distinguish between
two cases: (1) the case $k=0$, where the result follows by a
straightforward calculation from relation \eqref{equ:hkKR} with
$R=[3]$, and (2) the case
$k\geq 1$, where again we use \eqref{equ:hkKR} with $R=[3]$
and substitute $g_k(\fF_R)$ by its upper bound from relation~\eqref{equ:gkFrecur} in Lemma~\ref{lem:gkFbound}.
We, thus, obtain a bound for $h_k(\kK_{[3]})$ expressed in
terms of $h_k(\fF_{[3]})$, $h_{k-1}(\fF_R), R\in\sset_2$, and
$g_\lambda(\bx{}P_i)$, $\lambda=k,k-1$.
Combining the upper bounds from Lemma~\ref{lem:hkWbound},
Lemma 3.3 in \cite{kt-mnfms-12}, along with the upper
bounds for the $g$-vector of a $d$-polytope (cf.~\cite[Corollary
8.38]{z-lp-95}), respectively, gives the upper bound in the statement of the lemma.

For the equality claim we assume that $h_k(\kK_{[3]})$ attains its maximal value.
Then, the expression bounding $h_k(\kK_{[3]})$ used above,
in conjunction with Lemmas \ref{lem:hkWrecur}, \ref{lem:gkFbound},
\ref{lem:hkWbound},
and \cite[Lemma 3.3]{kt-mnfms-12}, yields the equality conditions
in the statement of the lemma. 
In the opposite direction, we assume that these conditions hold and,
using Lemma \ref{lem:hkWbound} and \cite[Lemma 3.3]{kt-mnfms-12}, 
we show that the quantities in the right hand side of relation~\eqref{equ:hkKR} with $R=[3]$, 
attain their maximal values. The conclusion then follows from an easy calculation. 
\end{proof}

We are now ready to state and prove the main theorem of the paper
concerning upper bounds on the number of $k$-faces of the
Minkowski sum of three convex $d$-polytopes.

\begin{theorem}\label{thm:ms3ub}
  Let $P_1$, $P_2$ and $P_3$ be three $d$-polytopes in $\reals^d$,
  $d\ge{}2$, with $n_i\ge{}d+1$ vertices, $1\le{}i\le{}3$. Then, for all
  $1\le{}k\le{}d$, we have:
  \begin{equation}\label{equ:ms3ub}
    \begin{aligned}
      f_{k-1}(P_1+P_2+P_3)&\le
      f_{k+1}(C_{d+2}(n_{[3]}))-\sum_{i=0}^{\lexp{d+2}}\binom{d+2-i}{k+2-i}
      \sum_{\emptyset\subset{}S\subset[3]}(-1)^{|S|}\binom{n_S-d-3+i}{i}\\
      &\quad-\delta\binom{\lexp{d}+1}{k-\lexp{d}}
      \sum_{i=1}^3\binom{n_i-\lexp{d}-2}{\lexp{d}+1},
    \end{aligned}
  \end{equation}
  where $\delta=d-2\lexp{d}$, and $n_S=\sum_{i\in{}S}n_i$.
  Equality holds for all $1\le{}k\le{}d$, if and only if
  \begin{equation}\label{equ:full-conditions}
    f_{l-1}(\fF_{R})=\sum_{\emptyset\subset{}S\subseteq{}R}
    (-1)^{|R|-|S|}\binom{n_S}{l},\qquad
    0\le{}l\le\ltexp{d+|R|-1},
    \qquad\emptyset\subset{}R\subseteq[3].
  \end{equation}
\end{theorem}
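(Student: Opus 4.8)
The plan is to translate everything into the language of $h$-vectors of $\fF_{[3]}$ and $\kK_{[3]}$, and then run the end-game of McMullen's Upper Bound Theorem. Recall from the Cayley trick that $f_{k-1}(P_1+P_2+P_3)=f_{k+1}(\fF_{[3]})$ for $1\le k\le d$, so it suffices to bound $f_{k+1}(\fF_{[3]})$ for $2\le k+1\le d+1$; equivalently, to bound $f_j(\fF_{[3]})$ for $2\le j\le d+1$. The first step is to express $f_j(\fF_{[3]})$ via its $h$-vector through the inverse of~\eqref{eq:hvectorY}, namely $f_{j-1}(\fF_{[3]})=\sum_{i=0}^{d+2}\binom{d+2-i}{j-i}h_i(\fF_{[3]})$. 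I then split the sum at $\lexp{d+2}$: for the low range $0\le i\le\lexp{d+2}$ I use the bound of Lemma~\ref{lem:hkWbound}, while for the high range $i>\lexp{d+2}$ I use the Dehn--Sommerville-like equation~\eqref{equ:DSW}, $h_i(\fF_{[3]})=h_{d+2-i}(\kK_{[3]})$, which converts every high-index $h$ of $\fF_{[3]}$ into a low-index $h$ of $\kK_{[3]}$, to which the bounds of Lemma~\ref{lem:hkKbound} apply.

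Next I assemble the pieces. In the low range, substituting the bound from Lemma~\ref{lem:hkWbound} produces a sum over $\emptyset\subset S\subseteq[3]$ of terms $(-1)^{3-|S|}\binom{n_S-d-3+i}{i}$ weighted by $\binom{d+2-i}{j-i}$. The $S=[3]$ piece, once completed over the full range $0\le i\le d+2$ using the known $h$-vector of the cyclic polytope $C_{d+2}(n_{[3]})$ (which is itself neighborly and maximal), reassembles into $f_{j}(C_{d+2}(n_{[3]}))$; the proper-subset pieces $\emptyset\subset S\subset[3]$ give exactly the second term on the right-hand side of~\eqref{equ:ms3ub}, namely $-\sum_{i=0}^{\lexp{d+2}}\binom{d+2-i}{k+2-i}\sum_{\emptyset\subset S\subset[3]}(-1)^{|S|}\binom{n_S-d-3+i}{i}$ after the sign bookkeeping $(-1)^{3-|S|}=-(-1)^{|S|}$ for $|S|\le 2$. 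The high range contributes, via~\eqref{equ:DSW} and~\eqref{equ:hkKbound-smallk}, a term that by the same cyclic-polytope identity gets absorbed into the completion producing $f_j(C_{d+2}(n_{[3]}))$; the only genuinely extra contribution comes, when $d$ is odd, from the single middle entry $h_{\lexp{d}+1}(\kK_{[3]})$ which by~\eqref{equ:hkKbound-midk} is strictly smaller than $\binom{n_{[3]}-\lexp{d}-3}{\lexp{d}+1}$ by exactly $\sum_{i=1}^3\binom{n_i-\lexp{d}-2}{\lexp{d}+1}$. This middle index $d+2-(\lexp{d}+1)=\lexp{d}+\delta+1$ appears in $f_{j-1}(\fF_{[3]})$ with coefficient $\binom{d+2-(\lexp{d}+\delta+1)}{j-(\lexp{d}+\delta+1)}=\binom{\lexp{d}+1}{k-\lexp{d}}$ (using $j=k+1$, $\delta=d-2\lexp{d}$), which is precisely the coefficient of the third term $-\delta\binom{\lexp{d}+1}{k-\lexp{d}}\sum_{i=1}^3\binom{n_i-\lexp{d}-2}{\lexp{d}+1}$ in~\eqref{equ:ms3ub}, the factor $\delta$ correctly switching this correction on only in odd dimension. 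Carefully matching indices across the split point $\lexp{d+2}$ versus $\lexp{d}$ (recalling $d+2$ and $d$ have the same parity, so $\lexp{d+2}=\lexp{d}+1$) is the bookkeeping heart of the argument; the combinatorial identities \cite[eqs.~(5.25),(5.26)]{gkp-cm-89} already used in the proof of Lemma~\ref{lem:hkWbound} handle the summations.

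For the equality characterization, I argue in both directions. If~\eqref{equ:full-conditions} holds, then in particular the hypotheses of the equality clauses of Lemma~\ref{lem:hkWbound} (for $R=[3]$) and Lemma~\ref{lem:hkKbound} (for all $R$) are met for every $k\le\lexp{d+|R|-1}$, so all the $h$-vector bounds used above hold with equality, hence so does~\eqref{equ:ms3ub} for every $k$. Conversely, if~\eqref{equ:ms3ub} holds with equality for all $1\le k\le d$, then since $f_{j-1}(\fF_{[3]})$ is a nonnegative combination of the $h_i$'s with the extreme weights $\binom{d+2-i}{j-i}$ positive for the relevant $i$, equality for all $j$ forces every $h_i(\fF_{[3]})$ and every $h_i(\kK_{[3]})$ entering the sum — and in particular $h_i(\fF_{[3]})$ for $0\le i\le\lexp{d+2}$ and $h_i(\kK_{[3]})$ for $0\le i\le\lexp{d+2}$ including the middle bound~\eqref{equ:hkKbound-midk} — to attain its stated upper bound; invoking the equality clauses of Lemmas~\ref{lem:hkWbound} and~\ref{lem:hkKbound} then yields exactly~\eqref{equ:full-conditions}. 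The main obstacle I anticipate is not any single step but the index juggling in the middle paragraph: one must verify that the cyclic-polytope completion genuinely swallows every term except the three displayed on the right of~\eqref{equ:ms3ub}, that no double counting occurs at the split index, and that the $\delta$ factor comes out correctly (vanishing in even dimension, where $\kK_{[3]}$'s middle $h$-entry already meets the naive binomial bound). This is where I would be most careful and where the identities from \cite{gkp-cm-89} do the real work.
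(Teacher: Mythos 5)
Your proposal follows essentially the same route as the paper's own proof: express $f_{j-1}(\fF_{[3]})$ via its $h$-vector, split at $\lexp{d+2}$, apply Lemma~\ref{lem:hkWbound} to the low range, convert the high range via the Dehn--Sommerville-like equation~\eqref{equ:DSW} to $h$-entries of $\kK_{[3]}$ and apply Lemma~\ref{lem:hkKbound}, reassemble into the cyclic-polytope term plus the two correction terms, and characterize equality through the equality clauses of those two lemmas. The bookkeeping you flag (the split index, the $\delta$ switch, the cyclic-polytope completion) is exactly where the paper's calculation lives, and your account of it is accurate.
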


\begin{proof}
  If suffices to establish upper bounds for $f_k(\fF_{[3]})$ for all
  $0\le{}k\le{}d+1$. Indeed,
  writing the $f$-vector of $\fF_{[3]}$ in terms of its $h$-vector, and using
  relation \eqref{equ:DSW}, along with Lemmas \ref{lem:hkWbound} and
  \ref{lem:hkKbound} we get:
  \begin{align}
   \nonumber f_{k-1}(\fF_{[3]})&=\sum_{i=0}^{d+2}\tbinom{d+2-i}{k-i}h_i(\fF_{[3]})
    =\sum_{i=0}^{\lexp{d+2}}\tbinom{d+2-i}{k-i}h_i(\fF_{[3]})
    +\sum_{i=\lexp{d+2}+1}^{d+2}\tbinom{d+2-i}{k-i}h_i(\fF_{[3]})\\
   \nonumber &=\sum_{i=0}^{\lexp{d+2}}\tbinom{d+2-i}{k-i}h_i(\fF_{[3]})
    +\sum_{j=0}^{\lexp{d+1}}\tbinom{j}{k-d-2+j}h_{d+2-j}(\fF_{[3]})\\
    &=\sum_{i=0}^{\lexp{d+2}}\tbinom{d+2-i}{k-i}h_i(\fF_{[3]})
    +\sum_{j=0}^{\lexp{d+1}}\tbinom{j}{k-d-2+j}h_j(\kK_{[3]}).
    \label{equ:fkF3_as_sum}
  \end{align}
  From Lemma \ref{lem:hkWbound} we have:
  \begin{equation*}
    \sum_{i=0}^{\lexp{d+2}}\tbinom{d+2-i}{k-i}h_i(\fF_{[3]})
    \le\sum_{i=0}^{\lexp{d+2}}\tbinom{d+2-i}{k-i}
    \sum_{\emptyset\subset{}S\subseteq[3]}(-1)^{3-|S|}\tbinom{n_S-d-3+i}{i},
  \end{equation*}
  whereas from Lemma \ref{lem:hkKbound} we get
  \begin{equation*}
    \sum_{j=0}^{\lexp{d+1}}\tbinom{j}{k-d-2+j}h_j(\kK_{[3]})
    \le\sum_{i=0}^{\lexp{d+1}}\tbinom{n_{[3]}-d-3+j}{j}
    -\delta\tbinom{\lexp{d}+1}{k-\lexp{d}-2}
    \sum_{i=1}^3\tbinom{n_i-\lexp{d}-2}{\lexp{d}+1},
  \end{equation*}
  where $\delta=d-2\lexp{d}$. Hence:
  {\allowdisplaybreaks
  \begin{align*}
    f_{k-1}(\fF_{[3]})
    &\le\sum_{i=0}^{\lexp{d+2}}\tbinom{d+2-i}{k-i}
    \sum_{\emptyset\subset{}S\subseteq[3]}(-1)^{3-|S|}\tbinom{n_S-d-3+i}{i}
    +\sum_{j=0}^{\lexp{d+1}}\tbinom{j}{k-d-2+j}\tbinom{n_{[3]}-d-3+j}{j}\\
    &\qquad-\delta\tbinom{\lexp{d}+1}{k-\lexp{d}-2}
    \sum_{i=1}^3\tbinom{n_i-\lexp{d}-2}{\lexp{d}+1}\\
    &=\sum_{i=0}^{\lexp{d+2}}\tbinom{d+2-i}{k-i}\tbinom{n_{[3]}-d-3+i}{i}
    +\sum_{i=0}^{\lexp{d+1}}\tbinom{i}{k-d-2+i}\tbinom{n_{[3]}-d-3+i}{i}\\
    &\qquad
    -\sum_{i=0}^{\lexp{d+2}}\tbinom{d+2-i}{k-i}
    \sum_{\emptyset\subset{}S\subset[3]}(-1)^{|S|}\tbinom{n_S-d-3+i}{i}
    -\delta\tbinom{\lexp{d}+1}{k-\lexp{d}-2}
    \sum_{i=1}^3\tbinom{n_i-\lexp{d}-2}{\lexp{d}+1}\\
    &=\sideset{}{^{\,*}}{\sum}_{i=0}^{\frac{d+2}{2}}
    (\tbinom{d+2-i}{k-i}+\tbinom{i}{k-d-2+i})\tbinom{n_{[3]}-d-3+i}{i}
    -\sum_{i=0}^{\lexp{d+2}}\tbinom{d+2-i}{k-i}
    \sum_{\emptyset\subset{}S\subset[3]}(-1)^{|S|}\tbinom{n_S-d-3+i}{i}\\
    &\qquad-\delta\tbinom{\lexp{d}+1}{k-\lexp{d}-2}
    \sum_{i=1}^3\tbinom{n_i-\lexp{d}-2}{\lexp{d}+1}\\
    &=f_{k-1}(C_{d+2}(n_{[3]}))-\sum_{i=0}^{\lexp{d+2}}\tbinom{d+2-i}{k-i}
    \sum_{\emptyset\subset{}S\subset[3]}(-1)^{|S|}\tbinom{n_S-d-3+i}{i}\\
    &\qquad-\delta\tbinom{\lexp{d}+1}{k-\lexp{d}-2}
    \sum_{i=1}^3\tbinom{n_i-\lexp{d}-2}{\lexp{d}+1},
  \end{align*}}%
  where:
  \begin{equation*}
    \sideset{}{^{\,*}}{\sum}_{i=0}^{\frac{m}{2}}T_i=
    \sum_{i=0}^{\lexp{m}-1}T_i+
    \tfrac{1}{2}\left(1+m-2\ltexp{m}\right)
    T_{\lexp{m}}.
  \end{equation*}
  Our upper bounds follow from the fact that
  $f_{k-1}(P_1+P_2+P_3)=f_{k+1}(\fF_{[3]})$, $1\le{}k\le{}d$.

  In what follows we concentrate on the necessary and sufficient
  conditions for the upper bounds in \eqref{equ:ms3ub} to hold as
  equalities.
  From the derivation of the upper bounds above (see also
  relation~\eqref{equ:fkF3_as_sum}), it is clear that the
  bounds are tight if and only if:
  \begin{enumerate}
  \item[(1)]
    $h_k(\fF_{[3]})$ is maximal, for all
    $0\le{}k\le\lexp{d+2}$, and
  \item[(2)]
    $h_k(\kK_{[3]})$ is maximal, for all $0\le{}k\le\lexp{d+1}$.
  \end{enumerate}
  According to Lemma \ref{lem:hkWbound} and
  Lemma \ref{lem:hkKbound}, these conditions are, respectively,
  equivalent to requiring that:
  \begin{enumerate}
  \item[(1)]
    $f_{l-1}(\fF_{[3]})=\sum_{\emptyset\subset{}S\subseteq{}[3]}
    (-1)^{3-|S|}\binom{n_S}{l}$, for all $0\le{}l\le\lexp{d+2}$, and 
  \item[(2)]
    $f_{l-1}(\fF_{R})=\sum_{\emptyset\subset{}S\subseteq{}R}
    (-1)^{|R|-|S|}\binom{n_S}{l}$, for all
    $0\le{}l\le\min\{\lexp{d+1},\lexp{d+|R|-1}\}$, and for all
    $\emptyset\subset{}R\subseteq[3]$.
  \end{enumerate}
  For $R\equiv[3]$, condition (1) implies condition (2), while for
  $R\subset[3]$,
  $\min\{\lexp{d+1},\lexp{d+|R|-1}\}=\lexp{d+|R|-1}$.
  We, therefore, conclude that the bounds in \eqref{equ:ms3ub} are
  attained if and only if, conditions \eqref{equ:full-conditions} hold
  true for all $0\le{}k\le\lexp{d-|R|+1}$ and for all
  $\emptyset\subset{}R\subseteq[3]$.
\end{proof}

\section{Tightness of upper bounds}
\label{sec:lb}

In this section we show that the bounds in Theorem \ref{thm:ms3ub} are
tight. We distinguish between the cases $d=2$, $d=3$ and
$d\ge{}4$. For $d=2$, it is easy to verify that for $k=1,2$, the
right-hard side of inequality \eqref{equ:ms3ub} evaluates to
$n_1+n_2+n_3$, which is known to be tight.

\subsection{Three dimensions}\label{sec:lb:3d}
For $d=3$, the upper bounds in Theorem \ref{thm:ms3ub} are as follows:
\begin{equation}\label{equ:ms3-3polytopes-ubounds}
  \begin{aligned}
    f_0(P_1+P_2+P_3)&\le{}n_1 n_2+n_2 n_3+n_1 n_3-n_1-n_2-n_3+2,\\
    f_1(P_1+P_2+P_3)&\le{}2 n_1 n_2+2 n_2 n_3+2 n_1 n_3-n_1-n_2-n_3-6,\\
    f_2(P_1+P_2+P_3)&\le{}n_1 n_2+n_2 n_3+n_1 n_3-6.
  \end{aligned}
\end{equation}
In order to prove that these bounds are tight, we exploit two results:
one by Fukuda and Weibel \cite{fw-fmacp-07} and one by Weibel
\cite{w-mfmsl-12}.
Weibel \cite{w-mfmsl-12} has shown that the number of $k$-faces of the
Minkowski sum of $r$ $d$-polytopes $P_1,\ldots,P_r$ in $\reals^d$,
where $r\ge{}d$, is related to the number of $k$-faces of the
Minkowski sum of subsets of these polytopes of size at most $d-1$ as
follows:
\begin{equation}\label{equ:k-faces-large-r}
  f_k(\MS)-\alpha=\sum_{j=1}^{d-1}(-1)^{d-1-j}\binom{r-1-j}{d-1-j}
  \sum_{S\in\sset_j^r}(f_k(P_S)-\alpha),
\end{equation}
where $\sset_j^r$ is the family of subsets of $[r]$ of size $j$, $P_S$
is the Minkowski sum of the polytopes in $S$, and $\alpha=2$ if $k=0$
and $d$ is odd, and $\alpha=0$ otherwise. For $d=r=3$,
equation \eqref{equ:k-faces-large-r} simplifies to:
\begin{equation}\label{equ:ms3-3polytopes-subsets}
  \begin{aligned}
    f_k(P_1+P_2+P_3)&=\alpha+\sum_{j=1}^2(-1)^{2-j}\tbinom{2-j}{2-j}
    \sum_{S\in\sset_j^3}(f_k(P_S)-\alpha)\\
    &=\alpha-\sum_{i=1}^3(f_k(P_i)-\alpha)
    +\sum_{i=1}^3(f_k(P_{[3]\sm\{i\}})-\alpha)\\
    &=\alpha-\sum_{i=1}^{3}f_k(P_i)+3\alpha
    +\sum_{i=1}^{3}f_k(P_{[3]\sm\{i\}})-3\alpha\\
    &=\alpha+\sum_{1\le{}i<j\le{}3}f_k(P_i+P_j)-\sum_{i=1}^{3}f_k(P_i).
  \end{aligned}
\end{equation}

Besides relation \eqref{equ:k-faces-large-r}, Weibel \cite{w-mfmsl-12}
also presented a construction of $r$ simplicial $d$-polytopes, such
that any subset $S$ of these polytopes of size at most $d-1$ has the
maximum possible number of vertices, namely,
$f_0(P_S)=\prod_{i\in{}S}n_i$.
Specializing this construction in our case, \ie for $r=d=3$, we
deduce that it is possible to construct three simplicial $3$-polytopes
$P_1$, $P_2$, $P_3$ in $\reals^3$, such that $f_0(P_i)=n_i$, $1\le{}i\le{}3$, 
and $f_0(P_i+P_j)=n_{i}n_j$, $1\le{}i<j\le{}3$. Substituting in
\eqref{equ:ms3-3polytopes-subsets} for $k=0$, we get:
\begin{equation*}
  f_0(P_1+P_2+P_3)=2+\sum_{1\le{}i<j\le{}3}n_{i}n_j-\sum_{i=1}^{3}n_i
  =n_1 n_2+n_2 n_3+n_1 n_3-n_1-n_2-n_3+2,
\end{equation*}
\ie the upper bound in \eqref{equ:ms3-3polytopes-ubounds} is tight for
$k=0$\footnote{This is essentially the result of Theorem 3 in
  \cite{w-mfmsl-12} for $d=r=3$; however, we recapitulate this result
  in order to show that Weibel's construction yields tights bounds for
  $k=1,2$ also.}.
Since all $P_i$'s are simplicial, we have
\begin{equation}\label{equ:simplicial-3polytopes}
  f_1(P_i)=3 n_i-6,\quad
  f_2(P_i)=2 n_i-4,
  \qquad 1\le{}i\le{}3.
\end{equation}
On the other hand, since $f_0(P_i+P_j)$ is maximal, for all
$1\le{}i<j\le{}3$, we get, by \cite[Corollary 4]{fw-fmacp-07}, that
$f_k(P_i+P_j)$ is also maximal for $k=1,2$, and for all
$1\le{}i<j\le{}3$. Hence:
\begin{equation}\label{equ:ms2-3polytopes}
  f_1(P_i+P_j)=2 n_i n_j+n_i+n_j-8,\qquad
  f_2(P_i+P_j)=n_i n_j+n_i+n_j-6.
\end{equation}
Substituting from \eqref{equ:simplicial-3polytopes} and
\eqref{equ:ms2-3polytopes} in \eqref{equ:ms3-3polytopes-subsets},
and recalling that $\alpha=0$ for $k>0$, we get:
\begin{equation*}
  \begin{aligned}
    f_1(P_1+P_2+P_3)&=\sum_{1\le{}i<j\le{}3}(2 n_i n_j+n_i+n_j-8)
    -\sum_{i=1}^{3}(3 n_i-6)\\
    &=[2(n_1 n_2+n_2 n_3+n_1 n_3)+2(n_1+n_2+n_3)-24]
    -[3 (n_1+n_2+n_3)-18]\\
    &=2 n_1 n_2+2 n_2 n_3+2 n_1 n_3-n_1-n_2-n_3-6,
  \end{aligned}
\end{equation*}
and
\begin{equation*}
  \begin{aligned}
    f_2(P_1+P_2+P_3)&=\sum_{1\le{}i<j\le{}3}(n_i n_j+n_i+n_j-6)
    -\sum_{i=1}^{3}(2 n_i-4)\\
    &=[n_1 n_2+n_2 n_3+n_1 n_3+2(n_1+n_2+n_3)-18]
    -[2 (n_1+n_2+n_3)-12]\\
    &=n_1 n_2+n_2 n_3+n_1 n_3-6,
  \end{aligned}
\end{equation*}
\ie the upper bounds in \eqref{equ:ms3-3polytopes-ubounds} are tight
for $k=1,2$.

\subsection{Four or more dimensions}
We now focus on the case $d\ge{}4$. We shall construct three 
$d$-polytopes $P_1, P_2$ and $P_3$ in $\reals^d$, such that they satisfy the
conditions in relation~\eqref{equ:full-conditions}.
Consequently, as Theorem \ref{thm:ms3ub} asserts, these
polytopes attain the upper bounds in \eqref{equ:ms3ub}.

Consider the following $d$-dimensional moment-like curves in $\reals^d$:
\begin{align*}
  \mc_1(t)&=(t,\z t^2,\z t^3,t^4,t^5,\ldots,t^{d}),\\
  \mc_2(t)&=(\z t,t^2,\z t^3,t^4,t^5,\ldots,t^{d}),\\
  \mc_3(t)&=(\z t,\z t^2,t^3,t^4,t^5,\ldots,t^{d}),
\end{align*}
where $t>0$, and $\z \geq 0$. Let $\me_{1,1}=(0),\me_{1,2}=(1)$ be the
standard affine basis of $\reals$ and recall that 
$\me_{2,1}=(0,0),\me_{2,2}=(1,0),\me_{2,3}=(0,1)$ is the
standard affine basis of $\reals^2$.
We shall define three polytopes as the convex hulls of points,
chosen appropriately on each of these $d$-curves.
We then proceed to show that $\fF_R$, $R\in\sset_2$, and $\fF_{[3]}$,
have the following property:
every set of $k=\ltexp{d+1}$ vertices from $\fF_R$, or $k\leq
\ltexp{d+2}$ vertices from $\fF_{[3]}$,
defines a $(k-1)$-face of $\fF_R$ or $\fF_{[3]}$, respectively.
This property readily yields the necessary and sufficient conditions
establishing the tightness of the upper bounds (cf.~rel. \eqref{equ:full-conditions}).

Let $x_{i,j}$, $1\le{}j\le{}n_i$, $1\le{}i\le{}3$, be $n_{[3]}$
positive real numbers, such that $x_{i,j}<x_{i,j+1}$,
$1\le{}j\le{}n_i-1$, and let $\tau$ be a positive real parameter.
Let $x_{i,j}^\epsilon=x_{i,j}+\epsilon$, $t_{i,j}=x_{i,j}\tau^{\nu_i}$,
$t_{i,j}^\epsilon=x_{i,j}^\epsilon\tau^{\nu_i}$, where
$1\le{}j\le{}n_i$, $1\le{}i\le{}3$, $\epsilon>0$, and 
$\nu_i=3-i,\, 1\leq i\leq 3$. The value of
$\epsilon$ is chosen such that $x_{i,j}^\epsilon<x_{i,j+1}$,
for all $1\le{}j<n_i$, and for all $1\le{}i\le{}3$.
Finally, we set $\z=\tau^M$, where $M\geq d(d+1)$.
We are going to define three vertex sets $V_i$ as follows: 
\begin{equation}
  V_i=\{\mc_i(t_{i,1}),\mc_i(t_{i,2}),\ldots \mc_i(t_{i,n_i})\}
  \qquad 1\le{}i\le{}3.
\end{equation}
Call $P_i$ the $d$-polytope we get as
the convex hull of the vertices in $V_i$,
and let $\VV_i$ be the image of $V_i$ 
via the Cayley embedding.
As in Section \ref{sec:cayley}, call $\cC$ the Cayley polytope of the
$P_i$'s in $\reals^{d+2}$, and
$\fF_R$, $\emptyset\subset{}R\subseteq[3]$, the set of faces of $\cC$ 
with at least one vertex from each $\VV_i$, $i\in{}R$.
Note that, by construction, $P_i$ is a 
$\ltexp{d}$-neighborly polytope in $\reals^d$ with $n_i$ vertices,
which immediately implies that conditions
\eqref{equ:full-conditions} hold for $R\in\sset_1$ and for all
$0\le{}l\le{}\lexp{d}$. Hence, it suffices to show that:
\begin{equation}\label{equ:fkWR}
  f_{l-1}(\fF_{R})=\sum_{\emptyset\subset{}S\subseteq{}R}
  (-1)^{|R|-|S|}\tbinom{n_S}{l},
  \qquad 0\le{}l\le\ltexp{d+|R|-1},
  \qquad 2\le{}|R|\le{}3,
\end{equation}
which we will succeed by choosing a sufficiently small value for $\tau$.

To prove that the constructed polytopes have the desired properties
(see Lemmas~\ref{lem:fkW2-tau} and \ref{lem:fkW3-tau}, bellow),
we adopt the key idea used in the proofs of \cite[Theorem 0.7 \&
Corollary 0.8]{z-lp-95}
on basic properties of cyclic $d$-polytopes, and adapt this idea
to our setting, where we view the faces the Minkowski sum of the
polytopes $P_i$, $i\in{}R$, via the face set $\fF_R$
of their Cayley polytope, where $2\le|R|\le{}3$.

We start off with subsets $R$ of size two. To show that
$f_{k-1}(\fF_R)$ is according to relation \eqref{equ:fkWR}, 
recall (cf.~Section~\ref{sec:cayley}) that the polytope $\cC$ contains the Cayley polytope
$\cC_R$ of the polytopes in $R$ as a $d$-subcomplex embedded in
$\reals^{d+2}$. Thus, in order to prove relation \eqref{equ:fkWR}
for $\fF_R$, we may consider $\cC_R$ and $\fF_R$ independently of
$\cC$, \ie we can disassociate the polytopes $P_i$, $i\in{}R$, from the
Cayley polytope $\cC$. In other words, we think of the polytopes
$P_i$, $i\in{}R$, as $d$-polytopes in $\reals^d$, 
while their Cayley polytope $\cC_R$ is seen as a
$(d+1)$-polytope in $\reals^{d+1}$.
We exploit this observation in order to prove the following lemma.

\begin{lemma}\label{lem:fkW2-tau}
There exists a sufficiently small positive value $\hat\tau_R$ for
$\tau$ such that, for all $\tau\in(0,\hat\tau_R)$,
\begin{equation*}
    f_{k-1}(\fF_R)=\sum_{\emptyset\subset{}S\subseteq{}R}
    (-1)^{2-|S|}\tbinom{n_S}{k},  \qquad 2\le{}k\le\ltexp{d+1},
    \quad R\in \sset_2.
\end{equation*}
\end{lemma}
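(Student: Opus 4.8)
The plan is to translate the statement into a question about faces of the Cayley polytope $\cC_R$ and then to produce, for each relevant set of vertices, an explicit supporting hyperplane, adapting the moment-curve arguments of \cite[Theorem~0.7 \& Corollary~0.8]{z-lp-95}. Write $R=\{a,b\}$ with $\{a,b,c\}=\{1,2,3\}$. The identity $\sum_{\emptyset\subset S\subseteq R}(-1)^{2-|S|}\binom{n_S}{k}=\binom{n_a+n_b}{k}-\binom{n_a}{k}-\binom{n_b}{k}$ shows that the claimed value is exactly the number of $k$-subsets of the vertex set $\VV_R$ of $\cC_R$ that contain at least one vertex of $\VV_a$ and at least one of $\VV_b$. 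Since $\cC_R$ is ``as simplicial as possible'' and, as the dominant-monomial argument below shows, its vertices are in general position for $\tau$ small, any $k\le\lexp{d+1}$ of them that span a face of $\cC_R$ span a $(k-1)$-simplex with exactly those $k$ vertices, and that $(k-1)$-face belongs to $\fF_R$ precisely when the $k$-subset is ``mixed''. Hence $f_{k-1}(\fF_R)\le\binom{n_a+n_b}{k}-\binom{n_a}{k}-\binom{n_b}{k}$ always, and it remains to prove the reverse inequality: for all sufficiently small $\tau$ and all $2\le k\le\lexp{d+1}$, every mixed $k$-subset of $\VV_R$ is the vertex set of a face of $\cC_R$.

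Fix such a subset, with parameters $\{t_{a,j}:j\in I_a\}$ from $P_a$ and $\{t_{b,j}:j\in I_b\}$ from $P_b$, $|I_a|=p\ge1$, $|I_b|=q\ge1$, $p+q=k$; since $k\le\lexp{d+1}$ we have $2p,2q\le 2k-2<d$. To show the subset is a face I would construct an affine functional $\phi(x_0,x_1,\dots,x_d)=\delta+\beta x_0+\sum_{m=1}^{d}\gamma_m x_m$ on $\reals^{d+1}$ that vanishes on the $k$ chosen vertices of $\cC_R$ and is strictly positive on the other vertices. Restricting $\phi$ along the two curves produces polynomials $\psi_a(t)=\delta+\gamma_a t^a+\z\gamma_b t^b+\z\gamma_c t^c+\sum_{m=4}^{d}\gamma_m t^m$ and $\psi_b(t)=\delta+\beta+\z\gamma_a t^a+\gamma_b t^b+\z\gamma_c t^c+\sum_{m=4}^{d}\gamma_m t^m$, where $\psi_a$ must vanish to order two at $\{t_{a,j}:j\in I_a\}$ and be positive at the other parameters of $P_a$, and similarly for $\psi_b$ with $\{t_{b,j}:j\in I_b\}$ and $P_b$. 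The two polynomials share all coefficients $\gamma_4,\dots,\gamma_d$, and one computes $\psi_a-\psi_b=-\beta+(1-\z)(\gamma_a t^a-\gamma_b t^b)$, so the only freedom decoupling them is $\beta$ together with the pair $\gamma_a,\gamma_b$ (whose contributions to the ``wrong'' polynomial are damped by $\z$).

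The point of the construction is precisely to make this coupled pair of conditions solvable: $\z=\tau^M$ with $M\ge d(d+1)$ makes every $\z$-term of strictly smaller $\tau$-order than every pure-power term, and the shifts $t_{i,j}=x_{i,j}\tau^{\nu_i}$ with $\nu_1>\nu_2>\nu_3$ place the two parameter ranges of $P_a$ and $P_b$ at widely separated magnitudes, so that both the linear conditions ``$\phi$ vanishes on the chosen vertices'' and the sign conditions ``$\psi_a,\psi_b$ are positive at the other vertices'' are governed by Vandermonde-type determinants each possessing a single dominant monomial in $\tau$; for $\tau$ below a threshold $\hat\tau_R$ these monomials fix all the signs and yield the desired hyperplane. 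Because there are finitely many mixed $k$-subsets and finitely many vertices to test, one $\hat\tau_R$ serves all of them at once, and the three choices of $R\in\sset_2$ are identical up to relabelling the indices $1,2,3$. I expect the main obstacle to be exactly this coupling: in the plain cyclic polytope a single polynomial suffices, whereas here one must choose the shared coefficients $\gamma_4,\dots,\gamma_d$ (and then $\gamma_a,\gamma_b,\beta$) so that $\psi_a$ and $\psi_b$ simultaneously factor through the prescribed squares on their respective scales without any error term reversing a sign, and verifying this is the technical heart — a careful bookkeeping of the $\tau$-orders arising from the exponents $a,b\in\{1,2,3\}$, the shifts $\nu_i$, and the exponent $M$.
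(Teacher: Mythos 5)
Your proposal is correct and follows essentially the same path as the paper: reduce to showing every mixed $k$-subset spans a face of the Cayley polytope $\cC_R\subset\reals^{d+1}$, construct the supporting hyperplane by a moment-curve argument, and control the resulting Vandermonde-type determinants by their dominant $\tau$-monomial (the paper realizes the coupled pair $\psi_a,\psi_b$ directly as the determinant $H_\UU(\mb{x})$, uses $\epsilon$-separated point pairs in place of your order-two vanishing, and delegates the $\tau$-order bookkeeping to Lemma~\ref{lem:det2}). The remaining difference is only in presentation and in the fact that the paper carries out the asymptotic bookkeeping you flag as the technical heart.
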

\begin{proof}
Without loss of generality let $R=\{1,3\}$. The rest of the cases
are analogous.
The condition in the statement of the lemma is equivalent to the requirement
that $\cC_{\{1,3\}}$ is a $(\VV_{1},\ltexp{d+1})$-bineighborly polytope 
(see~\cite{kt-mnfms-12} for definitions and details),
which in turn is equivalent to the requirement that
\begin{equation}\label{equ:bineighborliness-condition}
   f_{\lexp{d+1}-1}(\fF_{\{1,3\}})=\sum_{\emptyset\subset{}S\subseteq{} \{1,3\}}
   (-1)^{2-|S|}\tbinom{n_S}{\lexp{d+1}}.
\end{equation}
  
We shall prove that condition
\eqref{equ:bineighborliness-condition} holds true for the Cayley
polytope $\cC_{\{1,3\}}$ of the polytopes $P_1, P_3$, and for
sufficiently small values of $\tau$, as described in the statement
of the lemma.

Define $\delta:=d+1-2\ltexp{d+1}$. Let $X$ be a positive real
number such that $X>x_{3,n_3}^\epsilon$, and let\footnote{Although
we have set $\nu_3=0$, we keep $\nu_3$ as is in the
proof, so as to make more profound the analogy of the proof
presented here for $R=\{1,3\}$ with
the cases $R=\{1,2\}$ and $R=\{2,3\}$.} $T\allowbreak = X\tau^{\nu_3}$.
Choose a set $U$ of $k_{m}\neq 0$ vertices  
$\mc_m(t_{m,j_{m,1}}),\mc_m(t_{m,j_{m,2}}),\ldots,\mc_m(t_{m,j_{m,k_{m}}})$
from the set $V_{m}$, such that $j_{m,1}<j_{m,2}<\ldots<j_{m,k_{m}}$, for $m\in\{1,3\}$,
and $k_{1}+k_{3}=\ltexp{d+1}$. 
Let $\UU=\{\mcc_m(t_{m,j_{m,1}}),\mcc_m(t_{m,j_{m,2}}),\ldots,\allowbreak
\mcc_m(t_{m,j_{m,k_{m}}})\mid m\in\{1,3\}\}$, be the Cayley
embedding of $U$ in $\reals^{d+1}$ (using the
affine basis $\me_{1,1}, \me_{1,2} $).
For a vector $\mb{x}=(x_1,x_2,\ldots,x_{d+1})\in\reals^{d+1}$, we define the
$(d+2)\times(d+2)$ determinant $H_{\UU}(\mb{x})$ as follows: 
\begin{center}
   \includegraphics[width=\textwidth]{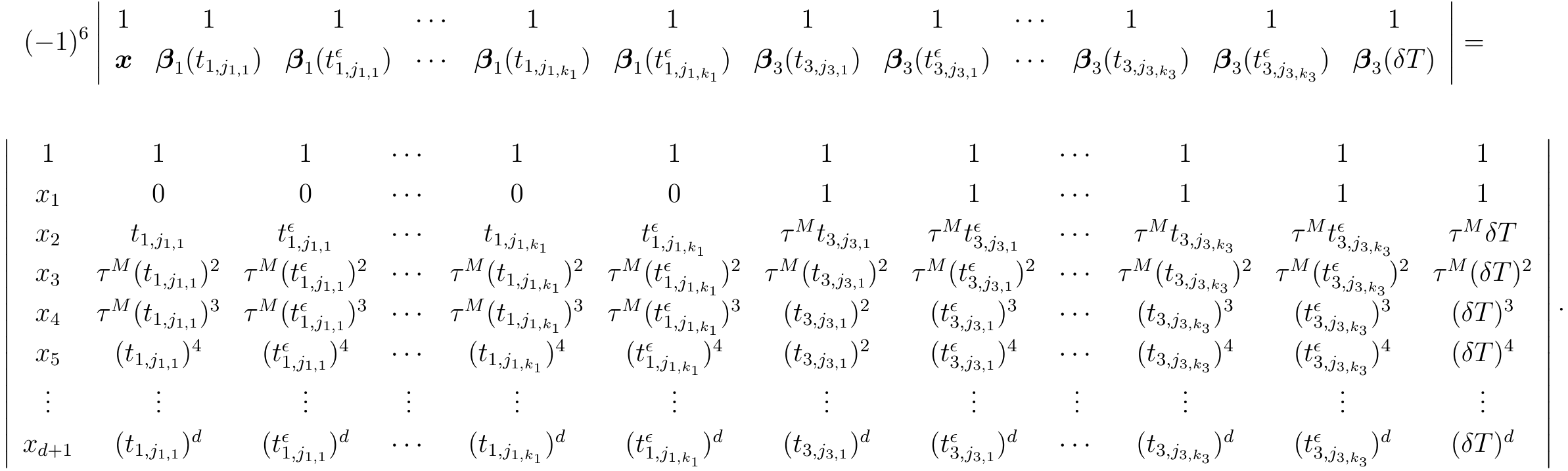}
\end{center}
Notice that for $d$ odd the last column $\binom{1}{\mcc_3(\delta T)}$  of $H_{\UU}(x)$ does not exist.
The equation $H_{\UU}(\mb{x})=0$ is the equation of a
hyperplane in $\reals^{d+1}$ that passes through the points in
$\UU$. We are going to show that, for any choice of $\UU$,
and for all vertices $\mbv$ in $\VV_{\{1,3\}}\sm{}\UU$, $\VV_{\{1,3\}}=\VV_1\cup \VV_3$,
we have $H_{\UU}(\mbv)>0$ for sufficiently small values of $\tau$.

Suppose we have some vertex $\mbv\in{}\VV_{\{1,3\}}\sm{}\UU$. 
Then, $\mbv=\mcc_s(t_{s,\lambda})$,
$t_{s,\lambda}=x_{s,\lambda}\tau^{\nu_{s}}$, 
where $1\leq\lambda\leq n_{s}$,  $s$ is either $1$ or $3$, and 
$\lambda \notin\{j_{s,1},j_{s,2},\ldots,j_{s,k_{s}}\}$.
We perform the following determinant transformations on $H_{\UU}(\mbv)$:
initially we subtract its second row from its first, and then
we shift its first column to the right via an even number of column swaps.
More precisely, we need to shift the first column of $H_{\UU}(\mbv)$ to
the right so that the values $t_{s,\lambda},t_{s,j_{s,1}},t_{s,j_{s,1}}^\epsilon,
t_{s,j_{s,2}},t_{s,j_{s,2}}^\epsilon,\ldots,t_{s,j_{s,k_s}},\allowbreak t_{s,j_{s,k_s}}^\epsilon$
appear consecutively in the columns of $H_{\UU}(\mbv)$ and in
increasing order. To do that we always need an even number of
column swaps, due to the way we have chosen $\epsilon$.

Consider the case where $s=1$ and suppose that all necessary operations on $H_{\UU}(\mbv)$ have been
performed. Then $H_{\UU}(\mbv)$ is in the form of the determinant
$D_{n,m}(\tau;I,J,\mb{\mu})$ of Lemma \ref{lem:det2}
(multiplied by $\tau^M$),
with $n\sub{}2k_{1}+1$, $m\sub{}2k_{3}$, $l\sub{}d+2$, 
$\mb{\mu}\sub{} (0,0,1,2,\ldots,d)$, $\alpha \sub \nu_{1}$, 
$\beta \sub \nu_{3}$, $I\sub 3$, and $J\sub 5$.
Note that the requirement for $M$ in Lemma \ref{lem:det2}
is satisfied by our choice of $M$.
According to Lemma \ref{lem:det2}, $H_{\UU}(\mbv)$ has the
following asymptotic expansion in terms of $\tau$:
\begin{equation}\label{equ:HU-tau-expansion}
  H_{\UU}(\mbv)
  =\tau^M(C\tau^{\xi}+\Theta(\tau^{\xi+1})),\qquad
  \xi = \nu_1(-2+\sum_{i=4}^{2k_1+3}(i-2))+\nu_3(3+\sum_{i=2k_1+4}^{d+2}(i-2)),
\end{equation}
where $C$ is a positive constant independent of $\tau$.
The asymptotic expansion in \eqref{equ:HU-tau-expansion} implies
that there exists a
positive value $\hat\tau_{\mbv,\UU}$ for $\tau$ such that for all
$\tau\in(0,\hat\tau_{\mbv,\UU})$, $H_{\UU}(\mbv)>0$.
The case $s=3$ is completely analogous.

Since the number of the subsets $\UU$ is finite, while for each
such subset $\UU$ we need to consider a finite number of
vertices in $\VV_{\{1,3\}}\sm{}\UU$, it suffices to consider a positive value
$\hat{\tau}_{\{1,3\}}$ for $\tau$ that is small enough, so that all 
possible determinants $H_{\UU}(\mbv)$ are strictly positive for any
$\tau\in(0,\hat\tau_{\{1,3\}})$.
For $\tau\in(0,\hat\tau_{\{1,3\}})$, our analysis above immediately implies that
\emph{for each} set $\UU$ the equation $H_{\UU}(\mb{x})=0$, $\mb{x}\in\reals^{d+1}$, is the
equation of a supporting hyperplane of $\cC_R$ passing through the 
vertices of $\UU$, and those only. In other words, every
set $\UU$, where $|\UU|=\lexp{d+1},\, |\UU\cap \VV_1|=k_1\neq 0$, and
$|\UU\cap \VV_3|=k_3\neq 0,$ 
defines a $(\lexp{d+1}-1)$-face of $\cC_R$.
Taking into account that the number of such subsets $\UU$ is
$\sum_{i=1}^{\lexp{d+1}-1}\tbinom{n_1}{i}\tbinom{n_3}{\lexp{d+1}-i}$,
we deduce that
\begin{align*}
  f_{\lexp{d+1}-1}(\fF_{\{1,3\}})
  &=\sum_{i=1}^{\lexp{d+1}-1}\tbinom{n_1}{i}\tbinom{n_3}{\lexp{d+1}-i}
  =\tbinom{n_1+n_3}{\lexp{d+1}}-\tbinom{n_1}{\lexp{d+1}}
  -\tbinom{n_3}{\lexp{d+1}}\\
  &=\sum_{\emptyset\subset{}S\subseteq{} \{1,3\}} (-1)^{2-|S|}\tbinom{n_S}{\lexp{d+1}}.
\end{align*}
Hence, condition \eqref{equ:bineighborliness-condition} is satisfied
for all $\tau\in(0,\hat\tau_{\{1,3\}})$.
\end{proof}

We now consider the case $R=[3]$. In this case we can show that:
%
\begin{lemma}\label{lem:fkW3-tau}
  There exists a sufficiently small positive value $\hat\tau_{[3]}$ for
  $\tau$ such that, for all $\tau\in(0,\hat\tau_{[3]})$,
  \begin{equation}
    f_{k-1}(\fF_{[3]})=\sum_{\emptyset\subset{}S\subseteq{}[3]}
    (-1)^{3-|S|}\tbinom{n_S}{k},  \qquad 3\le{}k\le\ltexp{d+2},
  \end{equation}
\end{lemma}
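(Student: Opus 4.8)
The plan is to mirror the proof of Lemma~\ref{lem:fkW2-tau}, but to carry it out directly inside the Cayley polytope $\cC=\cC_{[3]}\subset\reals^{d+2}$, since now $R=[3]$ and there is no lower-dimensional Cayley polytope to disassociate from $\cC$.

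First I would reduce the statement to a single ``top-dimensional'' claim. By inclusion--exclusion, $\sum_{\emptyset\subset S\subseteq[3]}(-1)^{3-|S|}\binom{n_S}{k}$ is exactly the number of $k$-subsets of $V_1\cup V_2\cup V_3$ that meet all three $V_i$. Under the ``as simplicial as possible'' assumption every $(k-1)$-face of $\fF_{[3]}$ is a simplex on $k$ vertices of $\VV_{[3]}$ meeting all three $\VV_i$, so we automatically have $f_{k-1}(\fF_{[3]})\le\sum_{\emptyset\subset S\subseteq[3]}(-1)^{3-|S|}\binom{n_S}{k}$, with equality precisely when every such $k$-subset is the vertex set of a face of $\cC$. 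Moreover, once every $\ltexp{d+2}$-subset of $\VV_{[3]}$ meeting all three $\VV_i$ is known to be a face, every smaller such subset is a face too: extend it, using $n_i\ge d+1$, to a $\ltexp{d+2}$-subset still meeting all three $\VV_i$, and pass to the corresponding subface. Hence it suffices to show that, for $\tau$ small enough, every $\UU\subset\VV_{[3]}$ with $|\UU|=\ltexp{d+2}$ and $|\UU\cap\VV_i|\ge 1$ for $i=1,2,3$ is a face of $\cC$; this also explains why the claimed identity is nontrivial only for $3\le k\le\ltexp{d+2}$.

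Next, for a fixed such $\UU=\{\mcc_i(t_{i,j_{i,1}}),\ldots,\mcc_i(t_{i,j_{i,k_i}})\mid i\in[3]\}$ with $j_{i,1}<\cdots<j_{i,k_i}$ and $k_1+k_2+k_3=\ltexp{d+2}$ (where $\mcc_i$ is the Cayley embedding of $\mc_i$ in $\reals^{d+2}$ via $\me_{2,1},\me_{2,2},\me_{2,3}$), I would construct an explicit supporting hyperplane of $\cC$ whose vertex set among $\VV_{[3]}$ is exactly $\UU$. Following Lemma~\ref{lem:fkW2-tau}, this is $H_{\UU}(\mb{x})=0$, where $H_{\UU}(\mb{x})$ is a $(d+3)\times(d+3)$ determinant whose rows are: the row $(1,\mb{x})$; for every vertex $\mcc_i(t_{i,j})\in\UU$ the two rows $(1,\mcc_i(t_{i,j}))$ and $(1,\mcc_i(t_{i,j}^\epsilon))$; and, when $d$ is odd, one auxiliary row $(1,\mcc_3(\delta T))$ with $\delta=d+2-2\ltexp{d+2}=1$ and $T=X\tau^{\nu_3}$ for a fixed $X>x_{3,n_3}^\epsilon$. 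The row counts add up to $d+3$ in both parities of $d$. Then $H_{\UU}(\mb{x})=0$ is a hyperplane through the vertices of $\UU$ (and through the perturbed points $\mcc_i(t_{i,j}^\epsilon)$ and the auxiliary point, none of which lie in $\VV_{[3]}$), so to conclude that $\UU$ is a face it remains only to prove $H_{\UU}(\mbv)>0$ for every $\mbv\in\VV_{[3]}\sm\UU$ and all sufficiently small $\tau$.

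The determinant-asymptotics step is the part I expect to be the main obstacle. Given $\mbv=\mcc_s(t_{s,\lambda})$ with $\lambda$ outside the chosen indices on curve $s$, I would subtract the second row of $H_{\UU}(\mbv)$ from the first and shift its $\mb{x}$-column to the right by an even number of column swaps (even, by the choice of $\epsilon$), so that the arguments lying on curve $s$ appear consecutively and in increasing order. After this normalization, $H_{\UU}(\mbv)$, up to the factor $\tau^M=\z$, has the structure of a determinant built on the three geometric progressions $x_{i,j}\tau^{\nu_i}$, $i=1,2,3$ (with $\nu_1=2$, $\nu_2=1$, $\nu_3=0$), together with the small parameter $\z=\tau^M$ --- the analogue, for three curves, of the determinant $D_{n,m}(\tau;I,J,\mb{\mu})$ of Lemma~\ref{lem:det2}. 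The crux is to establish (most cleanly, by stating and proving a separate three-parameter determinant lemma, a three-curve version of Lemma~\ref{lem:det2}) an expansion $H_{\UU}(\mbv)=\tau^M\bigl(C\tau^{\xi}+\Theta(\tau^{\xi+1})\bigr)$ with an explicit leading exponent $\xi$ depending only on $k_1,k_2,k_3$, $d$ and the $\nu_i$, and a constant $C>0$ independent of $\tau$; the hypothesis $M\ge d(d+1)$ is exactly what keeps the $\z$-terms subdominant. The delicate combinatorial bookkeeping --- tracking the interleaving of three progressions shrinking at rates $\tau^{2},\tau^{1},\tau^{0}$, controlling the parity of the column shifts, handling the auxiliary point on curve $3$ in the odd-$d$ case, and, above all, verifying $C>0$ for every choice of $s$ and every relative position of $\lambda$ --- is where the three-summand case genuinely differs from the two-summand case. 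Granting this, the proof finishes routinely: there are finitely many admissible $\UU$ and, for each, finitely many $\mbv\in\VV_{[3]}\sm\UU$, so taking $\hat\tau_{[3]}$ to be the minimum of all resulting thresholds gives that, for $\tau\in(0,\hat\tau_{[3]})$, every admissible $\UU$ is a face of $\cC$, which by the first-step reduction is equivalent to the asserted values of $f_{k-1}(\fF_{[3]})$ for all $3\le k\le\ltexp{d+2}$.
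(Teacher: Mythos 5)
Your proposal follows essentially the same route as the paper: build, for each admissible vertex subset $\UU$, a Vandermonde-like $(d+3)\times(d+3)$ determinant $H_{\UU}$ whose zero set is a candidate supporting hyperplane through $\UU$ and the $\epsilon$-perturbed doubles, then show $H_{\UU}(\mbv)>0$ for every $\mbv\in\VV_{[3]}\sm\UU$ and small $\tau$ by an asymptotic estimate in $\tau$, and finish by a finiteness argument. Your one genuine shortcut is the inclusion--exclusion reduction to the single top dimension $k=\ltexp{d+2}$ (so that $\delta\in\{0,1\}$ padding columns suffice), whereas the paper runs the same determinant argument separately for every $k$ with $\delta=d+2-2k$ padding columns; both are valid, and yours is a bit cleaner. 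Two small imprecisions worth fixing: in the three-curve case one must subtract both the second and the third rows of $H_{\UU}(\mbv)$ from the first to clear the affine coordinates (a single subtraction was enough only in the two-curve Lemma~\ref{lem:fkW2-tau}), and the asymptotic form $\tau^{M}(C\tau^{\xi}+\Theta(\tau^{\xi+1}))$ you write carries a spurious $\tau^{M}$ prefactor inherited from the two-curve case — the three-curve analogue (the paper's Lemma~\ref{lem:det3}) reads $C'\tau^{\xi}+\Theta(\tau^{\xi+1})$. Neither affects the positivity conclusion. The three-curve determinant lemma that you flag as the crux is indeed the real work, and the paper supplies it (Lemma~\ref{lem:det3}) via a two-stage Laplace expansion that reduces to the two-curve estimate of Lemma~\ref{lem:det2}; your sketch defers this exactly where the paper does, so this is not a gap in approach, only in execution.
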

\begin{proof}
Define $\delta:=d+2-2k$ and let $T$ be a positive real number such that
$T>t_{3,n_{3}}^\epsilon(=x_{3,n_3}^\epsilon)$. 
Choose a set $U$ of $k_i\neq 0$ vertices from $V_i$, $1\leq i\leq 3$, 
such that $k_1+k_2+k_3=k$, and denote by $\UU$ the Cayley embedding of $U$
in $\reals^{d+2}$ (using the affine basis $\me_{2,i}$, $1\le i\le3$).
Let $\mc_i(t_{i,j_{i,1}})$,$\mc(t_{i,j_{i,2}})$,$\ldots$, $\mc_i(t_{i,j_{i,k_i}})$,
be the vertices in $U$, and
$\mcc_i(t_{i,j_{i,1}}),\mcc_i(t_{i,j_{i,2}}),\ldots,\mcc_i(t_{i,j_{i,k_i}})$,
be their corresponding vertices in $\UU$,
where $j_{i,1}<j_{i,2}<\ldots<j_{i,k_i}$ for all $1\le{}i\le{}3$.
Let $\mb{x}=(x_1,x_2,\ldots,x_{d+2})$ and define the
$(d+3)\times(d+3)$ determinant $H_{\UU}(\mb{x})$ as
follows: 
\begin{center}
 \includegraphics[width=1.\textwidth]{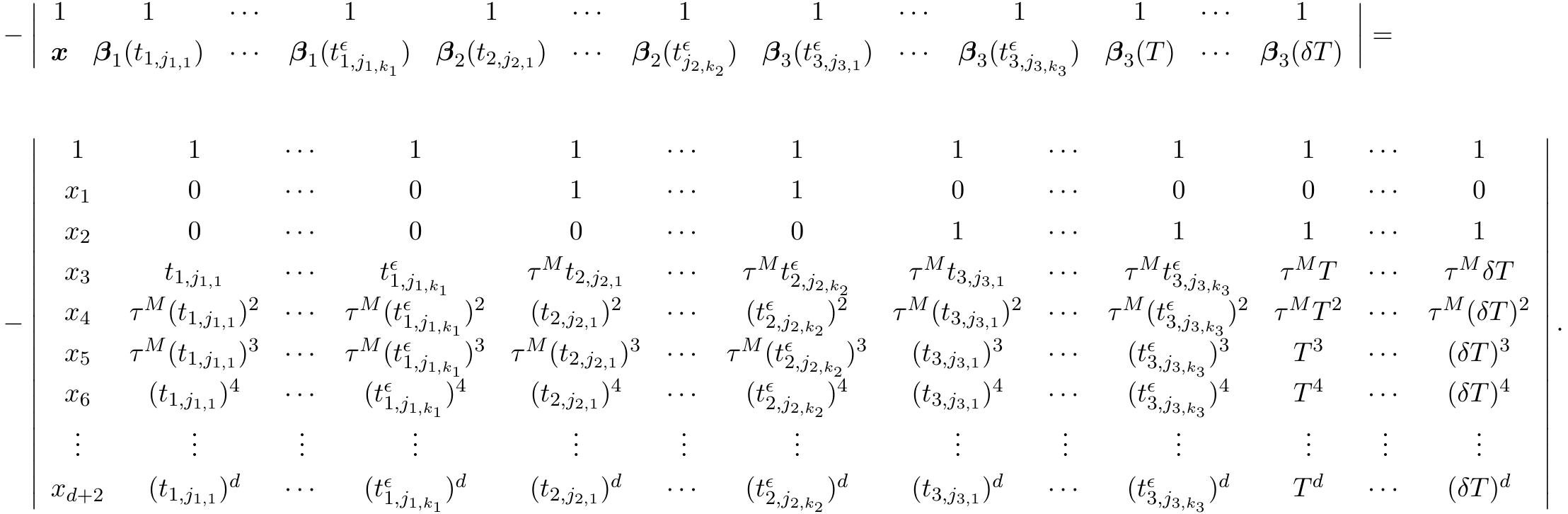}
\end{center}
We can alternatively describe $H_{\UU}(\mb{x})$ as follows:
  \begin{enumerate}
  \item
    The first column of $H_{\UU}(\mb{x})$ is $\binom{1}{\mb{x}}$.
  \item
    For $i$ ranging from $1$ to $3$, and for $\lambda$ ranging from $1$ to
    $k_i$, the next $k_i$ \emph{pairs of columns} of
    $H_{\UU}(\mb{x})$ are $\binom{1}{\mcc_i(t_{i,j_{i,\lambda}})}$ and
    $\binom{1}{\mcc_i(t_{i,j_{i,\lambda}}^\epsilon)}$.
  \item
    For $\lambda$ ranging from $1$ to $\delta$,
    the last $\delta$ columns of $H_{\UU}(\mb{x})$ are
    $\binom{1}{\mcc_3(\lambda T)}$. Notice that if $k=\lexp{d+2}$ and 
    $d$ is even, this category of
    columns of $H_{\UU}(\mb{x})$ does not exist.
  \end{enumerate}

The equation $H_{\UU}(\mb{x})=0$ is the equation of a
hyperplane in $\reals^{d+2}$ that passes through the points in
$\UU$. Recall that $\VV_{[3]}=\VV_1\cup \VV_2 \cup \VV_3$. 
We are going to show that, for any choice of $\UU$,
and for all vertices $\mbv$ in $\VV_{[3]}\sm{}\UU$, 
we have $H_{\UU}(\mbv)>0$ for sufficiently small $\tau$.
  
Suppose we have some vertex $\mbv\in{}\VV_{[3]}\sm{}\UU$. 
Then, $\mbv=\mcc_s(t_{s,\lambda})$,
$t_{s,\lambda}=x_{s,\lambda}\tau^{\nu_s}$, for some
$1\leq\lambda\leq n_s$ and $1\leq s \leq 3$, such that 
$\lambda \notin\{j_{s,1},j_{s,2},\ldots,j_{s,k_s}\}$.
Then we can transform $H_{\UU}(\mbv)$ in the form of the determinant 
$E_{n,m,k}(\tau;\mb{\mu})$ 
of Lemma \ref{lem:det3}, by subtracting the second and third 
row of $H_{\UU}(\mbv)$ from its first row and 
shifting the first column of $H_{\UU}(\mbv)$ to the right via
an even number of column swaps.
More precisely, we need to shift the first column of $H_{\UU}(\mbv)$ to
the right so that the values $t_{s,\lambda},t_{s,j_{s,1}},t_{s,j_{s,1}}^\epsilon,
t_{s,j_{s,2}},t_{s,j_{s,2}}^\epsilon,\ldots,t_{s,j_{s,k_s}},t_{s,j_{s,k_s}}^\epsilon$,
appear consecutively in the columns of $H_{\UU}(\mbv)$ and in
increasing order. To do that we always need an even number of
column swaps, due to the way we have chosen $\epsilon$.

Now, suppose that $\mbv \in \VV_1$.
Then $H_{\UU}(\mbv)$ is in the form of the determinant 
 $E_{n,m,k}(\tau;\mb{\mu})$ of Lemma \ref{lem:det3},
where $n\sub{}2k_1+1$, $m\sub{}2k_2$, $k\sub{}2k_3+\delta$, 
$l\sub{}d+3$, and $\mb{\mu}\sub{} (0,0,0,1,2,\ldots,d)$.
Obviously, $M\geq 2|\mb{\mu}|=d(d+1)$.
Applying now Lemma \ref{lem:det3}, we deduce that $H_{\UU}(\mbv)$ can be
written as:
\begin{equation*}
  H_{\UU}(\mbv)
 =C'\tau^{\xi}+
  \Theta(\tau^{\xi+1}),\qquad
  \xi = 4+2\sum_{i=7}^{2k_1+5}(i-3)+\sum_{i=2k_1+6}^{2k_1+2k_2+3}(i-3),
\end{equation*}
where $C'$ is a positive constant independent of $\tau$. The
asymptotic estimate above implies that $H_{\UU}(\mbv)>0$, for sufficiently
small $\tau$.

The remaining cases, \ie the cases $\mbv \in \VV_2$ and $\mbv \in \VV_3$,
are completely analogous and we omit them.
We thus conclude that, for any specific choice of $U$, and for
any specific vertex $\mbv\in{}\VV_{[3]}\sm{}\UU$,
there exists some $\tau_{\mbv,\UU}>0$ (cf. Lemma \ref{lem:det3})
that depends on $\mbv$ and $\UU$, such that for all $\tau\in
(0,\tau_{\mbv,\UU})$ we have $H_{\UU}(\mbv)>0$.
For each $k$ with $3\le{}k\le{}\lexp{d+2}$, the number
of the sets $\UU$ of size $k$ containing at least one vertex from each
$\VV_i$, $1\leq i \leq 3$, is
\begin{equation*}
  \tbinom{n_1+n_2+n_3}{k}-\tbinom{n_1+n_2}{k}-\tbinom{n_1+n_3}{k}-
  \tbinom{n_2+n_3}{k} +\tbinom{n_1}{k}+\tbinom{n_2}{k}+\tbinom{n_3}{k}=
  \sum_{\emptyset\subset{}S\subseteq{}[3]} (-1)^{3-|S|}\tbinom{n_S}{k}.
\end{equation*}
For each such subset $\UU$ we need
to consider the $(n_1+n_2+n_3-k)$ vertices in $\VV_{[3]}\sm{}\UU$, therefore it
suffices to consider a positive value $\hat{\tau}_{[3]}$ for $\tau$ that
is small enough, so that all
  \begin{equation*}
    \sum_{k=2}^{\lexp{d+1}}(n_1+n_2+n_3-k)\,\sum_{\emptyset\subset{}S\subseteq{}[3]} (-1)^{3-|S|}\tbinom{n_S}{k},
  \end{equation*}
possible determinants $H_{\UU}(\mbv)$ are strictly positive. 
For $\tau\sub\hat\tau_{[3]}$, our analysis above immediately implies that
\emph{for each} set $\UU$ the equation $H_{\UU}(\mb{x})=0$, $\mb{x}\in\reals^{d+2}$, is the
equation of a supporting hyperplane for $\cC$ passing through the 
vertices of $\UU$, and those only. In other words, every
set $\UU$, of $k$ vertices, for $3\le{}k\le{}\lexp{d+2}$,
with at least one vertex from each $\VV_i$, $1\leq i \leq 3$,
defines a $(k-1)$-face of $\cC$, which means that
\begin{equation*}
 f_{k-1}(\fF_{[3]})=\sum_{\emptyset\subset{}S\subseteq{}R} (-1)^{3-|S|}
 \tbinom{n_S}{k},\quad
\mbox{for all}~ 3\le{}k\le{}\ltexp{d+2}.\qedhere
\end{equation*}
\end{proof}

Relation \eqref{equ:fkWR} now immediately follows from Lemmas
\ref{lem:fkW2-tau} and \ref{lem:fkW3-tau}.
First choose a value $\tau^\star$ for $\tau$, smaller that
$\hat\tau_R$, for all $2\le |R|\le 3$. Then for this
value of $\tau$, the results of both Lemma \ref{lem:fkW2-tau} and
Lemma \ref{lem:fkW3-tau} hold true. Moreover, since $P_1$, $P_2$ and
$P_3$ are $\ltexp{d}$-neighborly for any $\tau>0$, and since 
$f_{-1}(\fF_R)=(-1)^{|R|-1}$, for all $\emptyset\subset R\subseteq [3]$,
while $f_{k-1}(\fF_R)=0$, for all $1\le k\le |R|$,
we conclude that, for
$\tau\equiv\tau^\star$, relations \eqref{equ:fkWR} hold.

Based on the analysis above, as well as the analysis in Section
\ref{sec:lb:3d}, we conclude that the upper bounds stated in
Theorem \ref{thm:ms3ub} are actually tight for any $d\ge{}2$.
We can, thus, restate Theorem \ref{thm:ms3ub} in its complete and
definitive form:

\begin{theorem}\label{thm:final-result}
  Let $P_1$, $P_2$ and $P_3$ be three $d$-polytopes in $\reals^d$,
  $d\ge{}2$, with $n_i\ge{}d+1$ vertices, $1\le{}i\le{}3$. Then, for all
  $1\le{}k\le{}d$, we have:
  \begin{equation*}
    \begin{aligned}
      f_{k-1}(P_1+P_2+P_3)&\le
      f_{k+1}(C_{d+2}(n_{[3]}))-\sum_{i=0}^{\lexp{d+2}}\binom{d+2-i}{k+2-i}
      \sum_{\emptyset\subset{}S\subset[3]}(-1)^{|S|}\binom{n_S-d-3+i}{i}\\
      &\quad-\delta\binom{\lexp{d}+1}{k-\lexp{d}}
      \sum_{i=1}^3\binom{n_i-\lexp{d}-2}{\lexp{d}+1},
    \end{aligned}
  \end{equation*}
  where $\delta=d-2\lexp{d}$, and $n_S=\sum_{i\in{}S}n_i$.
  Moreover, for any $d\ge{}2$, there exist three $d$-polytopes in $\reals^d$
  for which the bounds above are attained for all $1\le{}k\le{}d$.
\end{theorem}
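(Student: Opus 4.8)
The plan is to obtain the inequality directly from Theorem~\ref{thm:ms3ub}, which already establishes \eqref{equ:ms3ub} for every $d\ge 2$ and $1\le k\le d$, and also records the necessary and sufficient conditions \eqref{equ:full-conditions} on the face numbers $f_{l-1}(\fF_R)$, $\emptyset\subset R\subseteq[3]$, under which equality holds. Thus the only remaining task is to exhibit, for each $d$, three $d$-polytopes $P_1,P_2,P_3$ whose Cayley polytope satisfies \eqref{equ:full-conditions}. I would split this into the cases $d=2$, $d=3$, and $d\ge 4$. For $d=2$ one simply checks that the right-hand side of \eqref{equ:ms3ub} collapses to $n_1+n_2+n_3$ for $k=1,2$, the classical tight bound for Minkowski sums of polygons, attained by any three polygons in sufficiently general position.

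For $d=3$ I would avoid verifying \eqref{equ:full-conditions} face-by-face and instead exploit Weibel's identity \eqref{equ:k-faces-large-r} specialised to $r=d=3$, which (rel.~\eqref{equ:ms3-3polytopes-subsets}) expresses $f_k(P_1+P_2+P_3)$ through the pairwise sums $f_k(P_i+P_j)$ and the summand face counts $f_k(P_i)$, with the correction term $\alpha\in\{0,2\}$. Starting from Weibel's construction of three \emph{simplicial} $3$-polytopes with $f_0(P_i+P_j)=n_in_j$, the Fukuda--Weibel result \cite{fw-fmacp-07} promotes maximality of $f_0(P_i+P_j)$ to maximality of $f_1(P_i+P_j)$ and $f_2(P_i+P_j)$; substituting these together with $f_1(P_i)=3n_i-6$, $f_2(P_i)=2n_i-4$ into \eqref{equ:ms3-3polytopes-subsets} reproduces exactly the bounds \eqref{equ:ms3-3polytopes-ubounds}, so all three face numbers are attained.

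The substance of the proof is the case $d\ge 4$. I would take vertex sets $V_i$ consisting of points $\mc_i(t_{i,j})$ on the three moment-like curves $\mc_1,\mc_2,\mc_3$, with a ``squashing'' parameter $\z=\tau^M$ for a large $M\ge d(d+1)$ and abscissae $t_{i,j}=x_{i,j}\tau^{\nu_i}$ scaled by curve-dependent powers $\nu_i=3-i$. The resulting $P_i=\conv(V_i)$ are $\lexp{d}$-neighborly for every $\tau>0$, so \eqref{equ:full-conditions} holds automatically when $|R|=1$; what remains is \eqref{equ:fkWR}, namely that every $k$-subset of Cayley vertices hitting each group defines a face of the corresponding Cayley polytope, for $k\le\lexp{d+1}$ when $|R|=2$ and $k\le\lexp{d+2}$ when $|R|=3$. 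This is carried out in Lemmas~\ref{lem:fkW2-tau} and \ref{lem:fkW3-tau}: for a fixed such subset $\UU$ one writes down the square determinant $H_\UU(\mb{x})$ (the hyperplane through $\UU$, padded by ``points at infinity'' $\delta T$ when needed) and shows $H_\UU(\mbv)>0$ for every remaining vertex $\mbv$ by normalising it --- subtract the appropriate rows, shift the first column rightward through an even number of swaps --- into the canonical determinants $D_{n,m}(\tau;\ldots)$ and $E_{n,m,k}(\tau;\ldots)$ of the auxiliary Lemmas~\ref{lem:det2} and \ref{lem:det3}, whose $\tau$-expansions have strictly positive leading coefficients. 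Since there are only finitely many pairs $(\UU,\mbv)$, a single $\tau^\star>0$ works simultaneously for all of them and for all $R$ with $2\le|R|\le 3$, and then \eqref{equ:fkWR} holds at $\tau=\tau^\star$.

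I expect the main obstacle to be precisely this determinant sign analysis: choosing the scalings $\nu_i$ and the exponent $M$ so that, after row/column normalisation, the leading $\tau$-power of $H_\UU(\mbv)$ carries a \emph{positive} coefficient uniformly over all admissible $\UU$ and all outside vertices $\mbv$ --- this uniformity is exactly what Lemmas~\ref{lem:det2} and \ref{lem:det3} (deferred to the appendix) are engineered to guarantee, and everything else in the construction is bookkeeping. Once \eqref{equ:fkWR} is in hand for $\tau=\tau^\star$, combining it with the neighborliness of the $P_i$ and with $f_{-1}(\fF_R)=(-1)^{|R|-1}$, $f_{k-1}(\fF_R)=0$ for $1\le k\le|R|$, yields \eqref{equ:full-conditions} in full; Theorem~\ref{thm:ms3ub} then forces equality in \eqref{equ:ms3ub} for all $1\le k\le d$, which together with the $d=2,3$ cases gives the theorem.
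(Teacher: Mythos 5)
Your proposal matches the paper's proof essentially step for step: reducing the theorem to the equality conditions of Theorem~\ref{thm:ms3ub}, disposing of $d=2$ by direct evaluation, handling $d=3$ via Weibel's identity \eqref{equ:ms3-3polytopes-subsets} and the Fukuda--Weibel promotion of vertex-maximality, and for $d\ge 4$ building the three polytopes on the scaled moment-like curves with $\nu_i=3-i$, $\z=\tau^M$, and verifying \eqref{equ:fkWR} via Lemmas~\ref{lem:fkW2-tau} and \ref{lem:fkW3-tau} using the determinant estimates of Lemmas~\ref{lem:det2} and \ref{lem:det3}. There is no substantive deviation from the paper's argument, and the uniform choice of $\tau^\star$ over the finitely many $(\UU,\mbv)$ pairs is exactly as in the text.
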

\section{Summary and  open problems}
\label{sec:concl}

In this paper we have computed the maximum number of
$k$-faces, $f_k(P_1+P_2+P_3)$, $0\le{}k\le{}d-1$, of the Minkowski sum
of three $d$-polytopes $P_1,P_2$ and $P_3$ in $\reals^d$ as a function
of the number of their  vertices $n_1,n_2$ and $n_3.$ 
When $d=2$ our expressions reduce to known tight  bounds, while
for $d=3$ we show the tightness of our upper bounds by
exploiting results from \cite{fw-fmacp-07} and \cite{w-mfmsl-12}. 
In four or more dimensions we present a novel construction that
achieves the upper bounds:
we consider the $d$-dimensional moment-like curves 
$\mc_1(t)=(t,\z t^2,\z t^3,t^4,\ldots,t^{d})$,
$\mc_2(t)=(\z t,t^2,\z t^3,t^4,\ldots,t^{d})$, and 
$\mc_3(t)=(\z t,\z t^2,t^3,t^4,\ldots,t^{d})$, and we show that our 
maximal values are attained when $P_i$ is the $d$-polytope with
vertex set
\begin{equation*}
  V_i=\{\mc_i( x_{i,1}\tau^\star),\mc_i(x_{i,2}\tau^\star),\ldots,
  \mc_i(x_{i,n_i}\tau^\star)\},
  \qquad i=1,2,3,
\end{equation*}
with $0<x_{i,1}<x_{i,2}<\cdots<x_{i,n_i}$ and $\z=(\tau^\star)^M$. The
parameter value $\tau^\star$ is a sufficiently small positive number,
while $M$ is chosen sufficiently large.

Our ultimate goal is to extend our results for the Minkowski
sum of $r$ $d$-polytopes in $\mathbb R^d$, for $r\ge{} 4$ and $d\ge{}3$. 
Towards this direction, we can extend our methodology and tools so as
to prove relations for  $r$ polytopes that generalize 
certain relations that hold true for two or three polytopes. 
For example, relation~\eqref{equ:hkKR} in
Lemma~\ref{lem:hkK-hkQ-dehn-sommervile} generalizes to:
\begin{equation*}
  h_k(\kK_R)=\sum_{\emptyset\subset{}S\subseteq{}R}g_{k}^{(|R|-|S|)}(\fF_S),
  \qquad 0\le{}k\le{}d+|R|-1,
\end{equation*}
while the Dehn-Sommerville-like equations in the same lemma
(cf.~rel. \eqref{equ:DSW}),
generalize to:
\begin{equation}\label{eq:generalDS}
  h_{d+r-1-k}(\fF_{[r]})=h_k(\kK_{[r]}), \qquad 0\le{}k\le{}d+r-1,
\end{equation}
where $[r]=\{1,2,\ldots,r\}$, while $\fF_R$ and $\kK_R$,
  $\emptyset \subset R \subseteq [r]$, are defined
as in Section~\ref{sec:cayley}.
Notice that, since for $r=1$ we have
$\fF_{[1]}\equiv{}\kK_{[1]}\equiv\bx{}P_1$,
the equations in \eqref{eq:generalDS} reduce to the well-known 
Dehn-Sommerville equations for a simplicial $d$-polytope.
We can also obtain a generalization of relation \eqref{r4}.
Let $\qQ$ be the simplicial $(d+r-1)$-sphere we get by 
performing stellar subdivisions on the non-simplicial
faces of the Cayley polytope of the $r$ polytopes.
For all $0\le{}k\le{}d+r-1$, we can obtain the following two
expressions relating the $h$-vector elements of $\bx\qQ$ with those
of $\fF_S$ and $\kK_S$, $\emptyset{}\subset{}S\subseteq[r]$: 
\begin{align*}
  h_k(\bx{}Q)&=h_k(\fF_{[r]})+\sum_{\emptyset{}\subset{}S\subset{}[r]}
    \sum_{i=0}^{r-|S|-1}E_{r-|S|,i}\,h_{k-i}(\fF_S),\\
  h_k(\bx{}Q)&=h_k(\fF_{[r]})+\sum_{\emptyset{}\subset{}S\subset{}[r]}
  \sum_{i=0}^{r-|S|-1}E_{r-|S|,i}\,h_{k-1-i}(\kK_S),
\end{align*}
where $E_{m,k}$, $m\ge{}k+1>0$, are the Eulerian numbers
\cite{gkp-cm-89,oeis:A008292}:
\begin{equation*}
  E_{m,k}=\sum_{i=0}^{k}(-1)^{i}\binom{m+1}{i}(k+1-i)^m,
  \qquad m\ge{}k+1>0.
\end{equation*}
A recurrence relation similar to \eqref{r8} in Lemma~\ref{lem:hkWrecur}
is not as straightforward to obtain. However, we conjecture that
the following recurrence relation holds for all $0\le{}k\le{}d+r-2$:
\begin{equation*}
  h_{k+1}(\fF_{[r]})\le\frac{n_{[r]}-d-r+1+k}{k+1}\,h_k(\fF_{[r]})
  +\sum_{i=1}^{r}\frac{n_i}{k+1}\,g_k(\fF_{[r]\sm\{i\}}),
  \qquad n_{[r]}=\sum_{i=1}^{r}n_i.
\end{equation*}

The bounds presented in this paper refer to polytopes of the same
dimension. We would like to derive similar bounds for two or
more polytopes when the dimensions of these polytopes differ,
as well as in the special case of simple polytopes.
Finally, a similar problem is to express the number of $k$-faces of the
Minkowski sum of $r$ $d$-polytopes in terms of the number of facets of
these polytopes. Results in this direction are known for
$d=2$ and $d=3$ only.  We would like to
derive such expressions for any $d\ge{}4$ and any number, $r$, of
summands.

\section*{Acknowledgements}
The work in this paper has been partially supported by the
FP7-REGPOT-2009-1 project ``Ar\-chi\-me\-des Center for Modeling, Analysis
and Computation'' (under grant agreement n\textsuperscript{o} 245749),
and has been co-financed by the European Union (European
Social Fund -- ESF) and Greek national funds through the Operational
Program ``Education and Lifelong Learning'' of the National Strategic
Reference Framework (NSRF) -- Research Funding Program:
THALIS -- UOA (MIS 375891).

\phantomsection
\addcontentsline{toc}{section}{References}

\bibliographystyle{alpha}
\bibliography{msmax3p}

\begin{thebibliography}{dBvKOS00}

\bibitem[A00]{oeis:A008292}
The integer sequence {A008292} ({E}ulerian numbers).
\newblock The On-Line Encyclopedia of Integer Sequences.
\newblock \\{\scriptsize\url{http://oeis.org/A008292}}.

\bibitem[BM71]{bm-sdcs-71}
H.~Bruggesser and P.~Mani.
\newblock Shellable decompositions of cells and spheres.
\newblock {\em Math. Scand.}, 29:197--205, 1971.
\newblock \\{\scriptsize\url{http://www.mscand.dk/article.php?id=2034}}.

\bibitem[CLO05]{CLO2}
David~A. Cox, John Little, and Donal O'Shea.
\newblock {\em Using Algebraic Geometry}, volume 185 of {\em Graduate Texts in
  Mathematics}.
\newblock Springer, New York, 2nd edition, 2005.

\bibitem[dBvKOS00]{bkos-cgaa-00}
Mark de~Berg, Marc van Kreveld, Mark Overmars, and Otfried Schwarzkopf.
\newblock {\em Computational Geometry: Algorithms and Applications}.
\newblock Springer-Verlag, Berlin, Germany, 2nd edition, 2000.

\bibitem[ES74]{EwSh74}
G.~Ewald and G.~C. Shephard.
\newblock Stellar {S}ubdivisions of {B}oundary {C}omplexes of {C}onvex
  {P}olytopes.
\newblock {\em Mathematische Annalen}, 210:7--16, 1974.
\newblock \\{\scriptsize\url{http://dx.doi.org/10.1007/BF01344542}}.

\bibitem[FHW09]{fhw-emcms-09}
Efi Fogel, Dan Halperin, and Christophe Weibel.
\newblock On the {E}xact {M}aximum {C}omplexity of {M}inkowski {S}ums of
  {P}olytopes.
\newblock {\em Discrete Comput. Geom.}, 42:654--669, 2009.
\newblock \\{\scriptsize\url{http://dx.doi.org/10.1007/s00454-009-9159-1}}.

\bibitem[Fog08]{f-mscaa-08}
Efraim Fogel.
\newblock {\em Minkowski Sum Construction and other Applications of
  Arrangements of Geodesic Arcs on the Sphere}.
\newblock PhD thesis, Tel-Aviv University, October 2008.

\bibitem[Fuk04]{f-zcmacp-04}
Komei Fukuda.
\newblock From the zonotope construction to the {M}inkowski addition of convex
  polytopes.
\newblock {\em J. Symb. Comput.}, 38:1261--1272, 2004.
\newblock \\{\scriptsize\url{http://dx.doi.org/10.1016/j.jsc.2003.08.007}}.

\bibitem[FW07]{fw-fmacp-07}
Komei Fukuda and Christophe Weibel.
\newblock $f$-vectors of {M}inkowski {A}dditions of {C}onvex {P}olytopes.
\newblock {\em Discrete Comput. Geom.}, 37(4):503--516, 2007.
\newblock \\{\scriptsize\url{http://dx.doi.org/10.1007/s00454-007-1310-2}}.

\bibitem[Gan60]{g-tm-60}
F.~R. Gantmacher.
\newblock {\em The Theory of Matrices}, volume~I.
\newblock Chelsea Publishing Co., New York, 1960.

\bibitem[Gan05]{g-atm-05}
F.~R. Gantmacher.
\newblock {\em Applications of the Theory of Matrices}.
\newblock Dover, Mineola, New York, 2005.

\bibitem[GKP89]{gkp-cm-89}
R.~L. Graham, D.~E. Knuth, and O.~Patashnik.
\newblock {\em Concrete Mathematics}.
\newblock Addison-Wesley, Reading, MA, 1989.

\bibitem[GS93]{gs-mapcc-93}
Peter Gritzmann and Bernd Sturmfels.
\newblock Minkowski {A}ddition of {P}olytopes: Computational {C}omplexity and
  {A}pplications to {G}r\"obner bases.
\newblock {\em SIAM J. Disc. Math.}, 6(2):246--269, May 1993.
\newblock \\{\scriptsize\url{http://dx.doi.org/10.1137/0406019}}.

\bibitem[HK71]{hk-la-71}
Kenneth Hoffman and Ray Kunze.
\newblock {\em Linear Algebra}.
\newblock Prentice Hall, 2nd edition, 1971.

\bibitem[HRS00]{hrs-ctlsb-00}
Birkett Huber, J\"org Rambau, and Francisco Santos.
\newblock The {C}ayley {T}rick, lifting subdivisions and the {B}ohne-{D}ress
  theorem on zonotopal tilings.
\newblock {\em J. Eur. Math. Soc.}, 2(2):179--198, June 2000.
\newblock \\{\scriptsize\url{http://dx.doi.org/10.1007/s100970050003}}.

\bibitem[KT11]{kt-tlbnf-11}
Menelaos~I. Karavelas and Eleni Tzanaki.
\newblock Tight lower bounds on the number of faces of the {M}inkowski sum of
  convex polytopes via the {C}ayley trick, December 2011.
\newblock \href{http://arxiv.org/abs/1112.1535}{{\tt arXiv:1112.1535 [cs.CG]}}.

\bibitem[KT12]{kt-mnfms-12}
Menelaos~I. Karavelas and Eleni Tzanaki.
\newblock The maximum number of faces of the {M}inkowski sum of two convex
  polytopes.
\newblock In {\em Proceedings of the 23rd ACM-SIAM Symposium on Discrete
  Algorithms (SODA'12)}, pages 11--28, Kyoto, Japan, January 17--19, 2012.
\newblock \\{\scriptsize\url{http://doi.acm.org/10.1145/2095116.2095118}}.

\bibitem[Lat91]{Lat91}
Jean-Claude Latombe.
\newblock {\em Robot Motion Planning}.
\newblock Kluwer Academic Publishers, Norwell, Massachusetts, USA, 1991.

\bibitem[LM04]{lm-cpq-03}
Ming~C. Lin and Dinesh Manocha.
\newblock Collision and proximity queries.
\newblock In Jacob~E. Goodman and Joseph O'Rourke, editors, {\em Handbook of
  Discrete and Computational Geometry}, chapter~35, pages 787--808. CRC Press,
  Boca Raton, Florida, 2nd edition, 2004.

\bibitem[Mat02]{Mat02}
Jiri Matousek.
\newblock {\em Lectures on Discrete Geometry}.
\newblock Graduate Texts in Mathematics. Springer-Verlag New York, Inc., New
  York, 2002.

\bibitem[McM70]{m-mnfcp-70}
P.~McMullen.
\newblock The maximum numbers of faces of a convex polytope.
\newblock {\em Mathematika}, 17:179--184, 1970.
\newblock \\{\scriptsize\url{http://dx.doi.org/10.1112/S0025579300002850}}.

\bibitem[San09]{s-tovnm-09}
Raman Sanyal.
\newblock Topological obstructions for vertex numbers of {M}inkowski sums.
\newblock {\em J. Comb. Theory, Ser. A}, 116(1):168--179, 2009.
\newblock \\{\scriptsize\url{http://dx.doi.org/10.1016/j.jcta.2008.05.009}}.

\bibitem[Stu96]{s-gbcp-96}
Bernd Sturmfels.
\newblock {\em Gr\"obner {B}ases and {C}onvex {P}olytopes}, volume~8 of {\em
  Univ. Lectures Series}.
\newblock American Mathematical Society, Providence, Rhode Island, 1996.

\bibitem[TRH00]{trhh-smcpma-00}
Alexander~V. Tuzikov, Jos~B.T.M. Roerdink, and Henk~J.A.M. Heijmans.
\newblock Similarity measures for convex polyhedra based on {M}inkowski
  addition.
\newblock {\em Pattern Recognition}, 33(6):979--995, 2000.
\newblock
  \\{\footnotesize\url{http://dx.doi.org/10.1016/S0031-3203(99)00159-4}}.

\bibitem[Wei07]{w-mspcc-07}
Christophe Weibel.
\newblock {\em Minkowski Sums of Polytopes: Combinatorics and Computation}.
\newblock PhD thesis, \'Ecole Polytechnique F\'ed\'erale de Lausanne, 2007.

\bibitem[Wei12]{w-mfmsl-12}
Christophe Weibel.
\newblock Maximal f-vectors of {M}inkowski {S}ums of {L}arge {N}umbers of
  {P}olytopes.
\newblock {\em Discrete Comput. Geom.}, 47(3):519--537, April 2012.
\newblock \\{\scriptsize\url{http://dx.doi.org/10.1007/s00454-011-9385-1}}.

\bibitem[Zie95]{z-lp-95}
G\"unter~M. Ziegler.
\newblock {\em Lectures on Polytopes}, volume 152 of {\em Graduate Texts in
  Mathematics}.
\newblock Springer-Verlag, New York, 1995.

\end{thebibliography}

\phantomsection
\addcontentsline{toc}{section}{Appendix}
\appendix
\allowdisplaybreaks

\newcommand{\proofseparator}{\smallskip\hrule\smallskip}

\section{Omitted \& full proofs}
\label{app:omitted}

\subsection{Omitted \& full proofs of Section~\ref{sec:ds}}
\label{app:ds}

\begin{proof}[Proof of Lemma~\ref{lem:gvec_as_sum}]
  The result clearly holds for $m=0$, since:
  \begin{equation*}
    g_k^{(0)}(\yY)=h_k(\yY)=\sum_{i=0}^{0}(-1)^i\tbinom{0}{i}h_{k-i}(\yY).
  \end{equation*}
  Suppose the relation holds for some $m\ge{}0$. We will show it
  holds for $m+1$. Indeed:
  \begin{align*}
    g_k^{(m+1)}(\yY)&=g_k^{(m)}(\yY)-g_{k-1}^{(m)}(\yY)\\
    &=\sum_{i=0}^m(-1)^i\tbinom{m}{i}h_{k-i}(\yY)
    -\sum_{i=0}^m(-1)^i\tbinom{m}{i}h_{k-1-i}(\yY)\\
    &=\sum_{i=0}^{m+1}(-1)^i\tbinom{m}{i}h_{k-i}(\yY)
    -\sum_{j=1}^{m+1}(-1)^{j-1}\tbinom{m}{j-1}h_{k-j}(\yY)\\
    &=\sum_{i=0}^{m+1}(-1)^i\tbinom{m}{i}h_{k-i}(\yY)
    -\sum_{j=0}^{m+1}(-1)^{j-1}\tbinom{m}{j-1}h_{k-j}(\yY)\\
    &=\sum_{i=0}^{m+1}(-1)^i\tbinom{m}{i}h_{k-i}(\yY)
    +\sum_{i=0}^{m+1}(-1)^i\tbinom{m}{i-1}h_{k-i}(\yY)\\
    &=\sum_{i=0}^{m+1}(-1)^i\left[\tbinom{m}{i}+\tbinom{m}{i-1}\right]
    h_{k-i}(\yY)\\
    &=\sum_{i=0}^{m+1}(-1)^i\tbinom{m+1}{i}h_{k-i}(\yY).\qedhere
  \end{align*}
\end{proof}

\proofseparator

\begin{proof}[Proof of Lemma~\ref{lem:sumoperator_gvec}]
  By replacing $h_{k-\nu-j}(\yY)$ from its defining equation, we get:
  \begin{align}
    g^{(D-\delta-\nu)}_{k-\nu}(\yY)
    &=\sum_{j=0}^{D-\delta-\nu}(-1)^j\tbinom{D-\delta-\nu}{j}h_{k-\nu-j}(\yY)
    \notag\\
    &=\sum_{j=0}^{D-\delta-\nu} (-1)^j\tbinom{D-\delta-\nu}{j}
    \sum_{i=0}^{\delta+1}(-1)^{k-\nu-j-i}\tbinom{\delta+1-i}{\delta+1-k+\nu+j}
    f_{i-1}(\yY)\label{rel1}\\
    &=\sum_{j=0}^{D-\delta-\nu} (-1)^j\tbinom{D-\delta-\nu}{j}
    \sum_{i=0}^{D+1}(-1)^{k-\nu-j-i}\tbinom{\delta+1-i}{\delta+1-k+\nu+j}
    f_{i-1}(\yY)\label{rel2}\\
    &=\sum_{i=0}^{D+1}(-1)^{k-\nu-i}f_{i-1}(\yY)\sum_{j=0}^{D-\delta-\nu}
    \tbinom{D-\delta-\nu}{j}\tbinom{\delta+1-i}{k-\nu-i-j}\label{rel3}\\
    &=\sum_{i=0}^{D+1}(-1)^{k-\nu-i}\tbinom{D+1-\nu-i}{k-\nu-i}f_{i-1}(\yY)
    \label{rel4}\\
    &=\sum_{i=0}^{k-\nu} (-1)^{k-\nu-i}\tbinom{D+1-\nu-i}{D+1-k}
    f_{i-1}(\yY)
    \label{rel5}\\
    &=\sum_{j=\nu}^{k} (-1)^{k-j}\tbinom{D+1-j}{D+1-k}f_{j-\nu-1}(\yY)
    \label{rel6}\\
    &=\sum_{j=0}^{D+1}(-1)^{k-j}\tbinom{D+1-j}{D+1-k}f_{j-\nu-1}(\yY)
    \label{rel7}\\
    &=\SSS_k(\yY; D,\nu),\notag
  \end{align}
  where:
  \begin{itemize}
  \item
    in order to go from \eqref{rel1} to \eqref{rel2}, we used that
    $\binom{\delta+1-i}{\delta+1-k+\nu+j}=0$ for 
    $i>\delta+1$,
  \item
    in order to go from \eqref{rel3} to \eqref{rel4}, we used the
    combinatorial identity:
    \begin{equation*}
      \sum_{i=0}^n\tbinom{n}{i}\tbinom{m}{k-i}=\tbinom{n+m}{k}
      =\sum_{i=0}^k\tbinom{n}{i}\tbinom{m}{k-i}=\tbinom{n+m}{k},
    \end{equation*}
    with $n\sub{}D-\delta-\nu$, $m\sub{}\delta+1-i$, $i\sub{}j$,
    $k\sub{}k-\nu-i$,
  \item
    in order to go from \eqref{rel4} to \eqref{rel5}, we used that
    $\binom{D+1-\nu-i}{k-\nu-i}=0$ for $i>k-\nu$, and that
    $\binom{D+1-\nu-i}{k-\nu-i}=\binom{D+1-\nu-i}{(D+1-\nu-i)-(k-\nu-i)}
    =\binom{D+1-\nu-i}{D+1-k}$, and, finally,
  \item
    in order to go from \eqref{rel6} to \eqref{rel7}, we used that
    $f_{j-\nu-1}(\yY)=0$ for $j<\nu$ (\ie for $j-\nu-1<-1$), and that
    $\binom{D+1-j}{D+1-k}=0$ for $j>k$.\qedhere
  \end{itemize}
\end{proof}

\subsection{Omitted \& full proofs of Section~\ref{sec:recur}}
\label{app:recur}

\begin{proof}[Proof of Lemma~\ref{lem:recur-relation-F3-wrt-K}]
Using relation \eqref{r4}, and after rearranging the terms, the left
hand side of relation \eqref{equ:linksQ} becomes:
\begin{equation}\label{equ:full-expansion}
  \begin{aligned}
    &\overbrace{(k+1) h_{k+1}(\fF_{[3]})+ (d+2-k)  h_k(\fF_{[3]})}^{T_1}
    +\overbrace{\sum_{R\in\sset_2}[(k+1) h_{k+1}(\fF_R)+(d+2-k) h_k(\fF_R)]}^{T_2}\\
    &+\overbrace{\sum_{R\in\sset_1}[(k+1) h_{k+1}(\fF_R)+(d+2-k) h_k(\fF_R)]}^{T_3}
    +\overbrace{\sum_{R\in\sset_1}[(k+1) h_{k}(\fF_R)+(d+2-k) h_{k-1}(\fF_R)]}^{T_4}.
  \end{aligned}
\end{equation}
We are going to analyze each term in the expression above separately.
For any $R\in\sset_2$:
(i) the relation at the top of page 18 in \cite[Lemma 3.2]{kt-mnfms-12},
(ii) relations \eqref{equ:hkKR}, with $R\in\sset_2$, and
(iii) relation (3.9) in \cite{kt-mnfms-12},
give: 
\begin{align*}
  \nonumber (k+1) h_{k+1}(\fF_R) + (d+1-k) h_k(\fF_R) &= 
  \sum_{i\in{}R}\sum_{v\in \VV_i} [ h_{k}(\kK_{R} / v) - g_k(\kK_{\{i\}} /v)]\\
  &=\sum_{v\in\VV_R} h_{k}(\kK_{R} / v)
  - \sum_{\emptyset \subset S \subset R} \sum_{v\in \VV_S}
  g_k(\kK_{S} /v).
\end{align*}
Hence term $T_2$ can be rewritten as:
\begin{equation}\label{r34b-ms2p}
  \begin{aligned}
    T_2&=\sum_{R\in\sset_2}h_k(\fF_R)
    +\sum_{R\in\sset_2}\sum_{v\in\VV_R}h_{k}(\kK_{R}/v)
    -\sum_{R\in\sset_2}\sum_{\emptyset\subset{}S\subset{}R}\sum_{v\in \VV_S}
    g_k(\kK_{S} /v)\\
    &=\overbrace{\sum_{R\in\sset_2}h_k(\fF_R)}^{T_5}
    +\overbrace{\sum_{R\in\sset_2}\sum_{v\in\VV_R}h_{k}(\kK_{R}/v)}^{T_6}
    -\overbrace{2\sum_{R\in\sset_1}\sum_{v\in \VV_R}g_k(\kK_{R}/v)}^{T_7}.
  \end{aligned}
\end{equation}

Applying relation \eqref{equ:links} to the $(d-1)$-complex
$\fF_R$, $R\in \sset_1$, and using the identity $\fF_R \equiv \kK_R (\equiv
\bx{}P_R)$, we derive the following expressions:
\begin{align*}
  (k+1) h_{k+1}(\fF_R) + (d-k) h_k(\fF_R)
  &=\sum_{v\in \VV_R} h_k (\kK_R / v),\\
  k h_{k}(\fF_R) + (d-(k-1)) h_{k-1}(\fF_R)
  &= \sum_{v\in \VV_R} h_{k-1} (\kK_R / v),
\end{align*}
which, in turn yield the following expansions for $T_3$ and $T_4$:
\begin{align}
  \label{r5left2} 
  T_3&=\overbrace{\sum_{R\in\sset_1} \sum_{v\in \VV_R} h_k (\kK_R / v)}^{T_8}
  +\overbrace{2\sum_{R\in\sset_1}h_k(\fF_R)}^{T_9},\\[4pt]
  \label{r5left3} 
  T_4&= \overbrace{\sum_{R\in\sset_1} \sum_{v\in \VV_R} h_{k-1} (\kK_R / v)}^{T_{10}}
  +\overbrace{\sum_{R\in\sset_1}[h_{k}(\fF_R)+h_{k-1}(\fF_R)]}^{T_{11}}.
\end{align}

On the other hand, utilizing the expressions in Lemma
\ref{lem:h-vectors-Q-links}, we 
arrive at the following expansion for the right-hand side of
\eqref{equ:linksQ}:
\begin{equation}\label{equ:h-vectors-Q-links-RHS}
  \begin{aligned}
    &\sum_{i=1}^{3}\sum_{v\in\VV_i}\left[h_k(\kK_{[3]}/v)
      +\sum_{\{i\}\subseteq{}R\subset[3]}h_{k-1}(\kK_{R}/v)
      +h_{k-2}(\kK_{\{i\}}/v)\right]\\
    &+\sum_{R\in\sset_1}[h_k(\fF_R) + h_{k-1}(\fF_R)]
    +\sum_{R\in\sset_2}\sum_{\emptyset\subset{}S\subseteq{}R}h_{k}(\fF_S).
  \end{aligned}
\end{equation}
Since 
\begin{align*}
  \sum_{i=1}^{3}\sum_{v\in\VV_i}h_k(\kK_{[3]}/v)
  &=\sum_{v\in\VV_{[3]}}h_k(\kK_{[3]}/v),\\
  \sum_{i=1}^{3}\sum_{v\in\VV_i}\sum_{\{i\}\subseteq{}R\subset[3]}h_{k-1}(\kK_{R}/v)
  &=\sum_{R\in\sset_2}\sum_{v\in\VV_R}h_{k-1}(\kK_{R}/v)
  +\sum_{R\in\sset_1}\sum_{v\in\VV_R}h_{k-1}(\kK_{R}/v),\\
  \sum_{i=1}^{3}\sum_{v\in\VV_i}h_{k-2}(\kK_{\{i\}}/v)
  &=\sum_{R\in\sset_1}\sum_{v\in\VV_R}h_{k-2}(\kK_{R}/v),
\end{align*}
and
\begin{equation*}
  \sum_{R\in\sset_2}\sum_{\emptyset\subset{}S\subseteq{}R}h_{k}(\fF_S)
  =\sum_{R\in\sset_2}h_k(\fF_R)+2\sum_{R\in\sset_1}h_k(\fF_R),
\end{equation*}
the expression in \eqref{equ:h-vectors-Q-links-RHS} can be rewritten
in the following more convenient form:
\begin{align*}
  &\overbrace{\sum_{v\in\VV_{[3]}}h_k(\kK_{[3]}/v)}^{T_{12}}
  +\overbrace{\sum_{R\in\sset_2}\sum_{v\in\VV_R}h_{k-1}(\kK_{R}/v)}^{T_{13}}
  +\overbrace{\sum_{R\in\sset_1}\sum_{v\in\VV_R}[h_{k-1}(\kK_{R}/v)+h_{k-2}(\kK_{R}/v)]}^{T_{14}}\\
  &+\overbrace{\sum_{R\in\sset_1}[h_k(\fF_R) + h_{k-1}(\fF_R)]
  +\sum_{R\in\sset_2}h_k(\fF_R)+2\sum_{R\in\sset_1}h_k(\fF_R)}^{T_{15}}.
\end{align*}

Solving relation \eqref{equ:linksQ} in terms of the term $T_1$, we get:
\begin{align*}
  T_1&=T_{12}+T_{13}+T_{14}+T_{15}-(T_2+T_3+T_4)\\
  &=T_{12}+T_{13}+T_{14}+T_{15}-[(T_5+T_6-T_7)+(T_8+T_9)+(T_{10}+T_{11})]\\
  &=T_{12}+(T_{13}-T_{6})+(T_{14}+T_7-T_8-T_{10})+(T_{15}-T_5-T_{9}-T_{11})\\
  &=T_{12}+(T_{13}-T_{6})+(T_{14}+T_7-T_8-T_{10}),
\end{align*}
where we used the fact that the terms $T_5$, $T_9$ and $T_{11}$
cancel-out with the term $T_{15}$.
Observe now that:
\begin{equation*}
  T_{13}-T_6
  =\sum_{R\in\sset_2}\sum_{v\in\VV_R}h_{k-1}(\kK_{R}/v)
  -\sum_{R\in\sset_2}\sum_{v\in\VV_R}h_{k}(\kK_{R}/v)
  =-\sum_{R\in\sset_2}\sum_{v\in\VV_R}g_k(\kK_{R}/v),
\end{equation*}
while
\begin{align*}
  T_{14}+T_7-T_8-T_{10}
  &=\sum_{R\in\sset_1}\sum_{v\in\VV_R}[h_{k-1}(\kK_{R}/v)+h_{k-2}(\kK_{R}/v)]
  +2\sum_{R\in\sset_1}\sum_{v\in \VV_R}g_k(\kK_{R}/v)\\
  &\quad
  -\sum_{R\in\sset_1} \sum_{v\in \VV_R} h_k (\kK_R / v)
  -\sum_{R\in\sset_1} \sum_{v\in \VV_R} h_{k-1} (\kK_R / v)\\
  &=\sum_{R\in\sset_1}\sum_{v\in\VV_R}\{h_{k-1}(\kK_{R}/v)+h_{k-2}(\kK_{R}/v)
  +2[h_k(\kK_{R}/v)-h_{k-1}(\kK_{R}/v)]\\
  &\qquad\qquad\qquad-h_k(\kK_R/v)-h_{k-1}(\kK_R/v)\}\\
  &=\sum_{R\in\sset_1}\sum_{v\in\VV_R}[h_k(\kK_{R}/v)
  -2h_{k-1}(\kK_{R}/v)+h_{k-2}(\kK_{R}/v)]\\
  &=\sum_{R\in\sset_1}\sum_{v\in\VV_R}g^{(2)}_k(\kK_{R}/v).
\end{align*}
Hence,
\begin{align*}
  T_1&=\sum_{v\in\VV_{[3]}}h_k(\kK_{[3]}/v)
  -\sum_{R\in\sset_2}\sum_{v\in\VV_R}g_k(\kK_{R}/v)
  +\sum_{R\in\sset_1}\sum_{v\in\VV_R}g^{(2)}_k(\kK_{R}/v)\\
  &=\sum_{\emptyset{}\subset{}R\subseteq[3]}(-1)^{3-|R|}
  \sum_{v\in\VV_R}g^{(3-|R|)}_k(\kK_{R}/v).\qedhere
\end{align*}
\end{proof}

\proofseparator

\begin{proof}[Proof of Lemma~\ref{lem:hkWrecur}]
  By Lemma~\ref{lem:LinksNoLinks}, relation
  \eqref{equ:recur-relation-F3-wrt-K} yields:
  \begin{align}
    \nonumber (k+1) h_{k+1}(\fF_{[3]})+ (d+2-k)  h_k(\fF_{[3]}) &\leq 
    \sum_{\emptyset \subset R \subseteq [3]} (-1)^{3-|R|}
    \sum_{v\in \VV_R}  g_k^{(3-|R|)}(\kK_{R} ) \\[1em]
    &=  n_{[3]}  h_k(\kK_{[3]}) - \sum_{R\in \sset_2} n_R g_k(\kK_{R})
    + \sum_{R\in \sset_1} n_R g_k^{(2)}(\kK_{R}) 
    \label{r8a}
  \end{align}

  By relation \eqref{equ:hkKR} with $R\equiv{}[3]$, we can write
  $h_k(\kK_{[3]})$ as:
  \begin{equation}\label{equ:hkKexpansion2}
    h_k(\kK_{[3]})=h_k(\fF_{[3]}) + \sum_{R \in \sset_2}g_{k}(\fF_R)
    + \sum_{R\in \sset_1}g_{k}^{(2)}(\fF_R),
  \end{equation}
  whereas from relation \eqref{equ:hkKR} for all $R\in \sset_2$ we easily get:
  \begin{equation}
    g_{k}(\kK_R)= g_{k}(\fF_R)
    + \sum_{\emptyset \subset S \subset R} g^{(2)}_k(\fF_S ).
    \label{equ:g_kF_R} 
  \end{equation} 
  
  Since $\kK_R\equiv\fF_R$, for any $R\in\sset_1$,
  we can employ relations \eqref{equ:hkKexpansion2} and
  \eqref{equ:g_kF_R} to rewrite the right hand side of
  \eqref{r8a} as follows:
  \begin{align*}
    n_{[3]} h_k(\kK_{[3]})-\sum_{R\in \sset_2} n_R&\,g_k(\kK_{R})
    + \sum_{R\in \sset_1} n_R g_k^{(2)}(\kK_{R})\\
    &=n_{[3]} h_k(\fF_{[3]}) +  n_{[3]}\sum_{R\in \sset_2}
    g_k(\fF_R)+ n_{[3]}\sum_{R\in\sset_1} g_k^{(2)}(\fF_R)\\
    &\quad  -\left[ \sum_{R\in \sset_2} n_R g_{k}(\fF_R)  + \sum_{R\in \sset_2}
      n_R \sum_{\emptyset \subset S \subset R} g^{(2)}_k(\fF_S)\right]
    + \sum_{R\in\sset_1} n_R g_k^{(2)} (\fF_R)\\[1em]
    & = n_{[3]} h_k(\fF_{[3]})+\sum_{R\in\sset_2}(n_{[3]}-n_R)g_k(\fF_R)\\
    &\quad+\overbrace{\left[n_{[3]} \sum_{i=1}^{3} g_k^{(2)}(\fF_{\{i\}}) -
        \sum_{R\in \sset_2} n_R \sum_{\emptyset\subset{}S\subset{}R}g^{(2)}_k(\fF_S )
        +\sum_{i=1}^{3} n_{i} g_k^{(2)}(\fF_{\{i\}})\right]}^{T}.
  \end{align*}
  Using the identity:
  \begin{equation*}
    \sum_{R\in \sset_2} n_R \sum_{\emptyset \subset S \subset R}g^{(2)}_k(\fF_S ) 
    = 2\sum_{i=1}^3n_i g_k^{(2)}(\fF_{\{i\}})
    + \sum_{i=1}^3 n_{[3]\sm \{i\}}\, g_k^{(2)}(\fF_{\{i\}}),
  \end{equation*}
  we see that the last term (term $T$) in the relation above vanishes:
  \begin{align*}
    n_{[3]} &\sum_{i=1}^{3} g_k^{(2)}(\fF_{\{i\}}) -
    \sum_{R\in \sset_2} n_R \sum_{\emptyset\subset{}S\subset{}R}g^{(2)}_k(\fF_S )
    +\sum_{i=1}^{3} n_{\{i\}} g_k^{(2)}(\fF_{\{i\}})\\
    &=
    n_{[3]} \sum_{i=1}^{3} g_k^{(2)}(\fF_{\{i\}})
    -\left[2\sum_{i=1}^3n_i g_k^{(2)}(\fF_{\{i\}})
      + \sum_{i=1}^3 n_{[3]\sm \{i\}}\, g_k^{(2)}(\fF_{\{i\}})\right]
    +\sum_{i=1}^{3} n_{i} g_k^{(2)}(\fF_{\{i\}})\\
    &=\sum_{i=1}^{3}(n_{[3]}-2n_i-n_{[3]\sm
      \{i\}}+n_i)\,g_k^{(2)}(\fF_{\{i\}})
    =0.
  \end{align*}
  Hence, relation \eqref{r8a} simplifies to:
  \begin{align*}
    (k+1) h_{k+1}(\fF_{[3]})&+ (d+2-k)  h_k(\fF_{[3]}) \leq
    n_{[3]} h_k(\fF_{[3]})+\sum_{R\in\sset_2}(n_{[3]}-n_R)g_k(\fF_R)\\
    &=n_{[3]} h_k(\fF_{[3]})+\sum_{R\in\sset_2}n_{[3]\sm{}R}g_k(\fF_R)
    = n_{[3]} h_k(\fF_{[3]}) +  \sum_{i=1}^3 n_{i}g_k(\fF_{[3]\sm \{i\}}),
  \end{align*}
  from which we obtain the relation in the statement of the lemma.
\end{proof}

\subsection{Omitted \& full proofs of Section~\ref{sec:ub}}
\label{app:ub}

\begin{proof}[Proof of Lemma~\ref{lem:hkKbound}]
The bound for $h_k(\kK_{[3]})$ holds as equality for $k=0$, since
by relation~\eqref{equ:hkKR} with $R=[3]$, (see also \eqref{equ:hkKexpansion2}), we have
  \begin{align*}
    h_0(\kK_{[3]})&=h_0(\fF_{[3]})+\sum_{R\in\sset_2}g_0(\fF_{R})
    +\sum_{i=1}^3g_0^{(2)}(\bx{}P_i)\\
    &=1+\sum_{R\in\sset_2}[h_0(\fF_{R})-h_{-1}(\fF_{R})]
    +\sum_{i=1}^3[h_0(\bx{}P_i)-2h_{-1}(\bx{}P_i)+h_{-2}(\bx{}P_i)]\\
    &=1+\sum_{R\in\sset_2}[(-1)-0]
    +\sum_{i=1}^3[1-2\cdot{}0+0]=1.
  \end{align*}
  Suppose now that $k\ge{}1$. Then, using relation
  \eqref{equ:gkFrecur}, we get, for $k\ge{}1$:
  \begin{align*}
    h_k(\kK_{[3]})
    &=h_k(\fF_{[3]})+\sum_{R\in\sset_2}g_k(\fF_R)
    +\sum_{R\in\sset_1}g_k^{(2)}(\fF_R)\\
    &\le{}h_k(\fF_{[3]})
    +\sum_{R\in\sset_2}\left[\tfrac{n_R-d-3}{k}h_{k-1}(\fF_{R})
    +\sum_{i\in{}R}\tfrac{n_{R\sm\{i\}}}{k}g_{k-1}(\bx{}P_i)\right]
    +\sum_{i=1}^{3}g_k^{(2)}(\bx{}P_i)\\
    &=h_k(\fF_{[3]})
    +\sum_{R\in\sset_2}\tfrac{n_R-d-3}{k}h_{k-1}(\fF_{R})
    +\sum_{i=1}^3\left[\tfrac{n_{[3]\sm\{i\}}}{k}g_{k-1}(\bx{}P_i)
        +g_k(\bx{}P_i)-g_{k-1}(\bx{}P_i)\right],
  \end{align*}
  which finally yields:
  \begin{equation}\label{equ:hkKrecur}
    h_k(\kK_{[3]})\le{}h_k(\fF_{[3]})
    +\sum_{R\in\sset_2}\tfrac{n_R-d-3}{k}h_{k-1}(\fF_{R})
    +\sum_{i=1}^3\left[\tfrac{n_{[3]\sm\{i\}}-k}{k}g_{k-1}(\bx{}P_i)
        +g_k(\bx{}P_i)\right].
  \end{equation}
  Since $n_R-d-3\ge{}2(d+1)-d-3=d-1>0$, for $R\in\sset_2$, and
  $n_{[3]\sm\{i\}}-k\ge{}2(d+1)-(d+2)=d>0$ for any
  $0\le{}k\le{}d+2$, we can use the upper bounds for
  $h_k(\fF_{[3]})$ and $h_{k-1}(\fF_{R})$, $R\in \sset_2$ from Lemma
  \ref{lem:hkWbound} and \cite[Lemma 3.3]{kt-mnfms-12}, respectively, in
  conjunction with the known upper bounds for the elements of the
  $g$-vector of a $d$-polytope (cf.~\cite[Corollary
  8.38]{z-lp-95}). More precisely:
  \begin{align*}
    h_k(\kK_{[3]})&\le{}\sum_{\emptyset\subset{}S\subseteq{}[3]}(-1)^{3-|S|}
    \tbinom{n_S-d-3+k}{k}
    +\sum_{R\in\sset_2}\tfrac{n_R-d-3}{k}
    \left[\tbinom{n_R-d-2+k-1}{k-1}-
      \sum_{i\in{}R}\tbinom{n_i-d-2+k-1}{k-1}\right]\\
    &\qquad
    +\sum_{i=1}^3\left[\tfrac{n_{[3]\sm\{i\}}-k}{k}\tbinom{n_i-d-2+k-1}{k-1}
      +g_k(\bx{}P_i)\right]\\
    &=\sum_{\emptyset\subset{}S\subseteq{}[3]}(-1)^{3-|S|}
    \tbinom{n_S-d-3+k}{k}
    +\sum_{R\in\rR_2}\tfrac{n_R-d-3}{k}
    \left[\tbinom{n_R-d-3+k}{k-1}-\sum_{i\in{}R}\tbinom{n_i-d-3+k}{k-1}\right]\\
    &\qquad
    +\sum_{i=1}^3\left[\tfrac{n_{[3]\sm\{i\}}}{k}\tbinom{n_i-d-3+k}{k-1}
      -\tbinom{n_i-d-3+k}{k-1}
      +\tbinom{n_i-d-2+k}{k}+g_k(\bx{}P_i)-\tbinom{n_i-d-2+k}{k}\right]\\
    &=\tbinom{n_{[3]}-d-3+k}{k}-\sum_{i=1}^3\tbinom{n_{[3]\sm\{i\}}-d-3+k}{k}
    +\sum_{i=1}^3\tbinom{n_i-d-3+k}{k}\\
    &\qquad+\sum_{R\in\sset_2}\tfrac{n_R-d-3}{k}
    \left[\tbinom{n_R-d-3+k}{k-1}
    -\sum_{i\in{}R}\tbinom{n_i-d-3+k}{k-1}\right]\\
    &\qquad
    +\sum_{i=1}^3\tfrac{n_{[3]\sm\{i\}}}{k}\tbinom{n_i-d-3+k}{k-1}
    +\sum_{i=1}^3\tbinom{n_i-d-3+k}{k}
    +\sum_{i=1}^3\left[g_k(\bx{}P_i)-\tbinom{n_i-d-2+k}{k}\right].
  \end{align*}
  From the proof of Lemma \ref{lem:gkFbound} it is easy to see that:
  \begin{align*}
    &\sum_{R\in\sset_2}\tfrac{n_R-d-3}{k}\left[\tbinom{n_R-d-3+k}{k-1}
      -\sum_{i\in{}R}\tbinom{n_i-d-3+k}{k-1}\right]
    +\sum_{i=1}^3\tfrac{n_{[3]\sm\{i\}}}{k}\tbinom{n_i-d-3+k}{k-1}\\
    &=\sum_{R\in\sset_2}\left[\tbinom{n_R-d-3+k}{k}
      -\sum_{i\in{}R}\tbinom{n_i-d-3+k}{k}\right]
    =\sum_{i=1}^3\tbinom{n_{[3]\sm{}\{i\}}-d-3+k}{k}
    -2\sum_{i=1}^3\tbinom{n_i-d-3+k}{k}
  \end{align*}
  Hence we have:
  \begin{align*}
    h_k(\kK_{[3]})&\le\tbinom{n_{[3]}-d-3+k}{k}
    -\sum_{i=1}^3\tbinom{n_{[3]\sm\{i\}}-d-3+k}{k}
    +\sum_{i=1}^3\tbinom{n_i-d-3+k}{k}
    +\sum_{i=1}^3\tbinom{n_{[3]\sm{}\{i\}}-d-3+k}{k}\\
    &\quad-2\sum_{i=1}^3\tbinom{n_i-d-3+k}{k}+\sum_{i=1}^3\tbinom{n_i-d-3+k}{k}
    +\sum_{i=1}^3\left[g_k(\bx{}P_i)-\tbinom{n_i-d-2+k}{k}\right]\\
    &=\tbinom{n_{[3]}-d-3+k}{k}
    +\sum_{i=1}^3\left[g_k(\bx{}P_i)-\tbinom{n_i-d-2+k}{k}\right].
  \end{align*}
  Since $g_k(\bx{}P_i)-\binom{n_i-d-2+k}{k}\le{}0$, for all $k\ge{}0$,
  we get the sought-for bound in \eqref{equ:hkKbound-smallk} for
  $0\le{}k\le{}d+2$. Furthermore, for $d$ odd and $k=\lexp{d}+1$,
  we have $g_k(\bx{}P_i)=0$, which yields the bound in
  \eqref{equ:hkKbound-midk}.

  To prove the equality claim, we distinguish between the cases
  $k\le\lexp{d}$, and $k=\lexp{d}+1$ with $d$ odd.
  Consider the case $k\le\lexp{d}$ first, and assume that
  $h_k(\kK_{[3]})=\binom{n_{[3]}-d-3+k}{k}$.
  From relation \eqref{equ:hkKrecur} we deduce that both
  $h_k(\fF_{[3]})$ and $g_k(\bx{}P_i)$, $1\le{}i\le{}3$, must be equal
  to their maximum values, since otherwise we would have that
  $h_k(\kK_{[3]})<\binom{n_{[3]}-d-3+k}{k}$. In view of Lemma
  \ref{lem:hkWbound}, the maximality of $h_k(\fF_{[3]})$ implies that
  $f_{l-1}(\fF_{[3]})=\sum_{\emptyset\subset{}S\subseteq{}[3]}
  (-1)^{3-|S|}\binom{n_S}{l}$, for all $0\le{}l\le{}k$, whereas the
  maximality of $g_k(\bx{}P_i)$ implies that $P_i$ is $k$-neighborly,
  for all $1\le{}i\le{}3$, \ie for all $1\le{}i\le{}3$,
  $f_{l-1}(\bx{}P_i)=f_{l-1}(\fF_{\{i\}})=\binom{n_i}{l}$, for all
  $0\le{}l\le{}k$. But then we also have that
  $g_{k-1}(\bx{}P_i)=\binom{n_i-d-2+k-1}{k-1}$, which gives:
  \begin{equation}\label{equ:gk2calc}
    g_k^{(2)}(\bx{}P_i)=g_k(\bx{}P_i)-g_{k-1}(\bx{}P_i)
    =\tbinom{n_i-d-2+k}{k}-\tbinom{n_i-d-2-k-1}{k-1}
    =\tbinom{n_i-d-3+k}{k}.
  \end{equation}
  By relation \eqref{r8}, the maximality of
  $h_k(\fF_{[3]})$ implies that $g_{k-1}(\fF_{[3]\sm\{i\}})$ attains its maximum
  value for all $1\le{}i\le{}3$. By following the argumentation in the
  proof of Lemma \ref{lem:gkFbound}, the maximality of
  $g_{k-1}(\fF_{[3]\sm\{i\}})$ further implies that
  $h_l(\fF_{[3]\sm\{i\}})$ is maximal, for all $0\le{}l\le{}k-1$.
  Solving, now, equation
  \eqref{equ:hkKR} (for $R\equiv{}[3]$) in terms of the sum of the
  $h_k(\fF_{[3]\sm\{i\}})$'s we get:
  \begin{equation*}
    \sum_{i=1}^3h_k(\fF_{[3]\sm\{i\}})=h_k(\kK_{[3]})-h_k(\fF_{[3]})
    +\sum_{i=1}^3h_{k-1}(\fF_{[3]\sm\{i\}})-\sum_{i=1}^3g_k^{(2)}(\bx{}P_i).
  \end{equation*}
  Substituting in the above equation the values for $h_k(\kK_{[3]})$,
  $h_k(\fF_{[3]})$, $h_{k-1}(\fF_{[3]\sm\{i\}})$ and
  $g_k^{(2)}(\bx{}P_i)$, it is easy to verify that
  \begin{equation*}
    \sum_{i=1}^3h_k(\fF_{[3]\sm\{i\}})=
    \sum_{i=1}^3\left[\tbinom{n_{[3]\sm\{i\}}-d-2+k}{k}
    -\sum_{j\in[3]\sm\{i\}}\tbinom{n_j-d-2+k}{k}\right].
  \end{equation*}
  In other words, the sum of the $h_k(\fF_{[3]\sm\{i\}})$'s attains its
  maximum value, which implies that each of the summands attains its
  maximum value. We thus conclude that $h_l(\fF_{[3]\sm\{i\}})$ is
  maximal, for all $0\le{}l\le{}k$, which, by \cite[Lemma
  3.3]{kt-mnfms-12}, implies that, for all $R\in\sset_2$,
  $f_{l-1}(\fF_{R})=\sum_{\emptyset\subset{}S\subseteq{}R}(-1)^{2-|S|}\binom{n_S}{l}$,
  for all $0\le{}l\le{}k$.

  Let us now consider the reverse direction
  and assume that for all $\emptyset\subset{}R\subseteq[3]$,
  $f_{l-1}(\fF_{R})=\sum_{\emptyset\subset{}S\subseteq{}R}
  (-1)^{|R|-|S|}\binom{n_S}{l}$, for all $0\le{}l\le{}k$ (for
  $k\le\lexp{d}$, $\min\{k,\lexp{d+|R|-1}\}=k$).
  Using Lemma \ref{lem:hkWbound}, the condition above, for $R=[3]$,
  implies that $h_l(\fF_{[3]})$ attains its upper bound value for all
  $0\le{}l\le{}k$. Using \cite[Lemma 3.3]{kt-mnfms-12}, the condition
  above, for $R\in\sset_2$, implies that $h_l(\fF_R)$ attains its upper
  bound value for all $0\le{}l\le{}k$, and thus $g_k(\fF_R)$ attains
  its upper bound value. Finally, the condition above,
  for $1\le{}i\le{}3$, implies that $P_i$ is $k$-neighborly, which means
  that $g_l(\bx{}P_i)=g_l(\fF_{\{i\}})=\binom{n_i-d-2+l}{l}$, for all
  $0\le{}l\le{}k$, and thus (cf. \eqref{equ:gk2calc})
  $g_k^{(2)}(\bx{}P_i)=g_k^{(2)}(\fF_{\{i\}})=\binom{n_i-d-3+k}{k}$.
  Appealing now to relation \eqref{equ:hkKR} for $R\equiv{}[3]$, it is easy to
  verify that $h_k(\kK_{[3]})=\binom{n_{[3]}-d-3+k}{k}$.

  We end the equality claim proof by considering the case
  $k=\lexp{d}+1$, for $d$ odd. Since for $d$ odd,
  $g_{\lexp{d}+1}(\bx{}P_i)=0$, relation \eqref{equ:hkKrecur},
  simplifies to:
  \begin{equation}\label{equ:hkKrecur-dodd}
    h_{\lexp{d}+1}(\kK_{[3]})\le{}h_{\lexp{d}+1}(\fF_{[3]})
    +\sum_{R\in\sset_2}\tfrac{n_R-d-3}{\lexp{d}+1}h_{\lexp{d}}(\fF_{R})
    +\sum_{i=1}^3\tfrac{n_{[3]\sm\{i\}}-\lexp{d}-1}{\lexp{d}+1}
    g_{\lexp{d}}(\bx{}P_i),
  \end{equation}
  while relation \eqref{equ:hkKR} (with $R\equiv{}[3]$) simplifies to:
  \begin{equation}\label{equ:hkKexpansion-dodd}
    h_{\lexp{d}+1}(\kK_{[3]})=h_{\lexp{d}+1}(\fF_{[3]})
    +\sum_{i=1}^3g_{\lexp{d}+1}(\fF_{[3]\sm\{i\}})
    -\sum_{i=1}^3g_{\lexp{d}}(\bx{}P_i).
  \end{equation}
  The argument in this case is essentially the same as
  before. Assuming that $h_{\lexp{d}+1}(\kK_{[3]})$ is maximal, we deduce,
  from \eqref{equ:hkKrecur-dodd}, that both
  $h_{\lexp{d}+1}(\fF_{[3]})$ and $g_{\lexp{d}}(\bx{}P_i)$ are
  maximal, which, imply, respectively, that
  $f_{l-1}(\fF_{[3]})=\sum_{\emptyset\subset{}S\subseteq{}[3]}
  (-1)^{3-|S|}\binom{n_S}{l}$, for all
  $0\le{}l\le{}\lexp{d}+1=\lexp{d+2}$, and that, for all $1\le{}i\le{}3$,
  $f_{l-1}(\fF_{\{i\}})=\binom{n_i}{l}$, for all $0\le{}l\le\lexp{d}$.
  The maximality of $h_{\lexp{d}+1}(\fF_{[3]})$ implies also the
  maximality of $g_l(\fF_{R})$, for all $R\in\sset_2$, and for all
  $0\le{}l\le\lexp{d}$, and thus
  the maximality of $h_l(\fF_{R})$, for all $R\in\sset_2$, and for all
  $0\le{}l\le\lexp{d}$. By solving equation
  \eqref{equ:hkKexpansion-dodd} in terms of the sum of the
  $h_{\lexp{d}+1}(\fF_R)$'s, we also deduce that
  $h_{\lexp{d}+1}(\fF_R)$ is maximal, for all $R\in\sset_2$. Hence, we
  have that $h_l(\fF_R)$ is maximal, for all $R\in\sset_2$,
  and for all $0\le{}l\le{}\lexp{d}+1=\lexp{d+1}$, which, by
  \cite[Lemma 3.3]{kt-mnfms-12}, gives that 
  $f_{l-1}(\fF_{R})=\sum_{\emptyset\subset{}S\subseteq{}R}
  (-1)^{2-|S|}\binom{n_S}{l}$, for all $R\in\sset_2$, and
  for all $0\le{}l\le{}\lexp{d+1}$.\\
  Assuming now that, for all $\emptyset\subset{}R\subseteq[3]$,
  $f_{l-1}(\fF_{R})=\sum_{\emptyset\subset{}S\subseteq{}R}
  (-1)^{|R|-|S|}\binom{n_S}{l}$, for all
  $0\le{}l\le\min\{\lexp{d}+1,\lexp{d+|R|-1}\}$,
  we deduce, from Lemma \ref{lem:hkWbound}, that
  that $h_{\lexp{d}+1}(\fF_{[3]})$ attains its upper bound value for all
  $0\le{}l\le{}\lexp{d+2}=\lexp{d}+1$.
  Furthermore, Lemma 3.3 in \cite{kt-mnfms-12}, 
  implies that, for all $R\in\sset_2$, $h_l(\fF_R)$ attains its upper
  bound value for all $0\le{}l\le{}\lexp{d+1}=\lexp{d}+1$, which means
  that $g_{\lexp{d+1}}(\fF_R)$ attains its upper bound value, for all
  $R\in\sset_2$. Finally, our assumption above, implies that, 
  for all $1\le{}i\le{}3$, $P_i$ is neighborly, which means
  that $g_{\lexp{d}}(\bx{}P_i)=\binom{n_i-\lexp{d}-3}{\lexp{d}}$.
  Appealing to relation \eqref{equ:hkKexpansion-dodd} above, it is
  easy to verify that $h_{\lexp{d}+1}(\kK_{[3]})$ attains its upper bound in
  \eqref{equ:hkKbound-midk}.
\end{proof}

\section{Asymptotic analysis of Vandermonde-like determinants}

We start by introducing what is known as
\emph{Laplace's Expansion Theorem} for determinants
(see \cite{g-tm-60,hk-la-71} for details and proofs).
Consider a $n\times{}n$ matrix $A$.
Let $\mb{r}=(r_1,r_2,\ldots,r_k)$, be a vector of $k$ row indices for
$A$, where $1\le{}k<n$ and $1\le{}r_1<r_2<\ldots<r_k\le{}n$.
Let $\mb{c}=(c_1,c_2,\ldots,c_k)$ be a vector of $k$ column indices for
$A$, where $1\le{}k<n$ and $1\le{}c_1<c_2<\ldots<c_k\le{}n$. We
denote by $S(A;\mb{r},\mb{c})$ the $k\times{}k$ submatrix of $A$
constructed by keeping the entries of $A$ that belong to a row in
$\mb{r}$ and a column in $\mb{c}$.
The \emph{complementary submatrix} for $S(A;\mb{r},\mb{c})$, denoted
by $\compl(A;\mb{r},\mb{c})$, is the $(n-k)\times(n-k)$ submatrix of
$A$ constructed by removing the rows and columns of $A$ in $\mb{r}$
and $\mb{c}$, respectively. Then, the determinant of $A$ can be
computed by expanding in terms of the $k$ columns of $A$ in
$\mb{c}$ according to the following theorem.

\begin{theorem}[\textbf{Laplace's Expansion Theorem}]\label{thm:LET}
  Let $A$ be a $n\times{}n$ matrix. Let
  $\mb{c}=(c_1,c_2,\ldots,\allowbreak{}c_k)$ be a vector of $k$ column
  indices for
  $A$, where $1\le{}k<n$ and $1\le{}c_1<c_2<\ldots<c_k\le{}n$. Then:
  \begin{equation}
    \det(A)=\sum_{\mb{r}}(-1)^{|\mb{r}|+|\mb{c}|}
    \det(S(A;\mb{r},\mb{c}))\det(\compl(A;\mb{r},\mb{c})),
  \end{equation}
  where $|\mb{r}|=r_1+r_2+\ldots+r_k$,
  $|\mb{c}|=c_1+c_2+\ldots+c_k$, and the summation is taken over all
  row vectors $\mb{r}=(r_1,r_2,\ldots,r_k)$ of $k$ row indices for
  $A$, where $1\le{}r_1<r_2<\ldots<r_k\le{}n$.
\end{theorem}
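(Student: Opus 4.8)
The plan is to derive the identity directly from the Leibniz expansion $\det(A)=\sum_{\sigma\in{}S_n}\operatorname{sgn}(\sigma)\prod_{j=1}^{n}a_{\sigma(j),j}$, partitioning $S_n$ according to the action of a permutation on the fixed set $\mb{c}$ of column indices. First I would group the $\sigma$ by the set $\mb{r}:=\{\sigma(c_1),\ldots,\sigma(c_k)\}$; writing the elements of $\mb{r}$ in increasing order $r_1<\cdots<r_k$, such a $\sigma$ restricts to a bijection $\mb{c}\to\mb{r}$ and to a bijection from the complement of $\mb{c}$ onto the complement of $\mb{r}$, and conversely any pair of such bijections determines $\sigma$ uniquely. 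Relabelling $\mb{c},\mb{r}$ and their complements in increasing order turns these two bijections into permutations $\tilde\tau\in{}S_k$ and $\tilde\rho\in{}S_{n-k}$, and then $\prod_{j\in\mb{c}}a_{\sigma(j),j}$ is a generic summand of the Leibniz expansion of $\det(S(A;\mb{r},\mb{c}))$ while $\prod_{j\notin\mb{c}}a_{\sigma(j),j}$ is a generic summand of that of $\det(\compl(A;\mb{r},\mb{c}))$.

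The one delicate step is the sign bookkeeping, namely the identity $\operatorname{sgn}(\sigma)=(-1)^{|\mb{r}|+|\mb{c}|}\operatorname{sgn}(\tilde\tau)\operatorname{sgn}(\tilde\rho)$. I would obtain it by writing $\sigma=w_{\mb{r}}\circ(\tilde\tau\oplus\tilde\rho)\circ{}w_{\mb{c}}^{-1}$, where $w_{\mb{c}}$ is the permutation with one-line notation $(c_1,\ldots,c_k,c'_1,\ldots,c'_{n-k})$ (primes denoting the complement listed in increasing order), $w_{\mb{r}}$ is defined likewise from $\mb{r}$, and $\tilde\tau\oplus\tilde\rho$ acts as $\tilde\tau$ on $\{1,\ldots,k\}$ and as a shifted copy of $\tilde\rho$ on $\{k+1,\ldots,n\}$. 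A short inversion count gives $\operatorname{sgn}(w_{\mb{c}})=(-1)^{|\mb{c}|-\binom{k+1}{2}}$ and likewise $\operatorname{sgn}(w_{\mb{r}})=(-1)^{|\mb{r}|-\binom{k+1}{2}}$, whereas plainly $\operatorname{sgn}(\tilde\tau\oplus\tilde\rho)=\operatorname{sgn}(\tilde\tau)\operatorname{sgn}(\tilde\rho)$; multiplying the three signs, the two copies of $\binom{k+1}{2}$ contribute an even exponent and the claimed formula for $\operatorname{sgn}(\sigma)$ falls out.

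Substituting this back, the total contribution to the Leibniz sum of all $\sigma$ with $\sigma(\mb{c})=\mb{r}$ factors as $(-1)^{|\mb{r}|+|\mb{c}|}\bigl(\sum_{\tilde\tau\in{}S_k}\operatorname{sgn}(\tilde\tau)\prod\nolimits_{b}a_{r_{\tilde\tau(b)},c_b}\bigr)\bigl(\sum_{\tilde\rho\in{}S_{n-k}}\operatorname{sgn}(\tilde\rho)\prod\nolimits_{a}a_{r'_{\tilde\rho(a)},c'_a}\bigr)$, which is $(-1)^{|\mb{r}|+|\mb{c}|}\det(S(A;\mb{r},\mb{c}))\det(\compl(A;\mb{r},\mb{c}))$; summing over the $\binom{n}{k}$ choices of $\mb{r}$ then gives the theorem. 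I expect the inversion count in the second paragraph to be the only point requiring genuine care; everything else is a reorganization of a finite sum. An alternative would be induction on $k$, with the ordinary one-column cofactor expansion as the base case and one further cofactor expansion inside each complementary minor for the inductive step, but the sign tracking there is essentially the same and no shorter.
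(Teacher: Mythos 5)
The paper does not actually prove Theorem~\ref{thm:LET}; it states it as a known result and refers the reader to the cited linear-algebra references for proofs, so there is no in-paper argument to compare yours against. Your proof is correct and is the standard one: partition the Leibniz sum $\sum_{\sigma\in S_n}\operatorname{sgn}(\sigma)\prod_j a_{\sigma(j),j}$ by the image set $\mb{r}=\sigma(\mb{c})$, factor each $\sigma$ with prescribed $\mb{r}$ uniquely as $w_{\mb{r}}\circ(\tilde\tau\oplus\tilde\rho)\circ w_{\mb{c}}^{-1}$, and multiply signs. The one step you flagged as delicate checks out: the inversions of $w_{\mb{c}}=(c_1,\ldots,c_k,c'_1,\ldots,c'_{n-k})$ occur only between the two blocks, and for fixed $i$ the number of $c'_a<c_i$ is $c_i-i$, so the total inversion count is $\sum_{i=1}^k(c_i-i)=|\mb{c}|-\binom{k+1}{2}$, giving $\operatorname{sgn}(w_{\mb{c}})=(-1)^{|\mb{c}|-\binom{k+1}{2}}$ and likewise for $w_{\mb{r}}$; the two $\binom{k+1}{2}$ cancel and you recover $\operatorname{sgn}(\sigma)=(-1)^{|\mb{r}|+|\mb{c}|}\operatorname{sgn}(\tilde\tau)\operatorname{sgn}(\tilde\rho)$. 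The final regrouping into $\det(S(A;\mb{r},\mb{c}))\det(\compl(A;\mb{r},\mb{c}))$ is a correct factorization of the restricted sum over $(\tilde\tau,\tilde\rho)\in S_k\times S_{n-k}$. No gap.
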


In what follows  we recall some facts concerning   generalized
Vandermonde determinants that will be in use  to us later. 
Let $ n \geq 2,$\;   $ \mb{x}=(x_1,\ldots,x_n) $  
and  $\mb{\mu}=(\mu_1,\mu_2,\ldots,\mu_n)$, where
we require that $0\le{}\mu_1<\mu_2<\ldots<\mu_n.$ The 
\emph{ generalized Vandermonde determinant},  
denoted by $ \GVD(\mb{x};\mb{\mu}), $ is the  $n\times n$ 
determinant whose  $i$-th row is  the vector $\mb{x}$  with  all its entries raised to $ \mu_i.$ 
 While there is no general formula for the generalized Vandermonde determinant, 
it is a well-known fact that, if the elements of $\mb{x}$ are in
strictly increasing order, then $\GVD(\mb{x};\mb{\mu})>0$ (for
example, see \cite{g-atm-05} for a proof of this fact).

\medskip

In the remainder of this section we consider two determinants
that are parameterized by a positive parameter $\tau$, and we study
their asymptotic behavior with respect to $\tau$. These determinants
are generalizations of the determinants that arise in the proofs of
Lemmas \ref{lem:fkW2-tau} and \ref{lem:fkW3-tau} in Section
\ref{sec:lb}, and are directly associated with the equations of
some appropriately defined supporting hyperplanes for the faces of
$\fF_R$ where $R\in\sset_2$ or $R\equiv[3]$ (recall that $\fF_R$
stands for the set of faces of the Cayley polytope of $|R|$ polytopes
$P_i$, $i\in{}R$, with the property that each face in $\fF_R$ has at
least one vertex from each polytope $P_i$).
The two determinants that we study are generalized-Vandermonde-like
determinants that are polynomial functions of $\tau$, and
correspond, respectively, to the two cases $R\in\sset_2$ and
$R\equiv[3]$ mentioned above.
Since in Section \ref{sec:lb} we are interested in small values of 
$\tau$, our asymptotic analysis in the two lemmas below is
targeted towards revealing the term of $\tau$ of minimal
exponent.

We start-off with the generalized version of the determinant that
arises in the upper bound tightness construction in Section
\ref{sec:lb} when $R\in\sset_2$.

\begin{lemma}
\label{lem:det2}
Fix two integers $m\ge{}2$ and $n\ge{}2$, with $n+m\ge{}5$.
Let $D_{n,m}(\tau;I,J,\mb{\mu})$ be the $(n+m)\times(n+m)$ determinant:
\begin{equation*}
  (-1)^{J+1}
  \left|
    \begin{array}{cccccc}
      (x_1\tau^{\alpha})^{\mu_1}
      &\cdots &(x_n\tau^{\alpha})^{\mu_1}&0 & \cdots & 0 \\[2pt]
      0 & \cdots & 0 
      &(y_1\tau^{\beta})^{\mu_2}
      &\cdots& (y_m\tau^{\beta})^{\mu_2}\\[3pt]
      f_3(\tau)(x_1\tau^{\alpha})^{\mu_3}
      &\cdots&f_3(\tau)(x_n\tau^{\alpha})^{\mu_3}
      &g_3(\tau)(y_1\tau^{\beta})^{\mu_3}
      &\cdots&g_3(\tau)(y_m\tau^{\beta})^{\mu_3}\\[3pt]
      f_4(\tau)(x_1\tau^{\alpha})^{\mu_4}
      &\cdots&f_4(\tau)(x_n\tau^{\alpha})^{\mu_4}
      &g_4(\tau)(y_1\tau^{\beta})^{\mu_4}
      &\cdots&g_4(\tau)(y_m\tau^{\beta})^{\mu_4}\\[3pt]
      f_5(\tau)(x_1\tau^{\alpha})^{\mu_5}
      &\cdots&f_5(\tau)(x_n\tau^{\alpha})^{\mu_5}
      &g_5(\tau)(y_1\tau^{\beta})^{\mu_5}
      &\cdots&g_5(\tau)(y_m\tau^{\beta})^{\mu_3}\\[3pt]
      (x_1\tau^{\alpha})^{\mu_6}
      &\cdots&(x_n\tau^{\alpha})^{\mu_6}&(y_1\tau^{\beta})^{\mu_6}
      &\cdots&(y_m\tau^{\beta})^{\mu_6}\\[3pt]
      (x_1\tau^{\alpha})^{\mu_7}
      &\cdots&(x_n\tau^{\alpha})^{\mu_7}&(y_1\tau^{\beta})^{\mu_7}
      &\cdots&(y_m\tau^{\beta})^{\mu_7}\\[3pt]
      \vdots   & \ddots  & \vdots 
      & \vdots  & \ddots  & \vdots  \\[3pt]
      (x_1\tau^{\alpha})^{\mu_\ell}
      &\cdots&(x_n\tau^{\alpha})^{\mu_\ell}&(y_1\tau^{\beta})^{\mu_\ell}
      &\cdots&(y_m\tau^{\beta})^{\mu_\ell}\\[3pt]
    \end{array}
  \right|
\end{equation*}
where $0<x_1<x_2<\ldots<x_n$, $0<y_1<y_2<\ldots<y_m$, $\ell=n+m$,
$\mb{\mu}=(\mu_1,\ldots,\mu_\ell)$,
with $0\le\mu_1\le\mu_2<\mu_3<\ldots<\mu_\ell$,
$(I,J)\in\{(3,4),(3,5),(4,5)\}$,
$f_{I}(\tau)=g_{J}(\tau)=1$, 
$f_i(\tau)=g_j(\tau)=\tau^M$, for $i\ne{}I$ and $j\ne{}J$, 
$\alpha>\beta\ge{}0$, $M\ge\alpha|\mb{\mu}|$ and
$\tau>0$. Then:
\begin{equation*}
  D_{n,m}(\tau;I,J,\mb{\mu})=C\tau^{\xi}+\Theta(\tau^{\xi+1}),\quad
  \xi = \alpha\left(\mu_1+\mu_3+\sum_{i=4}^{n+2}\mu_i-\mu_J\right)
  +\beta\left(\mu_2+\mu_J+\sum_{i=n+3}^{\ell}\mu_i\right),
\end{equation*}
where $C$ is a positive constant independent of $\tau$.
\end{lemma}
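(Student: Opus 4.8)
The plan is to compute $D_{n,m}(\tau;I,J,\mb{\mu})$ by applying Laplace's Expansion Theorem (Theorem~\ref{thm:LET}) along the first $n$ columns, which are supported on the rows $1,3,4,\ldots,\ell$ with a structured set of monomial entries. For each choice of a size-$n$ row vector $\mb{r}=(r_1,\ldots,r_n)$ with $r_1<\cdots<r_n$ and $\mb r\subseteq\{1,3,4,\ldots,\ell\}$ (since row~$2$ is identically zero in the first $n$ columns), the minor $\det(S(D;\mb r,(1,\ldots,n)))$ factors as a product of the relevant $f_i(\tau)$'s times $\tau^{\alpha(\mu_{r_1}+\cdots+\mu_{r_n})}$ times a generalized Vandermonde determinant $\GVD((x_1,\ldots,x_n);(\mu_{r_1},\ldots,\mu_{r_n}))$, which is strictly positive because $0<x_1<\cdots<x_n$. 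The complementary minor, over the remaining $m$ columns and the complementary rows (which always include row~$2$), similarly factors as a product of $g_j(\tau)$'s times $\tau^{\beta(\mu_{s_1}+\cdots+\mu_{s_m})}$ times a positive generalized Vandermonde determinant in the $y$'s, where $\{s_1,\ldots,s_m\}$ is the complement of $\mb r$. The sign $(-1)^{J+1}$ out front together with the Laplace sign $(-1)^{|\mb r|+|\mb c|}$ and the signs hidden in the $f$'s and $g$'s must be bookkept so that the dominant term comes out positive.

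The key observation driving the asymptotics is the exponent of $\tau$ attached to each Laplace term. A term indexed by $\mb r$ contributes an exponent $\alpha\sum_{i\in\mb r}\mu_i + \beta\sum_{j\notin\mb r}\mu_j$ coming from the $(x_i\tau^\alpha)$ and $(y_j\tau^\beta)$ factors, plus an extra $M$ for each row among $\{3,4,5\}$ on which the ``non-distinguished'' coefficient $\tau^M$ sits (on the $x$-side this is each of rows $3,4,5$ in $\mb r$ other than row $I$; on the $y$-side, each of rows $3,4,5$ not in $\mb r$ other than row $J$). First I would argue that, because $M\ge\alpha|\mb\mu|$ is huge compared to any $\alpha$- or $\beta$-weighted sum of $\mu_i$'s, any term carrying even one factor of $\tau^M$ is negligible: it is $\Theta(\tau^{\text{something}})$ with exponent at least $M\ge\alpha|\mb\mu|$, which strictly exceeds the exponent $\xi$ we are about to identify. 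So the minimal-exponent term must be $M$-free, which forces $\mb r$ to contain row $I$ and to contain all of $\{3,4,5\}\setminus\{J\}$ whose rows end up on the $y$-side must be empty — concretely, $\mb r$ must contain rows $I$ and, among $\{3,4,5\}$, exactly those rows that are not $J$, i.e. $\mb r\supseteq\{3,4,5\}\setminus\{J\}$ and row $J\notin\mb r$. Combined with $|\mb r|=n$ and $\alpha>\beta$, minimizing $\alpha\sum_{i\in\mb r}\mu_i+\beta\sum_{j\notin\mb r}\mu_j$ over the remaining freedom (we must still pick $n$ rows total, including row~$1$ or not, and fill out with the smallest available $\mu$-indices) selects $\mb r=\{1,3,4,\ldots,n+2\}\setminus\{J\}\cup\{?\}$ — more precisely, since row~$1$ carries the smallest exponent $\mu_1$ and is cheap, and rows $J$ and row~$2$ are excluded from $\mb r$, the optimal $\mb r$ is $\{1\}\cup(\{3,4,5\}\setminus\{J\})\cup\{6,7,\ldots\}$ truncated to size $n$, giving exactly the index set whose $\mu$-sum is $\mu_1+\mu_3+\sum_{i=4}^{n+2}\mu_i-\mu_J$ (the ``$-\mu_J$'' accounting for the fact that $J\in\{4,5\}$ is omitted and replaced). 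The complementary set then has $\mu$-sum $\mu_2+\mu_J+\sum_{i=n+3}^\ell\mu_i$, yielding the claimed $\xi$.

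Then I would verify uniqueness of the minimizing $\mb r$: because $\mu_3<\mu_4<\cdots<\mu_\ell$ are strictly increasing and $\alpha>\beta$, the ``exchange argument'' shows that swapping any index in $\mb r$ for a smaller index not in $\mb r$ strictly decreases the exponent, so there is a single optimal $\mb r$, hence a single dominant Laplace term; no cancellation can occur at order $\tau^\xi$. The coefficient $C$ of $\tau^\xi$ is then $(-1)^{J+1}(-1)^{|\mb r^\star|+|\mb c|}$ times the product of the two positive generalized Vandermonde determinants $\GVD((x_i);(\mu_{r^\star_i}))\cdot\GVD((y_j);(\mu_{s^\star_j}))$; I would check that the sign prefactor works out to $+1$ for each of the three cases $(I,J)\in\{(3,4),(3,5),(4,5)\}$, exploiting that the $(-1)^{J+1}$ in the statement was inserted precisely to normalize this. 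The error term is $\Theta(\tau^{\xi+1})$ because all other Laplace terms have $\tau$-exponent either $M$-inflated (hence $\ge M\ge\alpha|\mb\mu|>\xi+1$ for the relevant ranges, using $n+m\ge 5$) or strictly larger than $\xi$ by at least the smallest gap $\mu_{i+1}-\mu_i\ge 1$ among integers. The main obstacle I anticipate is the sign bookkeeping — tracking the interplay of $(-1)^{J+1}$, the Laplace sign $(-1)^{|\mb r^\star|+|\mb c|}$, and the way the three admissible $(I,J)$ pairs permute which rows land on the $x$-side versus the $y$-side — since this is exactly the place where an off-by-one in a row index flips the sign of $C$; getting the three cases to uniformly produce $C>0$ requires care but no deep idea.
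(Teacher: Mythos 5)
Your overall plan---Laplace expansion along the first $n$ columns, exploiting the zero blocks in rows $1$ and $2$ to kill most summands, then using the size of $M$ together with $\alpha>\beta$ to single out the dominant generalized-Vandermonde product, and a final sign check for which the prefactor $(-1)^{J+1}$ is what makes $C>0$---is exactly the paper's approach.

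There is, however, a concrete problem with your ``$M$-free'' step. Since $I\ne J$, there is a third index $K\in\{3,4,5\}\setminus\{I,J\}$ for which $f_K(\tau)=g_K(\tau)=\tau^M$; row $K$ lands either in $\mb r$ or in its complement, so \emph{every} surviving Laplace term carries at least one factor of $\tau^M$. In particular no term is $M$-free, and the selection you then write down, $\mb r\supseteq\{3,4,5\}\setminus\{J\}=\{I,K\}$, visibly violates the premise: it puts $K$ into $\mb r$, whence $f_K=\tau^M$ appears. The right bookkeeping is that the total $\tau^M$-power carried by the term indexed by $\mb r$ is $\tau^M$ if $I\in\mb r$ and $J\notin\mb r$, $\tau^{3M}$ if $I\notin\mb r$ and $J\in\mb r$, and $\tau^{2M}$ otherwise; the minimum is therefore one factor of $\tau^M$, attained exactly when $I\in\mb r$ and $J\notin\mb r$, and this single uniform $\tau^M$ factors out before you minimize $\alpha\sum_{i\in\mb r}\mu_i+\beta\sum_{i\notin\mb r}\mu_i$. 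That minimization does produce the $\mb r$ and the $\mu$-sums you wrote. You should be aware, though, that the paper's proof commits the same slip: its auxiliary function $h(\mb r,\tau;I,J)$ is off by a uniform factor of $\tau^M$ in all three cases, so the leading exponent of $D_{n,m}$ as displayed is really $\xi+M$ rather than $\xi$; the application in Lemma~\ref{lem:fkW2-tau} absorbs the discrepancy by writing $H_{\UU}(\mbv)=\tau^M D_{n,m}(\tau)$ even though $H_{\UU}(\mbv)$ is the determinant $D_{n,m}(\tau)$ itself. Once ``must be $M$-free'' is replaced by ``must carry exactly one $\tau^M$ factor'', your plan matches the paper's, shared slip and all.
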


\begin{proof}
  For simplicity, we write $D_{n,m}(\tau)$ instead of
  $D_{n,m}(\tau;I,J,\mb{\mu})$, suppressing $I$, $J$ and $\mb{\mu}$ in
  the notation.
  We denote by $\Delta_{n,m}(\tau)$ the matrix corresponding to the
  determinant $(-1)^{J+1}D_{n,m}(\tau)$.
  If we apply Laplace's expansion with respect to the first $n$
  columns, \ie when $\mb{c} = (1,2,\ldots,n)$, we get: 
  \begin{align}
    D_{n,m}(\tau)&=(-1)^{J+1}
    \sum_{ \substack{\mb{r}=(r_1,r_2, \ldots, r_n) \\
        1 \leq r_1  < r_2 < \cdots <  r_n \leq   n+m  } }
    (-1)^{|\mb{r}| + |\mb{c}|}\,
    \det(S(\Delta_{n,m}(\tau);\mb{r},\mb{c}))
    \det(\compl(\Delta_{n,m}(\tau);\mb{r},\mb{c}))\notag\\
    &=\sum_{ \substack{\mb{r}=(r_1,r_2,\ldots,r_n)\\
        1 \leq r_1 < r_2 <  \cdots <  r_n \leq   n+m  } }
    \!\!\!
    (-1)^{|\mb{r}|+\frac{n(n+1)}{2}+J+1}\,
    \det(S(\Delta_{n,m}(\tau);\mb{r},\mb{c}))
    \det(\compl(\Delta_{n,m}(\tau);\mb{r},\mb{c})).
    \label{expansion1}
  \end{align}

  The above sum consists of $ \binom{n+m}{n}$ terms. Among these terms:
 \begin{enumerate}
  \item
    all those for which $\mb{r}$ contains the second row
    vanish (in this case the corresponding row of
    $S(\Delta_{n,m}(\tau);\mb{r},\mb{c})$
    consists of zeros), and
  \item
    all those for which $\mb{r}$ does not contain the first
    row vanish (in this case at least two rows of
    $\compl(\Delta_{n,m}(\tau);\mb{r},\mb{c})$ consist of zeros).
  \end{enumerate}
  The remaining terms of the expansion are the $\binom{n+m-2}{n-1}$
  terms for which $\mb{r}$ contains $1$ but not $2$, \ie
  $\mb{r}=(1,r_2,r_3,\ldots,r_n)$, with $3\le{}r_2<r_3<\ldots<r_n\le{}n+m$.
  For any given $\mb{r}$, we denote by $\cb{r}$ the vector
  of the $m$, among the $n+m$, row indices for $\Delta_{n,m}(\tau)$ that do
  not belong to $\mb{r}$ (recall that $2$ always belongs to
  $\mb{\rr}$). Notice that the elements of the 
  $k$-th row of $\Delta_{n,m}(\tau)$ have exponent $\mu_{k}$.
  Denoting by $\mb{\mu_r}$ the vector the $i$-th element of
  which is $\mu_{r_i}$, we have that:
  \begin{enumerate}
  \item
    $\det(S(\Delta_{n,m}(\tau);\mb{r},\mb{c}))$ is the $n\times{}n$
    generalized Vandermonde determinant
    $\GVD(\tau^{\alpha}\mb{x};\mb{\mu_\mb{r}})$, multiplied
    by $\tau^{M}$ if $J\in\mb{r}$.
  \item
    $\det(\compl(\Delta_{n,m}(\tau);\mb{r},\mb{c}))$ is the $m\times{}m$
    generalized Vandermonde determinant
    $\GVD(\tau^{\beta}\mb{y};\mb{\mu_\mb{\rr}})$, multiplied
    by $\tau^{M}$ if $I\in\mb{\rr}$.
  \end{enumerate}
  We can, thus, simplify the expansion in \eqref{expansion1} to get:
  \begin{equation}\label{expansion2}
    \begin{aligned}
      D_{n,m}(\tau)
      &=\sum_{\{\mb{r}\mid{}1\in\mb{r},2\nin\mb{r}\}}
      (-1)^{|\mb{r}|+\frac{n(n+1)}{2}+J+1}\,h(\mb{r},\tau;I,J)
      \GVD(\tau^{\alpha}\mb{x};\mb{\mu_r} )
      \GVD(\tau^{\beta}\mb{y}; \mb{\mu_{\rr}})\\
      &=\sum_{\{\mb{r}\mid{}1\in{}\mb{r},2\nin\mb{r}\}}
      (-1)^{|\mb{r}|+\frac{n(n+1)}{2}+J+1}\,h(\mb{r},\tau;I,J)
      \tau^{\alpha|\mb{\mu_r}|+\beta|\mb{\mu_\rr}|}\,
      \GVD( \mb{x};\mb{\mu_r} )  \GVD( \mb{y}; \mb{\mu_{\rr}}),
    \end{aligned}
  \end{equation}
  where
  \begin{equation*}
    h(\mb{r},\tau;I,J)=\begin{cases}
      1,&I\in\mb{r}\text{ and } J\nin\mb{r},\\
      \tau^{2M},&I\nin\mb{r}\text{ and } J\in\mb{r},\\
      \tau^{M},&\text{otherwise}.
    \end{cases}
  \end{equation*}

  In the remainder of the proof we seek to find the unique term in the
  expansion \eqref{expansion2} that corresponds to the minimum order
  of $\tau$, or, equivalently, the minimum exponent for $\tau$.
  Since $\alpha>\beta\ge{}0$, for any $\mb{r}$, with $I\in\mb{r}$ and
  $J\nin\mb{r}$, the exponent of $\tau$ is:
  \begin{equation*}
    \alpha |\mb{\mu_r}|+\beta |\mb{\mu_{\rr}}|
    <\alpha |\mb{\mu_r}|+\alpha |\mb{\mu_{\rr}}|
    =\alpha|\mb{\mu}|\le{}M,
  \end{equation*}
  where we used the fact that:
  \begin{equation*}
    |\mb{\mu_r}|+|\mb{\mu_\rr}|
    =\sum_{i=1}^{n}\mu_{r_i}+\sum_{i=1}^{m}\mu_{\rr_i}
    =\sum_{i\in\mb{r}}\mu_i+\sum_{i\in\cb{r}}\mu_i
    =\sum_{i=1}^\ell\mu_i=|\mb{\mu}|.
  \end{equation*}
  This implies that the terms in \eqref{expansion2} that
  correspond to the row vectors $\mb{r}$ that contain $J$ cannot be
  the terms of minimal order of $\tau$, since for these terms the
  exponent of $\tau$ is at least
  \begin{equation*}
    \alpha |\mb{\mu_r}|+\beta |\mb{\mu_{\rr}}|+M
    >\beta |\mb{\mu_r}|+\beta |\mb{\mu_{\rr}}|+M
    =\beta |\mb{\mu}|+M
    \ge{}M.
  \end{equation*}

  For the remaining terms, \ie for those $\mb{r}$ that do not
  contain $J$, we have $h(\mb{r},\tau;I,J)=1$.
  For these terms the exponent of $\tau$ is $\alpha |\mb{\mu_r}|+\beta
  | \mb{\mu_{\rr}}|$.
  Since $\alpha>\beta$, we may write $\alpha=\beta+\theta$ for some
  $\theta>0$. This gives:
  \begin{equation*}
    \alpha |\mb{\mu_r}|+\beta |\mb{\mu_{\rr}}|
    =(\beta+\theta) |\mb{\mu_r}|+\beta |\mb{\mu_{\rr}}|
    =\beta |\mb{\mu}| + \theta |\mb{\mu_r}|.
  \end{equation*}
  Clearly, in this case, the quantity
  $\alpha |\mb{\mu_r}|+\beta |\mb{\mu_{\rr}}|$ 
  attains its minimum when  $|\mb{\mu_r}|$ is minimal.
  We distinguish between the following cases:
  \begin{itemize}
  \item
    $(I,J)=(3,4)$.
    In this case $|\mb{\mu_r}|$ attains its minimal value if and only
    if $\mb{r}$ is equal to $\mb{\rho}=(1,3,5,6,\ldots,n+2)$.
    Furthermore,
    \begin{align*}
      |\mb{\mu_{\rho}}|&=\mu_1+\mu_3+\sum_{i=5}^{n+2}\mu_i
      =\mu_1+\mu_3+\sum_{i=4}^{n+2}\mu_i-\mu_J,\\
      |\mb{\mu_{\bar{\rho}}}|&=\mu_2+\mu_4+\sum_{i=n+3}^{\ell}\mu_i
      =\mu_2+\mu_J+\sum_{i=n+3}^{\ell}\mu_i
    \end{align*}
    and
    \begin{align*}
      |\mb{\rho}|+\frac{n(n+1)}{2}+J+1
      &=\sum_{i=1}^{n+2}{i}-(2+4)+\frac{n(n+1)}{2}+4+1\\
      &=\frac{(n+2)(n+3)}{2}+\frac{n(n+1)}{2}-1\\
      &=n^2+3n+3-1\\
      &=(n+1)(n+2),
    \end{align*}
    which is even for any $n\ge{}2$.
  \item $I\in\{3,4\}$ and $J=5$.
    In this case $|\mb{\mu_r}|$ attains its minimal value if and only
    if $\mb{r}$ is equal to $\mb{\rho}=(1,3,4,6,\ldots,n+2)$.
    Furthermore,
    \begin{align*}
      |\mb{\mu_{\rho}}|&=\mu_1+\mu_3+\mu_4+\sum_{i=6}^{n+2}
      =\mu_1+\mu_3+\sum_{i=4}^{n+2}-\mu_J,\\
      |\mb{\mu_{\bar{\rho}}}|&=\mu_2+\mu_5+\sum_{i=n+3}^{\ell}
      =\mu_2+\mu_J+\sum_{i=n+3}^{\ell}\mu_i,
    \end{align*}
    and
    \begin{align*}
      |\mb{\rho}|+\frac{n(n+1)}{2}+J+1
      &=\sum_{i=1}^{n+1}{i}-(2+5)+\frac{n(n+1)}{2}+5+1\\
      &=\frac{(n+2)(n+3)}{2}+\frac{n(n+1)}{2}-1\\
      &=n^2+3n+3-1\\
      &=(n+1)(n+2),
    \end{align*}
    which is again even for any $n\ge{}2$.
  \end{itemize}
  We can thus rewrite \eqref{expansion2} in the following form:
  \begin{equation*}
    D_{n,m}(\tau)=
    \tau^{\alpha|\mb{\mu_\rho}|+\beta|\mb{\mu_{\bar\rho}}|}\,
    \GVD( \mb{x};\mb{\mu_{\rho}})\GVD(\mb{y};\mb{\mu_{\bar\rho}})
    +\Theta(\tau^{\alpha|\mb{\mu_\rho}|+\beta|\mb{\mu_{\bar\rho}}|+1}).
  \end{equation*}
  The lemma immediately follows from the positivity of the generalized
  Vandermonde determinants $\GVD(\mb{x};\mb{\mu_{\rho}})$ and
  $\GVD(\mb{y};\mb{\mu_{\bar\rho}})$.
\end{proof}

We end with the following lemma, where we perform the asymptotic
analysis of the generalized version of the determinant that arises in
the upper bound tightness construction in Section \ref{sec:lb} when
$R\equiv[3]$.

\newcommand{\uu}{{\bar{u}}}
\newcommand{\vv}{{\bar{v}}}
\begin{lemma}\label{lem:det3}
  Fix three integers $m\ge{}2$, $n\ge{}2$ and $k\ge{}2$, with
  $n+m+k\ge{}7$.
  Let $E_{n,m,k}(\tau;\mb{\mu})$ be the $(n+m+k)\times(n+m+k)$ determinant:
  \begin{equation*}
    -\left|
      \begin{array}{ccccccccc}
        (x_1\tau^2)^{\mu_1}&\cdots &(x_n\tau^2)^{\mu_1} 
        &0&\cdots&0&0&\cdots&0\\[2pt]
        0&\cdots&0
        &(y_1\tau)^{\mu_2}&\cdots &(y_m\tau)^{\mu_2} 
        &0&\cdots&0\\[2pt]
        0&\cdots&0&0&\cdots&0
        &z_1^{\mu_3}&\cdots &z_k^{\mu_3}\\[2pt]
        (x_1\tau^2)^{\mu_4}&\cdots &(x_n\tau^2)^{\mu_4} 
        &\tau^{M}(y_1\tau)^{\mu_4}&\cdots&\tau^{M}(y_n\tau)^{\mu_4} 
        &\tau^{M}z_1^{\mu_4}&\cdots&\tau^{M}z_n^{\mu_4}\\[3pt]
        \tau^{M}(x_1\tau^2)^{\mu_5}&\cdots
        &\tau^{M}(x_n\tau^2)^{\mu_5} 
        &(y_1\tau)^{\mu_5}&\cdots&(y_m\tau)^{\mu_5}
        &\tau^{M}z_1^{\mu_5}&\cdots
        &\tau^{M}z_n^{\mu_5}\\[3pt]
        \tau^{M}(x_1\tau^2)^{\mu_6}&\cdots
        &\tau^{M}(x_n\tau^2)^{\mu_6} 
        &\tau^{M}(y_1\tau)^{\mu_6}&\cdots
        &\tau^{M}(y_m\tau)^{\mu_6}
        &z_1^{\mu_6}&\cdots&z_m^{\mu_6}\\[3pt]
        (x_1\tau^2)^{\mu_7}  &\cdots &(x_n\tau^2)^{\mu_7} 
        &(y_1\tau)^{\mu_7}&\cdots&(y_m\tau)^{\mu_7}        
        &z_1^{\mu_7}&\cdots&z_k^{\mu_7}\\[3pt]
        \vdots  & \ddots  & \vdots 
        &\vdots  & \ddots  & \vdots 
        & \vdots  & \ddots  & \vdots  \\[3pt]
        (x_1\tau^2)^{\mu_\ell}  &
        \cdots &(x_n\tau^2)^{\mu_\ell}
        &(y_1\tau)^{\mu_\ell}
        &\cdots&(y_m\tau)^{\mu_\ell}
        &z_1^{\mu_\ell}
        &\cdots&z_k^{\mu_\ell}\\[3pt]
      \end{array}
    \right|
  \end{equation*}
  where
  $0<x_1<x_2<\ldots<x_n$, $0<y_1<y_2<\ldots<y_m$, $0<z_1<z_2<\ldots<z_k$,
  $\ell=n+m+k$, $\mb{\mu}=(\mu_1,\mu_2,\ldots,\mu_\ell)$, with
  $0\le\mu_1\le\mu_2\le\mu_3<\mu_4<\mu_5<\ldots<\mu_\ell$,
  $M\ge{}2|\mb{\mu}|$ and $\tau>0$. Then,
  \begin{equation*}
    E_{n,m,k}(\tau;\mb{\mu})=C'\tau^\xi+\Theta(\tau^{\xi+1}),\quad
    \xi=2\left(\mu_1+\mu_4+\sum_{i=7}^{n+4}\mu_i\right)
    +\mu_2+\mu_5+\sum_{i=n+5}^{n+m+2}\mu_i,
  \end{equation*}
  where $C'$ is a positive constant independent of $\tau$.
\end{lemma}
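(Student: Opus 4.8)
The plan is to follow closely the proof of Lemma~\ref{lem:det2}, adding one more column block. Let $\Delta$ denote the $(n+m+k)\times(n+m+k)$ matrix whose determinant is $-E_{n,m,k}(\tau;\mb{\mu})$, and partition its columns into the $x$-block (columns $1,\dots,n$), the $y$-block (columns $n{+}1,\dots,n{+}m$) and the $z$-block (columns $n{+}m{+}1,\dots,\ell$), where $\ell=n+m+k$. Applying Laplace's Expansion Theorem (Theorem~\ref{thm:LET}) with respect to the $x$-block, and then once more, inside each complementary minor, with respect to the $y$-block, writes $E_{n,m,k}(\tau;\mb{\mu})$ as a signed sum, over all ordered partitions $(\mb{r},\mb{s},\mb{t})$ of the row-index set $\{1,\dots,\ell\}$ into parts of sizes $n$, $m$ and $k$, of a sign times $\det$ of the $x$-block on the rows $\mb{r}$, times $\det$ of the $y$-block on the rows $\mb{s}$, times $\det$ of the $z$-block on the rows $\mb{t}$.

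First I would prune this sum. Row $1$ of $\Delta$ is supported on the $x$-block only, so a term vanishes unless $1\in\mb{r}$; likewise $2\in\mb{s}$ and $3\in\mb{t}$ are forced. For a surviving partition, the $x$-block on rows $\mb{r}$ equals $\GVD(\tau^2\mb{x};\mb{\mu_r})=\tau^{2|\mb{\mu_r}|}\,\GVD(\mb{x};\mb{\mu_r})$, multiplied by one factor $\tau^M$ for each of rows $5,6$ that lies in $\mb{r}$ (these are the rows carrying $\tau^M$ in the $x$-columns); similarly the $y$-block on rows $\mb{s}$ equals $\tau^{|\mb{\mu_s}|}\,\GVD(\mb{y};\mb{\mu_s})$ times $\tau^M$ for each of rows $4,6$ in $\mb{s}$, and the $z$-block on rows $\mb{t}$ equals $\GVD(\mb{z};\mb{\mu_t})$ times $\tau^M$ for each of rows $4,5$ in $\mb{t}$. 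Hence each of the three rows $4,5,6$ introduces a factor $\tau^M$ unless it is placed in the one block in which it is ``unmarked'', namely $4\in\mb{r}$, $5\in\mb{s}$, $6\in\mb{t}$.

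Next I would single out the minimal-order term. Because $M\ge 2|\mb{\mu}|$, any partition that produces a factor $\tau^M$ has total $\tau$-exponent at least $M$, whereas a $\tau^M$-free partition has exponent $2|\mb{\mu_r}|+|\mb{\mu_s}|=2|\mb{\mu}|-|\mb{\mu_s}|-2|\mb{\mu_t}|<2|\mb{\mu}|\le M$, the inequality being strict since $6\in\mb{t}$ forces $|\mb{\mu_t}|\ge\mu_6>0$ by the assumed ordering. So the minimal term is $\tau^M$-free, i.e. satisfies $\{1,4\}\subseteq\mb{r}$, $\{2,5\}\subseteq\mb{s}$, $\{3,6\}\subseteq\mb{t}$. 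Among these, the exponent $2|\mb{\mu_r}|+|\mb{\mu_s}|$ is minimized by assigning the remaining indices $7,\dots,\ell$ --- whose $\mu$-values are strictly increasing and positive --- so that the largest go to $\mb{t}$ (weight $0$ in the exponent), the next to $\mb{s}$ (weight $1$), and the smallest to $\mb{r}$ (weight $2$); a standard exchange/rearrangement argument makes this the \emph{unique} minimizer, namely $\mb{r}=(1,4,7,\dots,n{+}4)$, $\mb{s}=(2,5,n{+}5,\dots,n{+}m{+}2)$, $\mb{t}=(3,6,n{+}m{+}3,\dots,\ell)$. Its exponent is exactly $\xi=2\bigl(\mu_1+\mu_4+\sum_{i=7}^{n+4}\mu_i\bigr)+\mu_2+\mu_5+\sum_{i=n+5}^{n+m+2}\mu_i$, and the three accompanying generalized Vandermonde determinants $\GVD(\mb{x};\mb{\mu_r})$, $\GVD(\mb{y};\mb{\mu_s})$, $\GVD(\mb{z};\mb{\mu_t})$ are strictly positive by the hypotheses on the orderings of $\mb{x},\mb{y},\mb{z}$. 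Every other surviving term has strictly larger exponent, so $E_{n,m,k}(\tau;\mb{\mu})=C'\tau^\xi+\Theta(\tau^{\xi+1})$ with $C'$ independent of $\tau$.

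The final step is to verify $C'>0$, i.e. that the sign attached to the minimizing partition --- the product of the prefactor $-1$ in the definition of $E_{n,m,k}$ with the two $(-1)$-signs of the iterated Laplace expansion --- equals $+1$. I expect this parity bookkeeping to be the main obstacle: it is the three-block analogue of the computation ``$|\mb{\rho}|+\tfrac{n(n+1)}{2}+J+1$ is even'' in the proof of Lemma~\ref{lem:det2}, but here one must additionally track how the row positions of $\mb{s}$ shift when one passes to the second Laplace expansion inside the complementary minor. Concretely, I would compute $|\mb{r}|$, the sum of the positions of the $\mb{s}$-rows within $\{1,\dots,\ell\}\setminus\mb{r}$, and the two column-index sums $\tfrac{n(n+1)}{2}$ and $\tfrac{m(m+1)}{2}$, and check that their total is even. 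The remaining cases $\mbv\in\VV_2$ and $\mbv\in\VV_3$ arising in Lemma~\ref{lem:fkW3-tau} lead to permuted variants of the same determinant and are handled by the identical argument.
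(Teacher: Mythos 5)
Your proof takes a genuinely different route from the paper's. After the first Laplace expansion on the $x$-columns, the paper recognizes the complementary $(m+k)\times(m+k)$ minor as the determinant $D_{m,k}(\tau;3,4,\mb{\mu_\rr})$ of Lemma~\ref{lem:det2} (times $(-1)^{J+1}=-1$); invoking that lemma hands it a \emph{positive} leading coefficient $C_{\mb{r}}$ for the inner block ``for free'', so the only residual parity check is $(-1)^{|\mb{\rho}|+n(n+1)/2}$, and the paper computes $|\mb{\rho}|+\tfrac{n(n+1)}{2}=(n-1)(n+6)$, always even. Your iterated (double) Laplace expansion instead unrolls that recursion into an explicit sum over ordered partitions $(\mb{r},\mb{s},\mb{t})$ and three generalized Vandermonde determinants. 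The pruning, the role of the three $\tau^M$-marked rows $4,5,6$, and the identification of the unique minimizer are all handled correctly and yield the same $\xi$; the price of this self-contained version is precisely the bookkeeping you single out, since the sign of the second Laplace step depends on the \emph{positions} of the $y$-rows inside $\{1,\dots,\ell\}\setminus\mb{r}$ rather than on their original labels.

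That bookkeeping is also where your write-up has a real gap: you give a recipe for verifying $C'>0$ but do not execute it, and since a wrong sign would falsify both this lemma and Lemma~\ref{lem:fkW3-tau} downstream, the check cannot be left as an expectation. For the record, your route does close. For $\mb{\rho}=(1,4,7,\dots,n{+}4)$ one has $|\mb{\rho}|=\tfrac{(n+4)(n+5)}{2}-16$, and inside $\cb{\rho}=(2,3,5,6,n{+}5,\dots,\ell)$ the $y$-rows $\{2,5,n{+}5,\dots,n{+}m{+}2\}$ sit at positions $\{1,3,5,6,\dots,m{+}2\}$, giving $|\mb{s}'|=\tfrac{(m+2)(m+3)}{2}-6$. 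Adding $\tfrac{n(n+1)}{2}$, $\tfrac{m(m+1)}{2}$, and the overall $+1$ from the leading minus sign in the definition of $E_{n,m,k}$, the total sign exponent is $n^2+5n+m^2+3m-8$, which is always even because $n(n+5)$ and $m(m+3)$ are each even. So the conclusion holds, but your approach is appreciably heavier than the paper's, which avoids this position-shift computation entirely by delegating the inner sign to Lemma~\ref{lem:det2}.
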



\begin{proof} 
  We write $E_{n,m,k}(\tau)$ instead of $E_{n,m,k}(\tau;\mb{\mu})$,
  suppressing $\mb{\mu}$ in the notation.
  We denote by $\eE_{n,m,k}(\tau)$ the matrix corresponding to
  the determinant $-E_{n,m,k}(\tau)$.
  If we apply Laplace's expansion theorem with respect to the first $n$
  columns, \ie when $\mb{c}=(1,2,\ldots,n)$, we get:
  \begin{align}
    E_{n,m,k}(\tau)
    &=-\sum_{\mb{r}}(-1)^{|\mb{r}| + |\mb{c}|}\,
    \det(S(\eE_{n,m,k}(\tau);\mb{r},\mb{c}))
    \det(\compl(\eE_{n,m,k}(\tau);\mb{r},\mb{c}))\notag\\
    &=\sum_{\mb{r}}
    (-1)^{|\mb{r}|+\frac{n(n+1)}{2}+1}\,
    \det(S(\eE_{n,m,k}(\tau);\mb{r},\mb{c}))
    \det(\compl(\eE_{n,m,k}(\tau);\mb{r},\mb{c})).
    \label{exp1} 
  \end{align}

  The above sum consists of $ \binom{n+m+k}{n}$ terms. Among these terms: 
  \begin{enumerate}
  \item
    all those for which $\mb{r}$ contains the second or third
    row vanish (the corresponding row of
    $S(\eE_{n,m,k}(\tau);\mb{r},\mb{c})$ consists of zeros), and
  \item
    all those for which $\mb{r}$ does not contain the first row vanish 
    (in this case there exists a row of
    $\compl(\eE_{n,m,k}(\tau);\mb{r},\mb{c})$ that consists of
    zeros).
  \end{enumerate}
  The remaining terms of the expansion are the $\binom{n+m+k-3}{n-1}$
  terms for which $\mb{r}=(1,r_2,r_3,\ldots,r_n)$, with
  $4\le{}r_2<r_3<\ldots<r_n\le{}n+m+k$. As a result, the expansion in
  \eqref{exp1} simplifies to:
  \begin{equation}\label{exp2}
    E_{n,m,k}(\tau)
    =\sum_{\{\mb{r}\mid{}1\in\mb{r},2,3\nin\mb{r}\}}\,
    (-1)^{|\mb{r}|+\frac{n(n+1)}{2}+1}\,
    \det(S(\eE_{n,m,k}(\tau);\mb{r},\mb{c}))
    \det(\compl(\eE_{n,m,k}(\tau);\mb{r},\mb{c})).
  \end{equation}

  For any given $\mb{r}$, we denote by $\cb{r}$ the vector of the
  $m+k$ row indices for $\eE_{n,m,k}(\tau)$ that do not belong
  to $\mb{r}$. Moreover, $\mb{\mu_r}$ is the vector the $i$-th element
  of which is $\mu_{r_i}$.
  As in the proof of Lemma \ref{lem:det2}, we seek to find the
  unique minimum term in the expansion \eqref{exp2} that corresponds
  to the minimum order of $\tau$, or, equivalently, the minimum
  exponent for $\tau$.

  Let us denote by $\rR$ the set of row vectors
  $\rR=\{\mb{r}\mid{}1,4\in\mb{r}\text{ and }2,3,5,6\nin\mb{r}\}$.
  For any $\mb{r}\in\rR$, observe that:
  \begin{enumerate}
  \item
    $\det(S(\eE_{n,m,k}(\tau);\mb{r},\mb{c}))$ is the $n\times{}n$
    generalized Vandermonde determinant $\GVD(\tau^{2}\mb{x};\mb{\mu_r})$.
  \item
    $\det(\compl(\eE_{n,m,k}(\tau);\mb{r},\mb{c}))$ is the
    $(m+k)\times{}(m+k)$ determinant $D_{m,k}(\tau;3,4,\mb{\mu_\rr})$
    of Lemma \ref{lem:det2} multiplied by $(-1)^{4+1}=-1$, with
    $\mb{x}\sub\mb{y}$,
    $\mb{y}\sub\mb{z}$,
    $\mb{\mu}\sub\mb{\mu_\rr}$,
    $(I,J)=(3,4)$, $\alpha\sub{}1$, $\beta\sub{}0$ and $M\sub{}M$
    (since $M\ge{}2|\mb{\mu}|>|\mb{\mu_\rr}|$,
    the condition for $M$ in Lemma \ref{lem:det2} is satisfied).
  \end{enumerate}
  We can, thus, rewrite the expansion in \eqref{exp2} to get:
  \begin{equation}\label{exp3}
    \begin{aligned}
    E_{n,m,k}(\tau)
    &=\sum_{\mb{r}\in\rR}
    (-1)^{|\mb{r}|+\frac{n(n+1)}{2}+1}\,
    \GVD(\tau^2\mb{x};\mb{\mu_r})\,
    (-D_{m,k}(\tau;3,4,\mb{\mu_{\rr}}))\\
    &\quad+\sum_{\mb{r}\nin\rR}
    (-1)^{|\mb{r}|+\frac{n(n+1)}{2}+1}\,
    \det(S(\eE_{n,m,k}(\tau);\mb{r},\mb{c}))
    \det(\compl(\eE_{n,m,k}(\tau);\mb{r},\mb{c}))\\
    &=\sum_{\mb{r}\in\rR}
    (-1)^{|\mb{r}|+\frac{n(n+1)}{2}}\,\tau^{2|\mb{\mu_r}|}\,
    \GVD(\mb{x};\mb{\mu_r}) D_{m,k}(\tau;3,4,\mb{\mu_\rr})\\
    &\quad+\sum_{\mb{r}\nin\rR}
    (-1)^{|\mb{r}|+\frac{n(n+1)}{2}+1}\,
    \det(S(\eE_{n,m,k}(\tau);\mb{r},\mb{c}))
    \det(\compl(\eE_{n,m,k}(\tau);\mb{r},\mb{c}))
  \end{aligned}
  \end{equation}
  By Lemma \ref{lem:det2} we have:
  \begin{align*}
    D_{m,k}(\tau;3,4,\mb{\mu_\rr})
    =C_{\mb{r}}\,\tau^{1\cdot|\mb{\mu_\uu}|+0\cdot|\mb{\mu_\vv|}}
    +\Theta(\tau^{1\cdot|\mb{\mu_\uu}|+0\cdot|\mb{\mu_\vv|}+1})
    =C_{\mb{r}}\,\tau^{|\mb{\mu_\uu}|}+\Theta(\tau^{|\mb{\mu_\uu}|+1}),
  \end{align*}
  where $\mb{\uu}=(2,5,\rr_5,\ldots,\rr_{m+2})$,
  $\mb{\vv}=(3,6,\rr_{m+3},\ldots,\rr_{m+k})$ and $C_{\mb{r}}>0$.
  Hence, for any $\mb{r}\in\rR$, the term in the expansion of
  $E_{n,m,k}(\tau)$ that corresponds to $\mb{r}$ becomes:
  \begin{equation*}
    (-1)^{|\mb{r}|+\frac{n(n+1)}{2}}\,C_{\mb{r}}\,
    \tau^{2|\mb{\mu_r}|+|\mb{\mu_\uu}|}\,
    \GVD(\mb{x};\mb{\mu_r})
    +\Theta(\tau^{2|\mb{\mu_r}|+|\mb{\mu_\uu}|+1}).
  \end{equation*}  
  From this expression we deduce that the minimum exponent of $\tau$
  for any specific $\mb{r}\in\rR$ is:
  \begin{equation*}
    2 |\mb{\mu_r}|+|\mb{\mu_{\uu}}|
    <2 |\mb{\mu_r}|+|\mb{\mu_{\rr}}|
    <2 |\mb{\mu_r}|+2|\mb{\mu_{\rr}}|
    =2|\mb{\mu}|\le{}M,
  \end{equation*}
  where we used the fact that:
  \begin{equation*}
    |\mb{\mu_r}|+|\mb{\mu_\rr}|
    =\sum_{i=1}^{n}\mu_{r_i}+\sum_{i=1}^{m+k}\mu_{\rr_i}
    =\sum_{i\in\mb{r}}\mu_i+\sum_{i\in\cb{r}}\mu_i
    =\sum_{i=1}^\ell\mu_i=|\mb{\mu}|.
  \end{equation*}
  On the other hand, the terms in \eqref{exp3} that
  correspond to the row vectors $\mb{r}\nin\rR$ cannot be the terms of
  minimal order of $\tau$, since for these terms the exponent of
  $\tau$ is greater than $M$.
  We can thus restrict our attention to the terms for which
  $\mb{r}\in\rR$, and rewrite \eqref{exp3} as:
  \begin{align*}
    E_{n,m,k}(\tau)
    =\sum_{\mb{r}\in\rR}
    \left( (-1)^{|\mb{r}|+\frac{n(n+1)}{2}}\,C_{\mb{r}}\,
    \tau^{2|\mb{\mu_r}|+|\mb{\mu_\uu}|}\,
    \GVD(\mb{x};\mb{\mu_r})
    +\Theta(\tau^{2|\mb{\mu_r}|+|\mb{\mu_\uu}|+1})\right)+\Omega(\tau^{M}).
  \end{align*}
  From the expression above, we infer that the term of
  $E_{n,m,k}(\tau)$ for which the exponent of 
  $\tau$ is minimal is the term for which the quantity
  $2|\mb{\mu_r}|+|\mb{\mu_\uu}|$ is minimized.
  However, we have that:
  \begin{equation*}
    2|\mb{\mu_r}|+|\mb{\mu_\uu}|
    =2|\mb{\mu_r}|+|\mb{\mu_\uu}|+|\mb{\mu_\vv}|-|\mb{\mu_\vv}|
    =|\mb{\mu_r}|+|\mb{\mu_r}|+|\mb{\mu_\rr}|-|\mb{\mu_\vv}|
    =|\mb{\mu_r}|+|\mb{\mu}|-|\mb{\mu_\vv}|.
  \end{equation*}
  So, minimizing $2|\mb{\mu_r}|+|\mb{\mu_\uu}|$ amounts to determining
  the vectors $\mb{r}$ and $\mb{\vv}$ for which the difference
  $|\mb{\mu_r}|-|\mb{\mu_\vv}|$ becomes minimal.
  Let $\mb{\rho}=(1,4,7,8,\ldots,n+4)$,
  $\mb{\bar{\rho}'}=(2,5,n+5,n+6,\ldots,n+m+2)$ and
  $\mb{\bar{\rho}''}=(3,6,n+m+3,n+m+4,\ldots,\ell)$. It is trivial to
  verify that 
  \begin{itemize}
  \item
    $|\mb{\mu_r}|>|\mb{\mu_\rho}|$, for all $\mb{r}\ne\mb{\rho}$, and
  \item
    $|\mb{\mu_\vv}|<|\mb{\mu_{\bar{\rho}''}}|$, for all
    $\mb{\vv}\ne\mb{\bar{\rho}''}$.
  \end{itemize}
  From this observation we deduce that the unique
  minimal value for $2|\mb{\mu_r}|+|\mb{\mu_\uu}|$ is attained when
  $\mb{r}$, $\mb{\uu}$ and $\mb{\vv}$ are equal to $\mb{\rho}$,
  $\mb{\bar{\rho}'}$ and $\mb{\bar{\rho}''}$, respectively.
  Moreover,
  \begin{align*}
    |\mb{\rho}|+\frac{n(n+1)}{2}
    &=\sum_{i=1}^{n+4}{i}-(2+3+5+6)+\frac{n(n+1)}{2}
    =\frac{(n+4)(n+5)}{2}-16+\frac{n(n+1)}{2}\\[5pt]
    &=\frac{(n^2+9n+20)+(n^2+n)}{2}-16
    =(n^2+5n+10)-16=(n-1)(n+6).
  \end{align*}
  Since $(n-1)(n+6)$ is even for any $n$, the term in the expansion of
  $E_{n,m,k}(\tau)$ corresponding to the minimum exponent for $\tau$
  becomes
  $C_{\mb{\rho}}\,\tau^{2|\mb{\mu_\rho}|+|\mb{\mu_{\bar{\rho}'}}|}\,
  \GVD(\mb{x};\mb{\mu_\rho})$.
  The claim in the statement of the lemma immediately follows from the
  positivity of $C_{\mb{\rho}}$ and $\GVD(\mb{x};\mb{\mu_\rho})$, and
  by observing that $2|\mb{\mu_\rho}|+|\mb{\mu_{\bar{\rho}'}}|$ equals
  $\xi$.
\end{proof} 

\end{document}